\setlist[enumerate]{itemsep=0mm}
\newtheoremstyle{theoremsansserif} 
    {\topsep}                    
    {\topsep}                    
    {\itshape}                   
    {}                           
    {\sffamily\bfseries }        
    {.}                          
    {.5em}                       
    {}  
\theoremstyle{theoremsansserif}
\newtheorem{lemma}{Lemma}
\newtheorem{corollary}{Corollary}
\newtheorem{remark}{Remark}
\newtheorem{definition}{Definition}
\newtheorem*{example*}{Example}
\newtheorem{theorem}{Theorem}
\theoremstyle{definition}
\newtheorem{exmp}{Example}
\newcommand{\R}{\mathbb{R}}
\newcommand{\I}{\mathcal{I}}
\newcommand{\A}{\mathcal{A}}
\newenvironment{varsubequations}[1]
 {%
  \addtocounter{equation}{-1}%
  \begin{subequations}
  \def\@currentlabel{#1}%
 }
 {%
  \end{subequations}\ignorespacesafterend
 }
\newcommand{\revision}[1]{{\color{black} #1}}
\begin{document}
\title{\sf\textbf{Single-Leg Revenue Management with Advice}}

\author{
\sf Santiago R. Balseiro\\
\sf Columbia University \\
\small\texttt{srb2155@columbia.edu}
\and
\sf Christian Kroer \\
\sf Columbia University\\
\small\texttt{christian.kroer@columbia.edu}
\and
\sf Rachitesh Kumar\\
\sf Columbia University \\
\small\texttt{rk3068@columbia.edu}}

\date{\vspace{1em}
\sf This version: \today}

\maketitle

\allowdisplaybreaks

\begin{abstract}
	Single-leg revenue management is a foundational problem of revenue management that has been particularly impactful in the airline and hotel industry: Given $n$ units of a resource, e.g. flight seats, and a stream of sequentially-arriving customers segmented by fares, what is the optimal online policy for allocating the resource. Previous work focused on designing algorithms when forecasts are available, which are not robust to inaccuracies in the forecast, or online algorithms with worst-case performance guarantees, which can be too conservative in practice. In this work, we look at the single-leg revenue management problem through the lens of the algorithms-with-advice framework, which attempts to harness the increasing prediction accuracy of machine learning methods by optimally incorporating advice about the future into online algorithms. In particular, we provide an online algorithm that attains every point in the Pareto frontier between consistency (performance when advice is accurate) and competitiveness (performance when advice is inaccurate) for \emph{every advice}. We also study the class of protection level policies, which is the most widely-deployed technique for single-leg revenue management: we provide an algorithm to incorporate advice into protection levels that optimally trades off consistency and competitiveness. Moreover, we numerically evaluate the performance of these algorithms on synthetic data. We find that our algorithm for protection level policies performs remarkably well on most instances, even if it is not guaranteed to be on the Pareto frontier in theory. Our results extend to other unit-cost online allocations problems such as the display advertising \revision{and the multiple secretary problem together with more general variable-cost problems such as the online knapsack problem.}
\end{abstract}


\thispagestyle{empty}
\pagebreak
\setcounter{page}{1}

\setstretch{1.5}
\pagebreak

\graphicspath{ {Images/} }

\section{Introduction}

The field of revenue management, one of the pillars of operations research, got its start with the airline industry in the twentieth century \citep{talluri2004theory}. Since then, it has found use in a variety of industries, covering the entire gamut from retail to hospitality. The goal of revenue management is to design price and quantity control policies that optimize the revenue of a firm. In this work, we will focus on quantity control. In particular, we consider a single-resource unit-cost resource allocation problem in which the decision maker wants to optimally allocate a limited inventory of a single resource to sequentially arriving requests. Each request consumes one unit of inventory, and generates a reward. For historical reasons, the terminology of revenue management is tailored to the airlines industry, and we continue with this convention in this work, but it is worth noting that the model and results apply more generally. For example, our results apply to the so-called display ad problem, in which impressions need to be assigned to advertisers to maximize clicks~\citep{feldman2010online}, or the multiple secretary problem, in which applicants arrive sequentially and the best candidates need to be hired~\citep{kleinberg2005multiple}. \revision{Moreover, we can also cover variable-cost problems such as the online knapsack problem~\citep{zhou2008budget} or budget-constrained bidding in repeated auctions~\citep{balseiro2019learning}.}


Consider an airline that operates a flight with $n$ economy seats. The seats in the economy cabin are demanded by a variety of customer types, which motivates airlines to offer different fare classes, each of which is designed to cater to a different market segment. For example, airlines often offer Basic Economy fares for leisure travelers who are price-sensitive. These low-fare tickets do not afford the holder any perks like seat selection, luggage check-in, upgrade eligibility, extra miles, priority boarding etc. On the other end of the spectrum are Full Fare Economy tickets that come with all of the aforementioned perks and are designed to be frequently available for late bookings for business travelers who have to travel at short notice. Given this collection of fare classes (which we assume to be fixed), how should an airline control the number of seats sold to customers from different fare classes in order to maximize revenue? This is referred to as the single-leg revenue management problem.

The crux of the single-leg revenue management is captured by the following trade-off: If the airline sells too many seats to customers from lower fare classes, then it will not be able to sell to higher-fare-class customers that might arrive later, and if the airline protects too many seats for higher fare-class customers, it will lose revenue from the lower fare classes if the demand for higher-fare classes never materializes. In the absence of any information about the fare classes of customers that will arrive, this problem falls under the paradigm of online algorithms and competitive analysis. This is the approach taken by \citet{ball2009toward}, who characterized the optimal performance (in terms of competitive ratio) that any policy can achieve. In contrast, the vast majority of past work on single-leg revenue management assumes that accurate distributional forecasts are available about the customers that will arrive, and then proceeds to characterize the optimal policy in terms of the forecasts (see \citealt{gallego2019revenue} for a recent overview). Often, additional assumptions like low-before-high (customers belonging to lower-fare classes arrive first) are also required for these results.

It would come as no surprise that policies which leverage the forecast perform much better than the policy of \citet{ball2009toward} when the sequence of customers is consistent with the forecast, but lose all performance guarantees when this is not the case. This work aims to achieve the best of both worlds by marrying the robustness of competitive analysis with the superior performance that can be achieved with forecasts. Robustly incorporating forecasts into algorithms is of fundamental importance in today's data-driven economy, with the rising adoption of machine-learning-based prediction algorithms. To do so, we assume that we have access to advice about the sequence of customers that will arrive, but we do not make any assumptions about the accuracy of this advice. We develop algorithms that perform well when the advice is accurate, while maintaining worst-case guarantees for all sequences of customers. {\newsanti Because our algorithms take the advice as a black box, they can harness the increasing prediction power of machine learning methods.} Our approach falls under the framework of Algorithms with Advice, which has found wide application of late (see \citealt{mitzenmacher2020algorithms} for a recent survey).

Before moving onto our contributions, we briefly discuss the centerpiece of single-leg revenue management theory and practice: protection level policies (also called booking limit policies), which play a vital role in our results. It is a class of policies parameterized by protection levels, one for each fare class. A protection level for a fare class is a limit on the number of customers that are accepted with fares lying below that fare class. A protection level policy accepts a customer if doing so does not violate any of the protection levels and rejects her otherwise. The motivation behind these limits (and the inspiration for the name `protection level policies') is the need to protect enough capacity for higher fare classes and prevent lower fare classes from consuming all of the capacity. These policies have desirable theoretical properties in a variety of models and are widely deployed in practice (see \citealt{talluri2004theory} for a detailed discussion).

\subsection{Main Contributions}

As is often the case for the Algorithms with Advice framework, we assume that the advice contains the minimum amount of information necessary to achieve optimal performance when the advice is accurate. For the single-leg revenue management problem, this translates to the top $n$ fares that will arrive, specified as a frequency table of fare classes (e.g. 60 Basic Economy customers and 40 Full Fare customers) without any information about the order. {\newsanti Our choice of advice has the advantage of being easy to estimate because of its low sample complexity.} Given an advice, we consider the following competing goals: (i) \emph{Consistency:} The competitive ratio of the algorithm when the advice is accurate, i.e., the instance conforms with the advice; (ii) \emph{Competitiveness:} The worst-case competitive ratio over all sequences of customers, regardless of conformity to the advice.

Since different firms might have differing levels of confidence in the advice they receive (from machine-learning systems or humans), we are motivated to study the frontier that captures the trade-off between consistency and competitiveness. In particular, we allow the required level of competitiveness to be specified as an input alongside with the advice, and then attempt to maximize the level of consistency that can be attained while maintaining this level of competitiveness. We note that the choice of competitiveness as input is arbitrary and all of our algorithms can be modified to take the required level of consistency as input instead. Before stating our results, we describe an example that illustrates their flavor.

\begin{figure}[t]
	\centering
	\includegraphics[width = 0.5\linewidth]{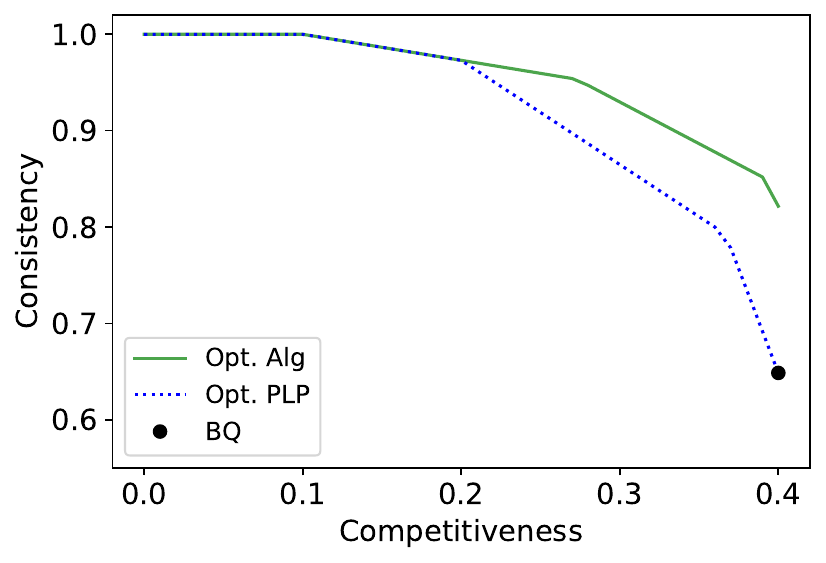}
	\caption{The tradeoff between consistency and competitiveness for Example~\ref{example:intro}. The solid line represents the performance of the LP-based optimal algorithm (Algorithm~\ref{alg:opt-alg}) and the dotted line represents the performance of the optimal protection level policy (Algorithm~\ref{alg:opt-protec}). We use the performance of the policy of \citet{ball2009toward}, which maximizes competitiveness, as a benchmark. Here 0.4 is the maximum level of competitiveness that can be achieved by any online algorithm.}
	\label{fig:intro}
\end{figure}

\begin{exmp}\label{example:intro}
	Consider a plane with $100$ economy seats and fare classes given by $\{\$100, \$200, \$400, \$800\}$. The airline is advised that, of the customers predicted to arrive, 20 customers would want the \$200 fare class, 60 customers would want the \$400 fare class and 10 customers would want the \$800 fare class, while the remaining customers (assumed to be at least 10 in number) would want the \$100 fare class. Suppose the airline is not completely confident about this advice and would like to optimally tradeoff consistency and competitiveness. In this work, we show that the consistency-competitiveness Pareto frontier that captures the optimal way to perform this tradeoff for every advice can be computed efficiently using a Linear Program (\ref{LP}), and give an algorithm (Algorithm~\ref{alg:opt-alg}) that achieves this optimal performance. We depict the Pareto frontier for the advice above with a solid line in Figure~\ref{fig:intro}. If we are restricted to using protection level policies, then we describe another algorithm (Algorithm~\ref{alg:opt-protec}) that yields the optimal tradeoff. We depict the performance of Algorithm~\ref{alg:opt-protec} with a dotted line in Figure~\ref{alg:opt-protec}. Our results show that consistency can be significantly increased by sacrificing only a small amount of competitiveness.
\end{exmp}

\textbf{The Consistency-Competitiveness Pareto Frontier.} We construct an LP, and develop an optimal algorithm based on it, that completely characterizes the Pareto frontier of consistency and competitiveness. More precisely, we give an efficient LP-based optimal algorithm (Algorithm~\ref{alg:opt-alg}) which, given an advice and a required level of competitiveness, achieves the highest level of consistency on that advice among all online algorithms that satisfy the required level of competitiveness. {\newsanti We remark that most algorithms in the literature do not attain the best level of consistency for every advice, but instead, are only shown to be optimal for the worst-case advice. In contrast, our LP-based optimal algorithm attains the best possible level of competitiveness for \emph{all advice and levels of competitiveness}.} In other words, for every advice, we construct an LP to efficiently compute the consistency-competitiveness Pareto frontier, and also develop an online algorithm whose performance lies on this Pareto frontier. We achieve this through the following steps: (i) First, we assemble a collection of hard customer sequences for the given advice; (ii) Then, we construct an LP that aims to maximize consistency while maintaining the required level of competitiveness on these hard instances; (iii) Finally, we use the solution of the LP to construct a collection of protection levels, and optimally switch between these protection levels, to attain the highest possible level of consistency, while attaining the required level of competitiveness on \emph{all} customers sequences. The crux of the proof lies in guessing the right collection of sequences to distill the fundamental tradeoffs of the problem, and then proving that these are in fact the hardest sequences, i.e., our online algorithm which is designed to do well on these hard sequences performs well on all sequences.

\textbf{Optimal Protection Level Policy.} In recognition of the central role of protection level policies in single-leg revenue management theory and practice, we also characterize the optimal way to incorporate advice into protection level policies. We emphasize that, even though our optimal algorithm makes use of protection levels to make decisions, it is not a protection level policy because it may switch between protection levels. This switch between protection levels comes at the cost of some desirable practical properties of protection level policies like monotonicity (never rejecting a customer from a certain fare class and then accepting a customer from the same fare class that arrives later) and being oblivious to the fare class of the customer before making the accept/reject decision. However, we show that this switch is necessary for optimality: protection level policies can be sub-optimal in their consistency-competitiveness trade-off. Nonetheless, the significance of protection level policies to the theory and practice of revenue management warrants our investigative efforts. We develop an algorithm (Algorithm~\ref{alg:opt-protec}) that takes as input an advice and required level of competitiveness, and outputs protection levels which correspond to the protection level policy that maximizes consistency in the set of all protection level policies that satisfy the required level of competitiveness. To rephrase, regardless of the advice, no protection level policy can achieve higher consistency and competitiveness than the one based on Algorithm~\ref{alg:opt-protec}.

\revision{\textbf{Robustness.} Since no prediction algorithm can be completely accurate, it is also important for any algorithm to achieve good performance when the sequence of fares is ``close'' to the advice. We use an appropriately modified $\ell_1$ norm to define a distance between the realized fare sequences and the advice, and show that the performance of our algorithms degrades gracefully as a function of this distance. In particular, we show that the performance of protection level policies degrades linearly as a function of the distance. Moreover, we describe a relaxed version of our optimal algorithm (Algorithm~\ref{alg:robust-opt-alg}) whose performance also degrades linearly with the distance in a neighborhood of the advice.}

\textbf{Numerical Experiments.} Finally, we also run numerical experiments on synthetic data with the aim of (i) comparing the performance of our optimal LP-based algorithm (Algorithm~\ref{alg:opt-alg}) and our optimal protection level policy (Algorithm~\ref{alg:opt-protec}); and (ii) comparing the average performance of our algorithms when the sequence of customers is drawn from a distribution centered on the advice. We find that protection level policies are optimal for most types of advice, and that the essence of its sub-optimality is captured by our bad example (Example~\ref{example:bad}). Moreover, we find a graceful degradation in the performance of our algorithms as a function of the noise in the distribution that generates the sequences. We also find that our algorithms continue to perform better than that of \citet{ball2009toward} even under high levels of noise.

\subsection{Additional Related Work}

\textbf{Single-Leg Revenue Management.} The single-leg revenue management problem has a long history in revenue management. \citet{littlewood2005special} characterized the optimal policy for the two-fare class setting under known-stochastic customer arrival with the LBH (low before high, i.e., customers arrive in increasing order of fares) assumption. \citet{brumelle1993airline} extended the results to multiple fare classes, under the additional assumption of independence across fare classes, via a dynamic programming formulation resulting in a protection level policy that is optimal. \citet{lee1993model}, \citet{robinson1995optimal} and \citet{lautenbacher1999underlying} dispense with the LBH assumption and characterize the optimal policy in this dynamic setting. All of the aforementioned works assume that the stochastic process governing customer arrival is completely known. \citet{van2000revenue}, \citet{kunnumkal2009stochastic} and \citet{huh2006adaptive} relax this assumption by considering a setting in which LBH demand is drawn from some stationary unknown distribution repeatedly. They give an adaptive procedure for updating the protection levels using samples from this distribution that converges to the optimal protection levels. The stationarity assumption allows them to learn the optimal protection levels perfectly for future demand. As discussed earlier, \citet{ball2009toward} broke from the tradition of stochastic assumptions and looked at the problem through the lens of competitive analysis. \citet{lan2008revenue} extended their work by allowing for known upper and lower bounds on the number customers from each fare class. Under the assumption that these bounds always hold, they develop algorithms which achieve the optimal competitive ratio. Importantly, their algorithm loses its guarantees when the bounds are violated. Like \citet{lan2008revenue}, we also develop a factor-revealing LP. But, unlike their LP, which is simple enough to be solved in closed form, our LP solves a complex optimization problem based on more intricate hard instances and does not admit a closed-form solution. \citet{ma2021policies} generalize the single-leg model by allowing the firm to price the fare classes, and show that this can be done without any loss in the competitive ratio. We refer the reader to the comprehensive books by \citet{talluri2004theory} and \citet{gallego2019revenue} for a detailed discussion on single-leg revenue management. \revision{Recent work has also studied the single-leg revenue management problem under other models which go beyond the classical adversarial and stochastic analysis. \citet{hwang2021online} develop online learning algorithms for a model with two fare classes in which the sequence of fares is selected by an adversary, but a randomly selected subset of that sequence arrives in random order. Each customer joins the random-order subset independently with the same probability and this random component of the input sequence allows their algorithm to partially learn the sequence. \citet{aouad2022nonparametric} study the more general stochastic matching problem in which the number of customers of each type are drawn from some distribution, and then these customers arrive in an adversarial or random order. \citet{golrezaei2022online} also study a setting with two types of fare classes in which the decision maker does not know the reward of each fare class. They posit the existence of a test period in which each customer is sampled independently with the same probability and reveals her reward to the user. Their algorithm uses this sample information to obtain asymptotically optimal performance.}

\textbf{Algorithms with Advice.} Motivated by the ubiquity of machine-learning systems in practice, a recent line of work looks at improving algorithmic performance by incorporating predictions. Here, the goal is to avoid making assumptions about the quality of the predictions and instead design algorithms that perform better when the prediction is accurate while maintaining reasonable worst-case guarantees. This framework has predominantly been applied to online optimization problems like caching \citep{lykouris2018competitive, rohatgi2020near}, ski-rental and online scheduling \citep{purohit2018improving}, the secretary problem \citep{antoniadis2020secretary,dutting2021secretaries}, online ad allocation \citep{mahdian2012online} and queueing \citep{mitzenmacher2021queues}. While our problem can be thought of as an online allocation problem, the results of \citep{mahdian2012online} are not directly applicable because they assume that rewards are proportional to resource consumption, which does not hold for the revenue management problem. {\newsanti The classical secretary problem (and its multi-secretary generalizations) share the basic setup with the single-leg revenue management problem---they are both single-resource unit-cost resource allocation problems. The difference being that, unlike the stochastic assumptions made in the secretary literature, the single-leg revenue management model considered here only requires the rewards to belong to some known finite set and allows for them to be adversarially chosen. This makes the existing results on the secretary problem with advice \citep{antoniadis2020secretary,dutting2021secretaries} incomparable to ours.} Advice has also been used to improve data structures \citep{mitzenmacher2019model} and run-times of algorithms \citep{dinitz2021faster}. We refer the reader to \citet{mitzenmacher2020algorithms} for a more detailed discussion.

\textbf{Other Work.}  The single-leg revenue management problem is a special type of online packing problem with one resource, unit costs and varying rewards coming from a known set. Under adversarial input, a constant competitive ratio cannot be obtained for online packing in general, but a long line of work shows that it can be achieved for important special cases like online matching \citep{karp1990optimal}, the Adwords problem \citep{mehta2007adwords} and the Display Ads problem with free disposal \citep{feldman2009online}. On the other hand, near optimal performance can be obtained for stochastic or random order input under mild assumptions \citep{devanur2009adwords, feldman2010online, agrawal2014dynamic, alaei2012online}. See \citet{mehta2013online} for a survey on online allocation. The Adwords problem, which is particular important for the online advertising industry, has also been studied in a variety of models that interpolate between the adversarial and stochastic settings \citep{mirrokni2012simultaneous, esfandiari2015online}.

\section{Model}

We consider a single-resource unit-cost resource allocation problem: a firm (airline) has $n$ units of a resource (seats on a plane) it wants to allocate to sequentially arriving requests (customers). Each request consumes one unit of the resource and generates a reward. We assume that the reward can take $m$ possible values (fare classes), which we denote by $0 < f_1 < f_2 < \dots < f_m$. To simplify notation, we include a special zero request with $f_0 = 0$. At each time step $t$, a request arrives, whereupon its reward is revealed to the firm and an irrevocable accept/reject decision is made. We only assume that the firm knows the set of possible rewards $F \coloneqq \{f_1, \dots f_m\}$, and do not require the firm to know the total number of requests that will arrive.


\subsection{Applications}

Before proceeding further with the model, we discuss its applicability to internet advertising markets, the online knapsack problem, the multi-secretary problem, and the airline industry. In the display ad problem, which has received considerable attention due to its prominence in internet advertising, advertisers enter into contracts with websites to display a certain number of ads~\citep{feldman2010online}. Here the capacity $n$ is the number of impressions guaranteed by the contract and when a visitor arrives the website needs to decide whether to display an ad to the visitor, which counts against the advertiser's contract and consumes one unit of this resource. The rewards are the appropriately discretized click-through-rate (CTR) estimates for showing an ad to the visitor. The goal of the website is to maximize the total CTR while satisfying the contractual obligations.\footnote{\revisionenv Even though reservation contracts have requirements, in practice they are treated as packing problems~(see, e.g., \citealt{mehta2013online}). A common approach is for publishers to penalize shortfalls by adding penalties to the objectives, which can be incorporated to our model by adjusting fares down.}

\revision{Our framework can also be used to model the online knapsack problem~\citep{zhou2008budget} when the possible number of item types is finite. Without loss of generality, we can assume that the weight of each item is a non-negative integer because such an assumption can always be satisfied with rescaling. Given an instance of the online knapsack problem, we can construct an instance of the single-leg revenue management problem as follows: If an item has weight $w \in \mathbb{N}$ and reward $r$, replace it with $w$ items, each with weight 1 and reward $r/w$. Thusm if the possible weights are $1 = w_1 < w_2 < \dots < w_a$ and the possible item rewards are $r_1 < \dots < r_b$, then the corresponding single-leg revenue management problem with fares $\{r_i/w_j \mid i \in [b];\ j \in [a]\}$ has the same worst-case competitive ratio. With this reduction, the guarantees for our algorithms which incorporate advice also continue to hold for the online knapsack problem (see Appendix~\ref{appendix:knapsack} for details). Moreover, \citet{zhou2008budget} show that the online knapsack problem captures budget-constrained bidding in repeated auctions, thereby also bringing the latter under the purview of our framework.}

The multi-secretary problem \citep{kleinberg2005multiple} captures the setting in which a company wishes to hire $n$ employees, each of whom takes up one position and generates varying amounts of productivity (reward). It is assumed that the employee is interviewed and the associated reward is revealed to the company before making the accept/reject decision. Our model captures the adversarial-version of the multi-secretary problem where no assumption is made about the rewards other than membership in some known set $F$.

Finally, as discussed earlier, the origin of our model lies in the airline industry, where the fare classes $F$ are decided in advance, customers arrive online, and accept/reject decisions need to be made to maximize revenues. Moreover, it is a standard assumption in the single-leg revenue management literature that the different fare classes are designed to perfectly segment the market through their perks \citep{talluri2004theory}: customers belong to a fare class and do not substitute between them. In the rest of the paper, we continue with tradition and use the terminology of the airline industry and the single-leg revenue management problem to describe our model. Henceforth, requests will be referred to as customers and the possible rewards $F$ as fare classes or fares.

\subsection{Advice, Online Algorithms, and Performance Metrics}

We next present some definitions that will be used in our analysis.

\begin{definition}
	An instance $I$ of the single-leg revenue management problem is a variable-length sequence of customer fares $s_1, \dots, s_T$, where $T \in \mathbb{Z}_+$ and $s_i \in \{f_1, f_2, \dots, f_m\}$ for all $i \in [T]$. We denote the set of all instances by $\mathcal{U}$.
\end{definition}

As is customary in competitive-ratio analysis, we will measure the performance of an online algorithm by comparing it to the performance of the clairvoyant optimal.

\begin{definition}
	For an instance $I = \{s_t\}_{t=1}^T$ of the single-leg revenue management problem, $Opt(I)$ represents the maximum revenue one can obtain if one knew the entire sequence $\{s_t\}_{t=1}^T$ in advance, i.e., $Opt(I)$ is the sum of the highest $n$ fares in $I$ if $T \geq n$ and the sum of all fares if $T<n$. 	
\end{definition}

The classical work of \citet{ball2009toward} assumes that, at time $t$, the firm has no information about the future sequence of fares $s_{t+1}, \dots, s_T$, and then goes on to characterize the worst-case competitive ratio (the ratio of the revenue of the firm to that of a clairvoyant entity that knows the entire instance) over all instances. In today's data-driven world, their assumption about the complete lack of information about future fares can be too pessimistic. We relax it and assume that the firm has access to an oracle that provides the firm with predictions about the fares that will arrive. In practice, this oracle is often a machine-learning model trained on past data about fares. We capture this through the notion of advice as used in the algorithms-with-advice framework (see \citealt{mitzenmacher2020algorithms} for a survey).

\begin{definition}
	An advice is an element of the set $\A \coloneqq \{A \in \mathbb{N}^m \mid \|A\|_1 = n\}$ which represents the top $n$ highest fares that are predicted to arrive. In particular, an advice $A \in \A$ is interpreted to mean that we will receive an instance $I \in \mathcal{U}$ for which it would be possible to pick $A_i$ customers with fare $f_i$ for all $i \in [n]$, and doing so would yield the optimal revenue of $Opt(I)$. In light of this, we will use $Opt(A)$ to denote $\sum_{i=1}^m f_i A_i$ for all $A \in \A$.
\end{definition}

Before proceeding further, we note some important properties of the advice:
\begin{itemize}
	\item It does not take into account the order of the fares and is limited to the frequency of each fare type. Moreover, it only requires information about the top $n$ highest fares. 
	\item Each element $A \in \A$ contains just enough information to compute the optimal solution for any instance that conforms with the advice: pick $A_i$ customers with fare $f_i$ for all $i \in [n]$.
\end{itemize}
\revision{Our choice of advice drastically reduces the size of the space of possible predictions, allowing for efficient learning from past data. For example, firms could train machine learning models that map contexts (season, flight information, economic trends, etc.) to a prediction on the number of customers, which translates to a regression problem for each fare class. Moreover, compared to other choices of advice such as the arrival order of customers, our advice is permutation invariant, which leads to more robust algorithms that are less sensitive to perturbations in the data.}

In this work, we consider online algorithms that take as input an advice, and then make accept/reject decisions for sequentially arriving customers in a non-clairvoyant fashion. The goal is to perform well when the arriving customers conform to the advice (consistency), while maintaining worst-case guarantees for the case when the advice is not a good prediction of customer fares (competitiveness). Like \citet{ball2009toward}, we allow the online algorithm to accept fractional customers.

\begin{definition}
	An online algorithm $P$ (possibly randomized) that incorporates advice takes as input an advice $A \in \A$, an instance $\{s_i\}_{i=1}^{t-1}$ as history and a fare from $s_{t} \in \{f_1, \dots, f_m\}$ as the fare type of the current customer; and outputs a fractional accept quantity $P_t(A, \{s_i\}_{i=1}^{t-1}, s_{t}) \in [0,1]$ that satisfies the capacity constraint, i.e., almost surely it satisfies $\sum_{i=1}^t P_t(A, \{s_j\}_{j=1}^{i-1}, s_{i}) \leq n\,.$
\end{definition}

\begin{remark}
	Since all of the algorithms discussed in this work incorporate advice, we will refer to online algorithms that incorporate advice simply as online algorithms.
\end{remark}
\begin{remark}\label{remark:rounding-error}
	Even though we allow online algorithms to fractionally accept customers, all of the online algorithms we propose in this work satisfy the following property: If we run the algorithms with the modified capacity of $n-2m$ and completely accept the fractionally accepted customers then the number of acceptances does not exceed $n$ and degradation in the performance of the online algorithm is $O(m/n)$. Since most applications (including the airline industry) satisfy $m \ll n$, our algorithms have negligible rounding error. See Appendix~\ref{appendix:rounding-error} for details.
\end{remark}

Next, we rigorously define the performance metrics needed to evaluate online algorithms that incorporate advice, beginning with the definition of consistency, which captures the performance of an online algorithm when the instance conforms to the advice.

\begin{definition}
	For advice $A \in \A$, let $\ell$ be the smallest index such that $A_\ell \geq 1$. Define $S(A) \coloneqq \{ I = \{s_t\}_{t=1}^T \in \mathcal{U} \mid \sum_{t=1}^T \mathds{1}(s_t = f_i) = A_i\ \forall i \geq \ell+1;\ \sum_{t=1}^T \mathds{1}(s_t = f_\ell) \geq A_\ell \}$ to be the set of all instances that conform to the advice $A$. An online algorithm $P$ is $\beta$-consistent on advice $A$ if, for any instance $I = \{s_t\}_{t=1}^T \in S(A)$, it satisfies $\mathbb{E}_P\left[\sum_{t=1}^T P_t(A, \{s_i\}_{i=1}^{t-1}, s_t) \cdot s_t\right] \geq \beta \cdot Opt(A)\,.$
\end{definition}

The following definition captures the performance of online algorithms on all instances, regardless of their conformity to the advice.

\begin{definition}
	An online algorithm $P$ is $\gamma$-competitive if for any advice $A \in \A$ and instance $I = s_1, s_2, \dots, s_T$; we have $\mathbb{E}_P\left[\sum_{t=1}^T P_t(A, \{s_i\}_{i=1}^{t-1}, s_t) \cdot s_t \right] \geq \gamma \cdot Opt(s_1, \dots, s_T)\,.$
\end{definition}

%
%

\citet{ball2009toward} showed that no online algorithm can be $\gamma$-competitive for any $\gamma$ strictly greater than $c(F) \coloneqq \left[\sum_{i=1}^m (1 - (f_{i-1}/f_{i})) \right]^{-1}$. Hence, every online algorithm is $\gamma$-competitive for some $\gamma \in [0, c(F)]$.

The central goal of this work is to study the multi-objective optimization problem over the space of online algorithms which entails maximizing both consistency and competitiveness. In particular, for every advice, we characterize the Pareto frontier of consistency and competitiveness, thereby capturing the tradeoff between these two incongruent objectives. We conclude this section with the definition of protection level policies, which are a class of online algorithms that play a critical role in our results (and single-leg revenue management in general).

\subsection{Protection Level Policies} \label{subsec:protection-level-policy}

Protection level policies are the method-of-choice for single-leg revenue management in the airline industry, and have garnered extensive attention in the revenue management literature~\citep{talluri2004theory, ball2009toward}. It is a form of quantity control parameterized by protection levels $0 \leq Q_1 \leq Q_{2} \leq \dots \leq Q_m = n$: the firm accepts at most $Q_i$ customers with fares $f_i$ or lower. The idea stems from the need to preserve capacity for higher paying customers that might arrive in the future. More precisely, on an instance $s_1, \dots s_T$ where $s_t = f_j$ for some $j \in [m]$, the fraction of $s_t$ accepted by the protection level policy with protection levels $Q = (Q_1, \dots, Q_m)$ can be iteratively defined as
\begin{align*}
	x_t = \max\left\{ x \in [0,1]\ \biggr\lvert\ x + \sum_{i=1}^{t-1} x_i\mathds{1}(s_i \leq f_k) \leq Q_k \text{ for all } k \geq j \right\}\,.
\end{align*}

Our algorithm (Algorithm~\ref{alg:opt-alg}) will use protection level policies as subroutines to attain the optimal level of consistency for a given level of competitiveness. We note, however, that this does not make Algorithm~\ref{alg:opt-alg} a protection level policy, since it may switch between different protection levels. Moreover, in reverence to their role in revenue management theory and practice, we will also characterize the protection level policy that attains the optimal level of consistency for a given level of competitiveness among all protection level policies (Algorithm~\ref{alg:opt-protec}). We conclude this section with an example to build some intuition.

\begin{exmp}
	Let $n = 3$, $F = \{f_1, f_2, f_3\}$ and $Q_1 = 1, Q_2 = 2, Q_3 = 3$. Consider the execution of the protection level policy parameterized by $Q$ on the sequence $f_2, f_2, f_1, f_3$. Then, since $Q_2 = 2$, we accept the first two customers with fare $f_2$. At the third step, we reject the customer with fare $f_1$ because we have already accepted two customers with fare $f_2$ or below, and thus accepting it would violate the second protection level $Q_2 = 2$. Note that we reject this customer despite the fact that we have not accepted any customer with fare $f_1$ and $Q_1 = 1$. Finally, we accept the customer with fare $f_3$ in the fourth time step. Now, contrast this with the sequence $f_2, f_1, f_2, f_3$. In this case, we would accept the first customer with fare $f_2$ like before. We would also accept the second customer with fare $f_1$ because it does not violate either $Q_1$ or $Q_2$. In the third step, we would reject the customer with fare $f_2$ because we have already accepted two customers with fare $f_2$ or below. Finally, we accept the fourth customer with fare $f_3$.
\end{exmp}

\section{The Consistency-Competitiveness Pareto Frontier} \label{sec:pareto}

The goals of consistency and competitiveness can be at odds with each other. Therefore, depending on the level of confidence in its prediction model, different firms will want to target different levels of consistency and competitiveness. Nonetheless, it would always be desirable to use an online algorithm that optimally trades off consistency and competitiveness. In other words, we would like online algorithms whose performance lies on the consistency-competitiveness Pareto frontier.

To achieve this, we will describe an online algorithm that, given an advice $A$ and a desired level of competitiveness, achieves the optimal level of consistency on advice $A$ while maintaining $\gamma$-competitiveness.


\begin{definition}
	Fix an advice $A$ and a level of competitiveness $\gamma \in [0, c(F)]$. Let $\mathcal{P}_\gamma$ denote the set of all online algorithms that are $\gamma$-competitive. Then, the optimal level of consistency for advice $A$ under the $\gamma$-competitiveness requirement is defined as follows:
\begin{align*}
	\beta(A, \gamma) \coloneqq \max_{P \in \mathcal{P}_\gamma} \min_{\{s_t\}_{t=1}^T \in S(A)} \frac{\mathbb{E}_P\left[\sum_{t=1}^T P_t(A, \{s_i\}_{i=1}^{t-1}, s_t) \cdot s_t\right]}{Opt(A)}\,.
\end{align*}
\end{definition}

In subsection~\ref{subsec:hard-instances}, we discuss a collection of instances that capture the core difficulty that every non-clairvoyant online algorithm faces in maximizing consistency. Using this collection of hard instances, in subsection~\ref{subsec:upper-bound} we establish an LP-based upper bound on $\beta(A, \gamma)$. Then, in subsection~\ref{subsec:optimal-alg}, we define an algorithm that achieves a level of consistency that matches this LP-based upper bound, thereby completely characterizing $\beta(A, \gamma)$. For the rest of this section, fix an advice $A$ and a level of competitiveness $\gamma \in [0, c(F)]$. Moreover, let $\ell = \min\{ i \in [m] \mid A_i \geq 1\}$ be the lowest fare that forms a part of $A$.

\subsection{Hard Instances}\label{subsec:hard-instances}

In our optimal algorithm, which we discuss in the next subsection, we will use protection level policies as subroutines to make accept/reject decisions. It is not too difficult to see that, given a collection of customers, every protection level policy earns the least amount of revenue when the customers arrive in increasing order of fares (see Lemma~\ref{lemma:protec-inc-order} in Appendix~\ref{appendix:opt-alg} for a formal proof). Looking ahead, we will define hard instances in which the customers arrive in two phases, with the fares satisfying an increasing order in each phase.

First, observe that any $\beta$-consistent algorithm must obtain a revenue of $\beta \cdot Opt(I)$ on any instance $I \in S(A)$ that conforms with the advice.  To capture this, we define $I(A)$ to be the ``largest" instance that conforms to the advice $A$, i.e., $I(A)$ is the instance that consists of $n$ customers of fare type $f_j$ for $j \leq \ell$ and $A_i$ customers of fare type $f_j$ for $j > \ell$ arriving in increasing order of fares (see Figure~\ref{fig:advice_inst}). More precisely, set $T = n - A_\ell + (m - \ell + 1) n$ and define the instance $I(A) = \{s(A)_t\}_{t=1}^{T}$ as
\begin{itemize}
	\item $s(A)_t = f_j$ if $(j-1)n < t \leq jn$ for some $j \leq \ell$ and
	\item $s(A)_t = f_k$ if $\ell n + \sum_{i = \ell + 1}^{k-1} A_i < t \leq \ell n + \sum_{i = \ell+1}^{k} A_i$ for some $ k > \ell$\,.
\end{itemize}
 Note that $I(A) \in S(A)$ and $Opt(I(A)) = Opt(A) = \sum_{i=\ell}^m f_i A_i$, i.e., picking $A_i$ customers of fare type $f_i$ is optimal.

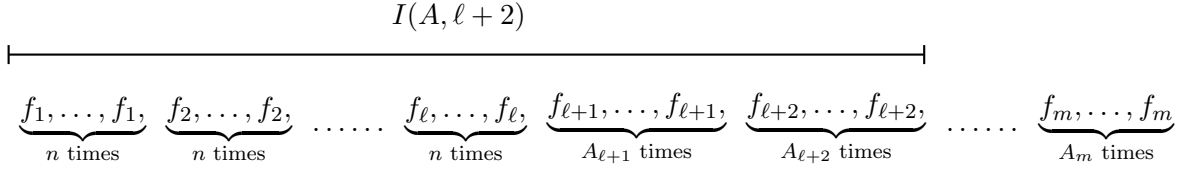
\begin{figure}[t!]
\centering
\begin{tikzpicture}
	\node(x1) {$\underbrace{f_1, \ldots, f_1,}_{n \text{ times}}$};
    \node(x2) [right = 0cm of x1] {$\underbrace{f_2, \ldots, f_2,}_{n \text{ times}}$};
    \node(x3) [right = 0cm of x2] {$\ldots \ldots$};
    \node(x4) [right = 0cm of x3] {$\underbrace{f_\ell, \ldots, f_\ell,}_{n \text{ times}}$};
    \node(x5) [right = 0cm of x4] {$\underbrace{f_{\ell+1}, \ldots, f_{\ell+1},}_{A_{\ell+1} \text{ times}}$};
    \node(x6) [right = 0cm of x5] {$\underbrace{f_{\ell+2}, \ldots, f_{\ell+2},}_{A_{\ell+2} \text{ times}}$};
    \node(x7) [right = 0cm of x6] {$\ldots \ldots$};
    \node(x8) [right = 0cm of x7] {$\underbrace{f_{m}, \ldots, f_{m}}_{A_{m} \text{ times}}$};
    \draw[thick,|-|] (-1,1) -- (11.2,1);
    \node at (5, 1.5) {$I(A,\ell+2)$};
\end{tikzpicture}
\caption{$I(A)$ is the ``largest'' instance that conforms with the advice.}
\label{fig:advice_inst}
\end{figure}

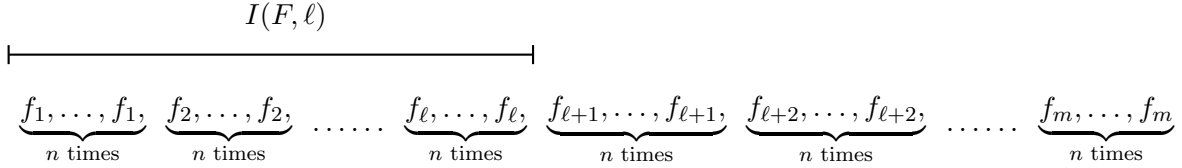
\begin{figure}[t!]
\centering
\begin{tikzpicture}
	\node(x1) {$\underbrace{f_1, \ldots, f_1,}_{n \text{ times}}$};
    \node(x2) [right = 0cm of x1] {$\underbrace{f_2, \ldots, f_2,}_{n \text{ times}}$};
    \node(x3) [right = 0cm of x2] {$\ldots \ldots$};
    \node(x4) [right = 0cm of x3] {$\underbrace{f_\ell, \ldots, f_\ell,}_{n \text{ times}}$};
    \node(x5) [right = 0cm of x4] {$\underbrace{f_{\ell+1}, \ldots, f_{\ell+1},}_{n \text{ times}}$};
    \node(x6) [right = 0cm of x5] {$\underbrace{f_{\ell+2}, \ldots, f_{\ell+2},}_{n \text{ times}}$};
    \node(x7) [right = 0cm of x6] {$\ldots \ldots$};
    \node(x8) [right = 0cm of x7] {$\underbrace{f_{m}, \ldots, f_{m}}_{n \text{ times}}$};
    \draw[thick,|-|] (-1,1) -- (6,1);
    \node at (2.7, 1.5) {$I(F,\ell)$};
\end{tikzpicture}
\caption{$I(F)$ consists of $n$ customers from each fare arriving in increasing order.}
\label{fig:F_inst}
\end{figure}

Next, define $I(A, i)$ to be the instance $I(A)$ truncated at the fare $f_i$, including fare $f_i$. More precisely, $I(A, i) = \{s(A)_t\}_{t=1}^{T(i)}$, where $T(i) = i n$ for $i \leq \ell$ and $T(i) = \ell n + \sum_{j = \ell+1}^{i} A_j$ for $i > \ell$. These instances represent prefixes of $I(A)$ and aim to exploit the non-clairvoyance of online algorithms: If an online algorithm receives the partial instance $I(A, i)$, the complete instance can be $I(A, k)$ for any $k \geq i$.

To define the hard instances, we will also need the instances in which fares arrive in increasing order of fares, in blocks of $n$ customers of each fare type. Let $I(F) = \{s(F)_t\}_{t=1}^{mn}$ be the instance in which $n$ customers from each fare type arrive in increasing order, i.e., $s(A)_t = f_j$ if $(j-1)n < t \leq j n$ for some $j \in [m]$ (see Figure~\ref{fig:F_inst}). Furthermore, let $I(F,i) = \{s(F)_t\}_{t=1}^{i n}$ denote the instance obtained by pruning $I(F)$ to contain fares $f_i$ or lower.

\revision{Given two instances $I_1, I_2 \in \mathcal{U}$, we use $I_1 \oplus I_2$ to denote the instance in which the sequence of fares in $I_1$ is followed by the sequence of fares in $I_2$.} With these definitions in place, we are now ready to define the set of hard instances:
\begin{align*}
	\I = \{ I \in \mathcal{U} \mid I = I(A, k) \oplus I(F, i) \text{ for some } k,i \in [m]\} \cup \{I(A,k) \mid k \in [m]\}\,.
\end{align*}

Before proceeding to our LP-based upper bound that makes use of these hard instances $\I$, we pause to provide some intuition. Any online algorithm that is $\beta$-consistent needs to attain a revenue greater than or equal to $\beta \cdot Opt(I(A))$ when presented with the instance $I(A)$. But, given a prefix of $I(A)$, a non-clairvoyant algorithm has no way of knowing whether the full instance will be $I(A)$ or not. Therefore, it has to have the same accept/reject decisions on $I(A)$ and every prefix. Furthermore, to be $\gamma$-competitive, the online algorithm has to achieve $\gamma$-competitiveness on every instance. This places a constraint on the amount of capacity that an algorithm can commit in its pursuit of $\beta$-consistency because the instance can be any prefix of $I(A)$ (including $I(A)$ itself) followed by $I(F)$.

\subsection{LP-based Upper Bound}\label{subsec:upper-bound}

Next, we define a Linear Program (LP) that uses the hard instances $\I$ to bound $\beta(A, \gamma)$ from above. We use the variables $x_k$ to capture the number of customers of type $f_k$ accepted in $I(A)$ or any of its prefixes $\{I(A,k)\}_{k=1}^m$. Moreover, we use $y(k)_j$ to denote the number of customers of type $f_j$ accepted from the $I(F)$ part of $I(A,k) \oplus I(F)$. Finally, we use $N_i$ to denote the number of customers of fare type $f_i$ in the instance $I(A)$, i.e., $N_i = A_i$ if $i \geq \ell+1$ and $N_i = n$ otherwise. The following LP maximizes the level of consistency $\beta$ of any online algorithm that is $\gamma$-competitive against prefixes of $I(A)$ and all the hard instances. First, for feasibility, \eqref{constrant:capacity} enforces the capacity constraint for all hard instances $\{I(A, k) \oplus I(F)\}_{k=1}^m$. \eqref{constraint:prefix-comp} ensures $\gamma$-competitiveness on prefixes of $I(A)$, given by $\{I(A,k)\}_{k=1}^m$. \eqref{constraint:consistent} guarantees $\beta$-consistency on the advice instance $I(A)$. \eqref{constraint:comp} ensures $\gamma$-competitiveness on the hard instances $\{I(A,k) \oplus I(F)\}_{i,k}$.


\begin{varsubequations}{LP-1}\label{LP}
  \begin{align}
     & \max_{\beta,x,y} & \beta \nonumber\\
     & \text{s.t.} & n \geq \sum_{j=1}^k x_j + \sum_{j=1}^m y(k)_j                 && \forall k \in [m] && \label{constrant:capacity} \\
     &             & \frac{\sum_{j=1}^k f_j x_j}{Opt(I(A,k))} \geq \gamma                  && \forall k \in [m]        && \label{constraint:prefix-comp} \\
     &             & \frac{\sum_{j=1}^m f_j x_j}{Opt(A)} \geq \beta && \label{constraint:consistent} \\
     &             & \frac{\sum_{j=1}^k f_j x_j + \sum_{j=1}^i f_j y(k)_j}{Opt(I(A,k) \oplus I(F,i))} \geq \gamma && \forall i,k \in [m] \label{constraint:comp}\\
     &             & 0 \leq x_i \leq N_i && \forall i \in [m] \label{constraint:x-bound}\\
     &             & y(k)_j \geq 0 && \forall k,j \in [m]
  \end{align}
\end{varsubequations}


Consider an online algorithm $P$ that is $\gamma$-competitive and $\beta$-consistent on advice $A$. Let $X_j$ be the random variable that denotes the number of customers of fare type $f_j$ that $P$ accepts when presented with instance $I(A)$. Since $I(A, k)$ is a prefix of $I(A)$ for all $k \in [m]$, $X_j$ also denotes the number of customers of fare type $f_j$ that $P$ accepts when presented with the instance $I(A, k)$ for all $j \leq k$. Furthermore, for $j,k \in [m]$, let $Y(k)_j$ be the random variable that denotes the number of customers of fare type $f_j$ accepted by $P$ after time step $t = \sum_{i = 1}^m N_i$ in the instance $I(A,k) \oplus I(F)$. Since the instance $I(A,k) \oplus I(F,i)$ is a prefix of $I(A,k) \oplus I(F)$, for $j \leq i$, the random variable $Y(k)_j$ also denotes the number of customers with fare $f_j$ accepted by $P$ after time $t = \sum_{i = 1}^n N_i$ when presented with the instance $I(A,k) \oplus I(F,i)$.

Now, set $x_j = E[X_j]$ for all $j \in [m]$ and $y(k)_j = E[Y(k)_j]$ for all $j,k \in [m]$. Then, by virtue of linearity of expectation, we get that the feasibility, $\beta$-consistency and $\gamma$-competitiveness of $P$ implies the feasibility of $(\beta, (x_j)_j, (y(k)_j)_{j,k})$ as a solution of \ref{LP}, thereby implying $\beta \leq \beta^* $, where $\beta^*$ denotes the optimal value of \ref{LP}. Hence, we get the following upper bound:

\begin{theorem}\label{thm:upper-bound}
	For a given advice $A$ and a required level of competitiveness $\gamma \in [0, c(F)]$, let $\beta^*$ be the optimal value of \ref{LP}. Then, $\beta(A, \gamma) \leq \beta^*$.
\end{theorem}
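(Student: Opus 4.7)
The plan is to establish the upper bound by a direct feasibility argument: starting from any online algorithm $P$ that is simultaneously $\gamma$-competitive and $\beta$-consistent on the advice $A$, I would manufacture a feasible solution to \ref{LP} whose objective value equals $\beta$, which immediately yields $\beta \le \beta^*$ and hence $\beta(A,\gamma) \le \beta^*$.

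First I would fix such an algorithm $P$ and run it mentally on the instance $I(A)$, letting $X_j$ denote the (random) number of customers of fare $f_j$ that $P$ accepts over the entire execution. The key observation, which drives the whole construction, is that $P$ is non-clairvoyant: its accept/reject decision at time $t$ depends only on the advice $A$ and the prefix of fares seen so far. Consequently, when $P$ is run on the shorter instance $I(A,k)$, its behavior coincides (in distribution, jointly with the random coins) with its behavior on the corresponding prefix of $I(A)$. In particular, the number of customers of fare $f_j$ (with $j \le k$) that $P$ accepts on $I(A,k)$ is exactly $X_j$. Similarly, I would define $Y(k)_j$ as the number of customers of fare $f_j$ accepted by $P$ during the second phase of $I(A,k) \oplus I(F)$; by the same non-clairvoyance argument, on $I(A,k) \oplus I(F,i)$ with $j \le i$, the corresponding count is again $Y(k)_j$.

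Next I would set $x_j := \mathbb{E}[X_j]$ and $y(k)_j := \mathbb{E}[Y(k)_j]$, and verify that $(\beta, x, y)$ satisfies each constraint of \ref{LP} by taking expectations of the corresponding almost-sure inequalities for $P$. Concretely: (a) the capacity constraint of $P$ on $I(A,k) \oplus I(F)$ gives $\sum_{j=1}^k X_j + \sum_{j=1}^m Y(k)_j \le n$ a.s., which yields \eqref{constrant:capacity} by linearity of expectation; (b) $\gamma$-competitiveness of $P$ on $I(A,k)$ gives $\mathbb{E}[\sum_{j=1}^k f_j X_j] \ge \gamma\, Opt(I(A,k))$, i.e., \eqref{constraint:prefix-comp}; (c) $\beta$-consistency of $P$ on $I(A) \in S(A)$ gives \eqref{constraint:consistent}; and (d) $\gamma$-competitiveness of $P$ on $I(A,k) \oplus I(F,i)$ gives \eqref{constraint:comp}. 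The bound $x_i \le N_i$ of \eqref{constraint:x-bound} follows because $P$ cannot accept more customers of fare $f_i$ than are present in $I(A)$, and nonnegativity is immediate.

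The step I expect to require the most care is the non-clairvoyance argument that lets me use a single random variable $X_j$ (respectively $Y(k)_j$) across multiple instances. One has to be precise that $P$ is defined as a function of the history only, so running $P$ with a fixed realization of its internal randomness on $I(A)$ and on $I(A,k)$ produces identical first-$T(k)$ decisions; the same coupling argument is needed for the second-phase random variables $Y(k)_j$ across the instances $\{I(A,k) \oplus I(F,i)\}_i$. Once that coupling is stated cleanly, every LP constraint is obtained by taking expectations of a pointwise inequality, and optimality of $\beta^*$ over all feasible solutions yields $\beta \le \beta^*$. Since $P$ was an arbitrary $\gamma$-competitive, $\beta$-consistent algorithm, taking the supremum over $P \in \mathcal{P}_\gamma$ gives $\beta(A,\gamma) \le \beta^*$, completing the proof.
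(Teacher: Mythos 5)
Your proposal is correct and matches the paper's argument essentially verbatim: the paper also fixes an arbitrary $\gamma$-competitive, $\beta$-consistent algorithm $P$, defines $X_j$ and $Y(k)_j$ as the accept counts on $I(A)$ and on the second phase of $I(A,k)\oplus I(F)$, uses the prefix structure (non-clairvoyance) to reuse these random variables across the truncated instances, and takes expectations to verify feasibility of $(\beta, x, y)$ in \ref{LP}. Your explicit coupling of the internal randomness is exactly the (implicit) justification the paper relies on, so there is nothing to add.
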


In the next subsection, we use \ref{LP} to describe an online algorithm that is $\gamma$-competitive, and $\beta^*$-competitive on advice $A$. As a consequence, we will get that $\beta^* = \beta(A, \gamma)$. We conclude by noting that \ref{LP} has some redundancies in the constraints and the variables, e.g., \eqref{constraint:prefix-comp} for $k \leq \ell$ implies  \eqref{constraint:comp} for $i \leq k \leq \ell$, and $y(k)_j = 0$ for $j \leq k \leq \ell$ in every optimal solution. While the linear programming formulation can be tightened by removing the additional constraints and variables, we kept these to simplify the exposition.

\revision{
\subsection{LP-based Optimal Algorithm}\label{subsec:optimal-alg}

\begin{figure}[t!]
\begin{algorithm}[H]
   \caption{LP-based Optimal Algorithm}
   \label{alg:opt-alg}
    \begin{algorithmic}\vspace{0.08cm}
    		\item[\textbf{Input:}] Required level of competitiveness $\gamma \in [0,c(F)]$, advice $A$ and instance $I = \{s_t\}_{t=1}^T$.
    		\vspace{0.5em}
    		\item[$\mathbf{t = 0}$:] Solve \ref{LP} to find optimal solution $(\beta^*, x, (y(k))_k)$. Define protection levels:
    			\begin{itemize}
    				\item For $i \in [m]$, set $Q'_i = \sum_{j=1}^i x_j$. Moreover, set $Q_0' = 0$.
					\item For $i,k \in [m]$, set $R(k)_i = Q'_i + \sum_{j=1}^i y(k)_j$ if $i \leq k$ and $R(k)_i = Q'_k+ \sum_{j=1}^i y(k)_j$ if $i > k$. Moreover, set $R(k)_0 = 0$ for all $k \in [m]$.
    			\end{itemize}
    		\item[\textbf{Initialize:}] Total accepted customers with fare $f_j$ or below as $q_j = 0$ for all $j \in [m]$; total rejected customers with fare $f_j$ as $r_j = 0$ for all $j \in [m]$; trigger $\delta = 0$; and active protection levels $Q^*_i = Q'_i$ for all $i \in [m]$.
    		\vspace{0.5em}
            \item[\textbf{For}] $t=1$ to $T$:
            \begin{enumerate}
                \item Let $p \in [m]$ be the fare class of customer $s_t$, i.e., $s_t = f_p$. Calculate the fraction of customer $s_t$ that would be accepted under $Q^*$: set $w^* = \max\{w \in [0,1] \mid q_j + w \leq Q_j^*\ \forall j \geq p\}$ and let $h \geq p$ be the largest index such that $q_p + w^* = Q^*_j$ (set $h= 0$ if no such index exists). If $\delta = 0$, update $r_p \leftarrow r_p + (1 - w^*)$ to reflect the anticipated rejection of a $1 - w^*$ fraction of $s_t$.

                \item \emph{Check Trigger Condition.} If $\delta = 0$ and $\sum_{i=j}^h r_i > \sum_{i=j}^h \{A_i - x_i\}$ for some $j \geq \ell + 1$, i.e., the total rejections among fares $\{f_j, \dots, f_h \}$ exceeds the rejections among fares $\{f_j, \dots, f_m\}$ in $I(A)$ under LP solution $(\beta^*, x, (y(k))_k)$:
                	\begin{itemize}
                		\item Set trigger $\delta = 1$ and $k = \max\{j \mid q_j = Q_j'\}$ (set $k = 1$ if no such index exists).
                		\item \textbf{While} $\exists\ j \in [m]$ such that $R(k)_j < q_j$: Set $k = \min\{j \in [m] \mid R(k)_j < q_j\}$.\\
                			Set $k^* = k$, i.e., $k^*$ is the value with which the `While' loop terminates.
                		\item \emph{Switch Protection Levels.} Set $Q^*_j = R(k^*)_j$ for all $j \in [m]$.
                	\end{itemize}

                \item \emph{Make decision according to $Q^*$.} Accept $w^* = \max\{w \in [0,1] \mid q_j + w \leq Q_j^*\ \forall j \geq p\}$
                 fraction of customer $s_t$ and update $q_j \leftarrow q_j + w$, for all $j \geq p$, to reflect the increase in number of customers of fare $f_j$ or lower accepted by the algorithm.

            \end{enumerate}
    \end{algorithmic}
 \end{algorithm}
\end{figure}

In this subsection, we describe an algorithm (Algorithm~\ref{alg:opt-alg}) that is $\beta^*$-consistent for advice $A$ while also being $\gamma$-competitive. This allows us to establish the main result of this section:

\begin{theorem}\label{thm:lower-bound}
	For any advice $A$ and a required level of competitiveness $\gamma \in [0, c(F)]$, the optimal value of \ref{LP}, $\beta^*$, is equal to $\beta(A, \gamma)$. Furthermore, Algorithm~\ref{alg:opt-alg} is $\beta(A, \gamma)$-consistent on advice $A$ and $\gamma$-competitive.
\end{theorem}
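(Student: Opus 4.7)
The plan is to establish the lower bound $\beta(A,\gamma) \geq \beta^*$ by showing that Algorithm~\ref{alg:opt-alg} simultaneously achieves $\beta^*$-consistency on advice $A$ and $\gamma$-competitiveness on all instances. Combined with Theorem~\ref{thm:upper-bound}, this yields $\beta(A,\gamma) = \beta^*$, and the second claim of the theorem follows.

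For consistency, I would first observe that on any $I \in S(A)$, the triggering condition ``$p \geq \ell + 1$ and $a_p > A_p$'' can never be met, because by the definition of $S(A)$ there are at most $A_p$ customers of each fare $f_p$ for $p \geq \ell + 1$. Hence the algorithm runs purely as the protection level policy with levels $Q' = (Q'_1, \ldots, Q'_m)$. Using Lemma~\ref{lemma:protec-inc-order} (protection level policies are worst-case under increasing-fare orderings), the minimum revenue over $S(A)$ is attained on $I(A)$. A direct computation, using the identity $Q'_i = \sum_{j \leq i} x_j$ together with $x_j \leq N_j$ from \eqref{constraint:x-bound}, shows that the $Q'$-policy accepts exactly $x_j$ customers of fare $f_j$ on $I(A)$, giving revenue $\sum_j f_j x_j \geq \beta^* \cdot Opt(A)$ by \eqref{constraint:consistent}.

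For competitiveness, fix any instance $I$ and split on whether the trigger fires. If it never fires, then $a_p \leq A_p$ for every $p \geq \ell + 1$, and the algorithm runs purely as the $Q'$-policy; I would argue that the worst-case ratio over such instances is attained on some prefix $I(A,k)$, so \eqref{constraint:prefix-comp} gives $\gamma$-competitiveness. If the trigger fires at step $\tau$ on customer $f_p$ with $p \geq \ell + 1$, the inner loop selects $k^* \in [m]$ with $q_j \leq R(k^*)_j$ for all $j$; after the switch the algorithm runs as the protection level policy with levels $R(k^*)$ on the suffix $s_\tau, s_{\tau+1}, \ldots$. Two facts need verifying: (a) the inner loop terminates and outputs a valid $k^*$, which I would prove by showing that the sequence of candidate values of $k$ is strictly monotone (each restart chooses a strictly larger $k$, since a violation $q_j > R(k)_j$ at position $j$ forces $k$ to increase), bounding the number of iterations by $m$; and (b) after the switch, the combined trajectory up to any point can be dominated by the execution of $R(k^*)$-policy on $I(A,k^*) \oplus I(F,i)$ for $i$ equal to the highest fare seen in the suffix. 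Applying \eqref{constraint:comp} with the pair $(k^*, i)$ then yields revenue at least $\gamma \cdot Opt(I(A,k^*) \oplus I(F,i)) \geq \gamma \cdot Opt(I)$, where the last inequality uses that $I(A,k^*) \oplus I(F,i)$ is a ``dominating'' instance in the sense of having at least as many customers of each fare as what the adversary could have played.

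The main obstacle is the post-trigger case (b): one must carefully map the algorithm's actual trajectory to one of the LP hard instances in $\I$, simultaneously handling the pre-trigger acceptances (governed by $Q'$) and the post-trigger acceptances (governed by $R(k^*)$). The construction of $R(k)$ in the algorithm is precisely tailored so that $R(k)_i - Q'_{\min(i,k)}$ equals the $y(k)_j$ quantities from the LP, which encodes the $I(F)$-portion of a hard instance; verifying that this bookkeeping lines up correctly and that the iterative selection of $k^*$ always finds a feasible $R(k^*)$ given the current state $q$ (ensured by the interplay between \eqref{constrant:capacity}, \eqref{constraint:prefix-comp}, and \eqref{constraint:comp}) is the technical heart of the proof.
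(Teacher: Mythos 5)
Your overall strategy matches the paper's: prove consistency by showing the trigger never fires on $I \in S(A)$ so the algorithm is just the $Q'$-policy (worst case $I(A)$, then \eqref{constraint:consistent}); prove competitiveness by splitting on whether the trigger fires, using \eqref{constraint:prefix-comp} in the no-trigger case and \eqref{constraint:comp} after a switch; and combine with Theorem~\ref{thm:upper-bound}. The consistency argument, the no-trigger competitiveness argument, and your point (a) on loop termination (the candidate $k$ strictly increases because $q(\tau)_j \leq Q'_j \leq R(k)_j$ for $j \leq k$, so any violation sits strictly above $k$) are all sound and essentially what the paper does.

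The gap is in your step (b). First, the domination claim is false as stated: $I(A,k^*) \oplus I(F,i)$ does \emph{not} in general have at least as many customers of each fare as $I$ (the adversary may send arbitrarily many high-fare customers, far more than the $n$ per fare in $I(F)$). The correct comparison is different: one identifies the largest index $i$ at which the relevant protection level is saturated, notes that the algorithm accepts \emph{every} customer with fare above $f_i$, and bounds $Opt(I) \leq n f_i + \sum_{j>i} f_j a^*_j$, so the excess high-fare customers contribute to both sides. Second, applying \eqref{constraint:comp} with the single pair $(k^*, i)$, $i$ being ``the highest fare seen in the suffix,'' does not cover all cases. The paper needs a three-way case analysis on the relationship between $k^*$, the pre-trigger saturation index $s$, and the post-trigger saturation index $i$ of $R(k^*)$. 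In particular, when $i < k^*$ and $k^* > s$, the revenue guarantee does not come from the final state against $R(k^*)$ at all; it comes from the state \emph{at the trigger time} against the \emph{penultimate} loop value $k$, using that $q(\tau)_{k^*} > R(k)_{k^*}$ while $q(\tau)_j \leq R(k)_j$ for $j < k^*$, and then invoking \eqref{constraint:comp} for the pair $(k, k^*)$. And when $k^* = s$ with $i < s$, one must stitch together the pre-trigger acceptances under $Q'$ (up to fare $f_s$) with the post-trigger acceptances under $R(s)$ (up to fare $f_i$) before invoking \eqref{constraint:comp} for $(s,i)$. You correctly flagged this bookkeeping as the technical heart, but the specific recipe you propose for it would not go through.
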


Note that, in particular, Theorem~\ref{thm:lower-bound} implies that \ref{LP} is an efficient method of computing the consistency-competitiveness Pareto frontier. Furthermore, for every advice $A$, Algorithm~\ref{alg:opt-alg} provides a way to achieve levels of consistency and competitiveness that lie on this Pareto frontier, thereby characterizing the tradeoff between consistency and competitiveness for every advice. In the rest of this subsection, we motivate Algorithm~\ref{alg:opt-alg} and sketch the proof of Theorem~\ref{thm:lower-bound}. The full proof is relegated to Appendix~\ref{appendix:opt-alg}.

Algorithm~\ref{alg:opt-alg} is based on protection level policies and operates in two phases. Throughout its run, it maintains active protection levels $Q^*$, and makes accept/reject decisions based on them. Importantly, these active protection levels are not fixed and take on different values in the two phases. The transition between the two phases in captured by the trigger $\delta$. Since the protection levels change with time, Algorithm~\ref{alg:opt-alg} is \emph{not} a protection level policy. The protection levels of both phases are determined using an optimal solution $(\beta^*, x, \{y(k)\}_{k=1}^m)$ of \ref{LP}. In particular, we use the solution to construct protection levels, denoted by $Q'$ for phase $\delta = 0$ and $m$ candidate protection levels $\{R(k)\}_{k=1}^m$ for phase $\delta = 1$, defined as follows:
\begin{align*}
	Q_i' = \sum_{j=1}^i x_j \quad \text{ and } \quad R(k)_i = \sum_{j=1}^{\min\{i,k\}} x_j + \sum_{j=1}^i y(k)_i = \begin{cases}
		Q_i' + \sum_{j=1}^i y(k)_j & \text{if } i \leq k\\
		Q_k' + \sum_{j=1}^i y(k)_j & \text{if } i > k
	\end{cases}\,.
\end{align*}
These protection levels are motivated by the hard instances. In particular, $Q'$ is devised to achieve a reward of $\sum_{j=1}^i x_j \cdot f_j$ on instance $I(A, i)$ and $R(k)$ is devised to achieve a revenue of $\sum_{j=1}^k x_j \cdot f_j + \sum_{j=1}^i y(k)_j \cdot f_j$ on instance $I(A,k) \oplus I(F, i)$.

Initially, the active protection levels $Q^*$ are set equal to $Q'$ with the goal of achieving good performance on $I(A)$, or more generally for any instance that conforms to the advice. If the advice is accurate, we show that the algorithm accrues a reward larger than $\sum_{j=1}^m f_j x_j$, which in turn is larger than $\beta^* \cdot I(A)$ because of \eqref{constraint:consistent}. But this is not enough as the algorithm also needs to be $\gamma$-competitive. If the algorithm keeps using $Q'$ to make decisions even when the advice is inaccurate, it risks rejecting too many customers and violating the $\gamma$-competitiveness requirement. Therefore, it keeps track of the loss from rejecting customers in the form of $\{r_j\}_{j=1}^m$, and uses this information to trigger the phase transition.

Intuitively, the phase transition should be triggered if continuing with protection levels $Q'$ would result in a violation of the $\gamma$-competitiveness requirement. To that end, in Step 1 we calculate the additional rejection $(1 -w^*)$ that would result from the continued use of $Q'$ and update the rejections $\{r_j\}_{j=1}^m$. In order to ensure that these rejections do not violate $\gamma$-competitiveness, the algorithm compares them to rejections made by \ref{LP} on the instance $I(A)$: it allocates a rejection-capacity of $A_i - x_i$ to fare $i$ for all $i \geq \ell+1$. Here, fare classes $i \leq \ell$ are ignored because we do not require the advice to be good on fare classes $i \leq \ell$ which are not predicted to form a part of the optimal solution. To generate even greater leeway for the instance to deviate from the advice, the algorithm enforces these capacities in an upward-closed manner, i.e., we allow rejections of lower fares to cannibalize the rejection-capacity of higher fares (up to fare $f_h$). In particular, the trigger condition checks whether $\sum_{i=j}^h r_i > \sum_{i=j}^h \{A_i - x_i\}$ for some $j \geq \ell + 1$. The summation is limited to $h$ because the algorithm has filled the protection levels till fare $f_h$, i.e., $q_h = Q_h$, and we can charge the rejections against these acceptances to ensure $\gamma$-competitiveness. The definition of $h$ implies that the algorithm accepts every customer with fare larger than $f_h$, however the number of such customers can potentially be zero and we cannot charge the rejections against them.

Once the trigger is invoked, the algorithm needs to select new protection levels which satisfy the following crucial properties: (i) the number of customers accepted till the trigger pull should not violate these new protection levels; and (ii) ensure $\gamma$-competitiveness. Algorithm~\ref{alg:opt-alg} achieves this by carefully choosing a set of protection levels $R(k^*)$ from the collection $\{R(k)\}_{k=1}^m$. It sets $Q^* = R(k^*)$ to be the new active protection levels and uses them to make decisions on the remaining customers.

The proof of Theorem~\ref{thm:lower-bound} involves a fine-grained analysis of the performance of Algorithm~\ref{alg:opt-alg}. We begin by showing that Algorithm~\ref{alg:opt-alg} achieves a $\beta$ fraction of $Opt(I)$ for any instance $I \in S(A)$ that conforms to the advice. Note that no assumptions are made about the order in which the fares arrive. Since the order of the fares can effect when the phase transition occurs, we cannot directly leverage the proof techniques developed for analyzing protection level policies. Consequently, we first carefully analyze the phase transition rule to ensure that it is not triggered on any instances $I \in S(A)$ which conforms to the advice and then bound the performance of  the protection level policy based on $Q'$ by leveraging \ref{LP}.

Consistency is not the only requirement and we also need to ensure $\gamma$-competitiveness of Algorithm~\ref{alg:opt-alg}. As mentioned earlier, Algorithm~\ref{alg:opt-alg} is not a protection level policy, and consequently cannot be analyzed as one. In light of the lack of structure imposed on the instance and the novel phase transition that is part of Algorithm~\ref{alg:opt-alg}, apriori it is not even clear what the hard instances for Algorithm~\ref{alg:opt-alg} look like, and even though we referred to the instances defined in Subsection~\ref{subsec:hard-instances} as `hard', it is not clear that $\gamma$-competitiveness on hard instances of Subsection~\ref{subsec:hard-instances} ensures $\gamma$-competitiveness on all instances. The design of the hard instances goes hand-in-hand with the proof of Theorem~\ref{thm:lower-bound}, with the latter guiding the former. Establishing $\gamma$-competitiveness is rife with technical challenges: (i) the trigger condition for the phase transition relies on rejections, compelling us to keep track of rejections, which is not required for analyzing protection level policies; (ii) ensuring that using $Q'$ (which is designed to perform well when the advice is true) to make decisions before the phase transition does not cause Algorithm~\ref{alg:opt-alg} to make irrevocable decisions that preclude $\gamma$-competitiveness. We tackle (i) by developing finer bounds on $Opt(I)$ which take rejections into account and connecting them to the reward accumulated by Algorithm~\ref{alg:opt-alg}, and establish (ii) by performing an intricate case-analysis of the procedure used by Algorithm~\ref{alg:opt-alg} to select the post-transition protection levels. Our analysis has to continually contend with the lack of order in the instance and its effect on the phase transition, making (i) and (ii) even more challenging.
}

\section{Optimal Protection Level Policy}\label{sec:protec}

By carefully constructing and switching between protection levels,  Algorithm~\ref{alg:opt-alg} employs protection level policies with different protection levels as sub-routines to achieve performance that lies on the consistency-competitiveness Pareto frontier. In particular, it runs the protection level policy based on $Q'$ when the trigger is $\delta = 0$ and runs the protection level policy based on $R(k^*)$ once $\delta = 1$. This switch between protection levels can lead to non-monotonicity: Algorithm~\ref{alg:opt-alg} may accept a customer of a particular fare type having rejected another customer of that same fare type at an earlier time step (see Example~\ref{example:bad}). Non-monotonicity ``can lead to the emergence of strategic consumers or third parties that specialize in exploiting inter-temporal fare arbitrage opportunities, where one waits for a lower fare class to be available. To avoid such strategic behavior, the capacity provider may commit to a policy of never opening fares once they are closed'' \citep{gallego2019revenue}. Additionally, in order to switch between protection levels, Algorithm~\ref{alg:opt-alg} also requires information about the fare types of customers that were rejected, which may not be available depending on the context.

In contrast, using a single protection level policy on the entire instance (i.e., using the same protection levels at all time steps) avoids these issues, in addition to satisfying a multitude of other practically-relevant properties which make protection level policies the bedrock of single-leg revenue management (see~\citealt{talluri2004theory} for a detailed discussion). As we will show later in this section, this insistence on using a single set of protection levels at all time steps comes with a loss of optimality. Nonetheless, the practical importance of protection level policies warrants an investigation into the optimal way to incorporate advice for protection level policies. To that end, in this section, we characterize the protection level policy that attains the optimal level of consistency for a given advice $A$ among all protection level policies that are $\gamma$-competitive.

\begin{definition}
	Fix an advice instance $A$ and a level of competitiveness $\gamma \in [0, c(F)]$. Let $\mathcal{P}^{PL}_\gamma$ denote the set of all protection level policies that are $\gamma$-competitive. Then, the optimal level of consistency for advice $A$ that can be attained by protection level policies under the $\gamma$-competitiveness requirement is defined as follows:
\begin{align*}
	\beta^{PL}(A, \gamma) \coloneqq \max_{P \in \mathcal{P}^{PL}_\gamma} \min_{\{s_t\}_{t=1}^T \in S(A)} \frac{\mathbb{E}_P\left[\sum_{t=1}^T P_t(A, \{s_i\}_{i=1}^{t-1}, s_t) \cdot s_t\right]}{Opt(A)}\,.
\end{align*}
\end{definition}


\revision{
\subsection{Computing the Optimal Protection Levels}\label{subsec:opt-protec}

\begin{figure}[t]
\begin{algorithm}[H]
   \caption{Optimal Protection Levels}
   \label{alg:opt-protec}
    \begin{algorithmic}\vspace{0.08cm}
    		\item[\textbf{Input:}] Required level of competitiveness $\gamma \in [0, c(F)]$, advice $A$,  and approximation parameter $\epsilon > 0$.
    		\item[\textbf{function}] CoreSubroutine$(\beta)$: \Comment{Core subroutine}\\
    			\begin{enumerate}
    				\item Set $Q_i = 0$ for $0 \leq i \leq m$.\label{alg:prot levels beta}
            		\item \textbf{For} $k = 1$ to $m$:\label{alg:prot levels for}
            			\begin{enumerate}
            				\item Define
            					\begin{align}\label{eqn:opt-protec-ck}
            						c_k \coloneqq  \frac{\gamma \cdot Opt(I(F,k)) - Q(I(F, k-1))}{f_k}\,,
            					\end{align}
            					and set $Q_i \leftarrow Q_{k-1} + c_k$ for all $i \geq k$.
            				\item If $Q(I(A,k)) + \sum_{i=k+1}^m N_i f_i < \beta \cdot Opt(I(A))$: Define
            					\begin{align}\label{eqn:opt-protec-dk}
            						d_k \coloneqq \frac{\beta \cdot Opt(I(A)) - Q(I(A, k)) - \sum_{i = k+1}^m N_i f_i}{f_k}\,,
            					\end{align}
                      then set $Q_i \leftarrow Q_{i} + d_k$ for all $i \geq k$.
            			\end{enumerate}
            		\item Return protection levels $Q$.
    			\end{enumerate}
    		\item[\textbf{end function}]
    		\item[\textbf{Initialize:}] $\underline \beta = c(F)$ and $\bar \beta = 1$. \Comment{Binary Search}
    		\item[\textbf{While}] $\bar \beta - \underline \beta > \epsilon$:
            	\begin{enumerate}
            		\item Set $\beta = (\bar \beta + \underline \beta)/2$.
            		\item $Q \leftarrow$ CoreSubroutine$(\beta)$.
            		\item If $Q_m > n$, set $\bar \beta = \beta$. Else if $Q_m \leq n$, set $\underline \beta = \beta$.
            	\end{enumerate}
            \item[\textbf{Return:}] $Q \leftarrow$ CoreSubroutine$(\underline \beta)$.
    \end{algorithmic}
 \end{algorithm}
\end{figure}

The goal of this subsection is to describe an algorithm (Algorithm~\ref{alg:opt-protec}) which, given advice $A$ and a required level of competitiveness $\gamma \in [0, c(F)]$, outputs protection levels $Q$ such that the protection level policy based on $Q$ is (nearly) $\beta^{PL}(A, \gamma)$-consistent on advice $A$. Before proceeding to the algorithm, we define a few terms necessary for its description. For the remainder of this subsection, fix an advice $A$ and a required level of competitiveness $\gamma \in [0, c(F)]$. Recall that protection levels are simply vectors $Q \in \R_+^m$ such that $Q_1 \leq Q_2 \leq \dots Q_m \leq n$. If a vector $Q \in \R_+^m$ satisfies $Q_1 \leq Q_2 \leq \dots Q_m$ but $Q_m > n$ then we call it infeasible. Given an instance $I$ and protection levels $Q$, we use $Q(I)$ to denote the revenue obtained by the protection level policy with protection levels $Q$ when acting on the instance $I$. Furthermore, recall that $N_i$ denotes the number of customers of fare type $f_i$ in the instance $I(A)$, i.e., $N_i = A_i$ if $i \geq \ell+1$ and $N_i = n$ otherwise. Finally, our description of Algorithm~\ref{alg:opt-protec} also makes use of the instances defined in subsection~\ref{subsec:hard-instances}.

The core subroutine in Algorithm~\ref{alg:opt-protec} takes as input a desired level of consistency $\beta$ and decides whether any protection level policy can achieve $\beta$-consistency on advice $A$ while also being $\gamma$-competitive. It does so by iterating through the fares in increasing order $k = 1$ to $m$ and setting $Q_k$ to be smallest level that satisfies the following properties:
\begin{itemize}
	\item The (possibly infeasible) protection levels $Q_1 \leq \dots \leq Q_k$ ensure a revenue greater than or equal to $\gamma \cdot Opt(I(F,k))$ on instance $I(F,k)$. We define $c_k$ in \eqref{eqn:opt-protec-ck} to be such that $Q(I(F,k-1)) + c_k \cdot f_k = \gamma \cdot Opt(I(F,k))$, and then set $Q_k = Q_{k-1} + c_k$ to ensure $\gamma$-competitiveness on $I(F,k)$: $$Q(I(F,k)) = Q(I(F,k-1)) + (Q_k - Q_{k-1}) \cdot f_k = Q(I(F,k-1)) + c_k \cdot f_k = \gamma \cdot Opt(I(F,k))\,.$$
	\item The (possibly infeasible) protection levels $Q_1 \leq \dots \leq Q_k$ preserve the potential to attain a revenue greater or equal to $\beta \cdot Opt(I(A))$ on instance $I(A)$, i.e., if we were to accept all of the customers in $I(A)$ with fare strictly greater than $f_k$ (leading to a revenue of $\sum_{i = k+1}^m N_i f_i$), in addition to the fares accepted according to the protection level policy $Q$ till fare $f_{k}$ (which yield a revenue of $Q(I(A,k))$), then we can guarantee a revenue greater than or equal to $\beta \cdot I(A)$.  We define $d_k$ in \eqref{eqn:opt-protec-dk} to be such that $Q(I(A,k)) + d_k \cdot f_k + \sum_{i = k+1}^m N_i f_i = \beta \cdot Opt(I(A))$, and increase $Q_k$ by $d_k$ to preserve the potential to attain $\beta$-consistency:
		\begin{align*}
			\text{Updated-}Q(I(A,k)) + \sum_{i = k+1}^m N_i f_i = Q(I(A,k)) + d_k \cdot f_k + \sum_{i = k+1}^m N_i f_i = \beta \cdot Opt(I(A))\,.
		\end{align*}
\end{itemize}

In the proof of Theorem~\ref{thm:opt-protec}, we show that these protection levels $Q$ are feasible ($Q_m \leq n$) if and only if some feasible protection level policy can attain $\beta$-consistency on advice $A$ while being $\gamma$-competitive. To do so, we prove that the core subroutine minimizes the total capacity needed to achieve the required level of consistency and competitiveness, thereby implying that if any feasible protection level policy can achieve it, then the protection levels $Q$ are feasible. Algorithm~\ref{alg:opt-protec} simply uses this core subroutine to perform a binary search over the space of all consistency levels $\beta$ to find the largest one which can be attained for advice $A$ while being $\gamma$-competitive. The `Return' step ensures that Algorithm~\ref{alg:opt-protec} returns feasible protection levels: It sets $\beta = \underline \beta$ and computes the corresponding protection levels $Q$, which are guaranteed to satisfy $Q_m \leq n$ because of the binary search procedure.

\begin{theorem}\label{thm:opt-protec}
	If $Q$ are the protection levels returned by Algorithm~\ref{alg:opt-protec}, upon receiving required level of competitiveness $\gamma \in [0, c(F)]$, advice $A$, and approximation parameter $\epsilon > 0$ as input, then the protection level policy based on $Q$ is $(\beta^{PL}(A, \gamma) - \epsilon)$-consistent on advice $A$ and $\gamma$-competitive. Moreover, the runtime of Algorithm~\ref{alg:opt-protec} grows polynomially with the input size and $\log(1/\epsilon)$.
\end{theorem}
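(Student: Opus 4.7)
My plan is to combine three ingredients: a reduction from the infinite family of worst-case instances to finite witness families, a minimality characterization of the output of the inner For loop for each fixed $\beta$, and a standard binary search analysis.

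I would first prove two structural reductions. For competitiveness, I would show that a protection level policy $Q$ is $\gamma$-competitive if and only if $Q(I(F,k)) \geq \gamma \cdot Opt(I(F,k))$ for every $k \in [m]$; the nontrivial direction combines Lemma~\ref{lemma:protec-inc-order} (the revenue of a protection level policy is minimized under increasing-fare arrivals) with a monotonicity argument that the worst number of customers of each fare is exactly $n$. Symmetrically, $Q$ is $\beta$-consistent on $A$ if and only if $Q(I(A)) \geq \beta \cdot Opt(A)$, because $I(A)$ minimizes a protection level policy's revenue over all $I \in S(A)$.

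Given these reductions, I would analyze the For loop inductively in $k$ for fixed $\beta$. By construction $c_k$ is the minimum increment over $Q_{k-1}$ needed to make $Q(I(F,k)) \geq \gamma \cdot Opt(I(F,k))$, and $d_k$ is the minimum further increment needed so that $Q(I(A,k)) + \sum_{i>k} N_i f_i \geq \beta \cdot Opt(I(A))$, a necessary condition for the final consistency inequality $Q(I(A)) \geq \beta \cdot Opt(I(A))$. The central claim I would prove is componentwise minimality: if any feasible protection level policy $\tilde Q$ is $\gamma$-competitive and $\beta$-consistent on $A$, then $Q_k \leq \tilde Q_k$ for every $k$. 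The induction step uses that both $\gamma$-competitiveness on $I(F,k)$ and the partial consistency inequality at step $k$ force $\tilde Q_k$ to exceed the analogues of $Q_{k-1}+c_k$ and $Q_{k-1}+d_k$, combined with the inductive hypothesis $Q_j \leq \tilde Q_j$ for $j<k$. Consequently $Q_m \leq n$ if and only if some feasible $(\beta,\gamma)$-protection level policy exists, and whenever $Q_m \leq n$ the output $Q$ is itself such a policy via the step~1 reductions.

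The outer binary search then maintains the invariant $\ell \leq \beta^{PL}(A,\gamma) \leq u$: initially $\ell = c(F)$ is achievable (the Ball--Queyranne policy is $c(F)$-competitive and hence trivially $c(F)$-consistent on $A$) and $u = 1$ is a trivial upper bound; each update preserves the invariant by the minimality characterization. After $O(\log(1/\epsilon))$ halvings the gap falls below $\epsilon$, so the returned $Q$ (produced by one last For loop with $\beta = \ell$) is feasible, $\gamma$-competitive, and $(\beta^{PL}(A,\gamma)-\epsilon)$-consistent on $A$. Each invocation of the For loop costs $O(m)$ arithmetic operations, giving total runtime polynomial in the input size and $\log(1/\epsilon)$. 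The main obstacle I anticipate is the worst-case-instance reduction, particularly verifying that $I(A)$ minimizes a protection level policy's revenue over $S(A)$: adding low-fare customers can simultaneously inflate low-fare revenue and suppress high-fare acceptance, so the argument requires a careful coupling on top of Lemma~\ref{lemma:protec-inc-order} rather than a direct monotonicity appeal.
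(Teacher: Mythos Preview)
Your overall structure---witness reduction to $\{I(F,k)\}$ and $I(A)$, a minimality analysis of the For loop at each fixed $\beta$, and a binary search wrapper---matches the paper's. The substantive difference is in how you certify the For loop's optimality. You argue \emph{componentwise} minimality: for any $\gamma$-competitive, $\beta$-consistent $\tilde Q$ one has $Q_k\le\tilde Q_k$ for every $k$, proved by induction using Abel summation on $\tilde Q(I(F,k))$ and $\tilde Q(I(A,k))$ together with the monotonicity $\tilde p_j\ge p_j$ of cumulative acceptances. The paper's Lemma~\ref{lemma:protec-second} instead builds auxiliary levels $Q'$ via per-step optimization subproblems in variables $y(k)_i,z(k)_i$, maintains the increment bound $Q'_i-Q'_{i-1}\le Q^*_i-Q^*_{i-1}$, and only concludes $Q_m\le Q^*_m$. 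Your route is shorter and gives a strictly stronger intermediate statement; the paper's is more roundabout but reaches the same endpoint. One caveat: your induction step when $d_k>0$ needs the fact that the level $Q_{k-1}+c_k$ is \emph{binding} on $I(A,k)$ (equivalently $x_k<N_k$ for the after-step-(a) levels), since otherwise the consistency deficit cannot be translated into a lower bound on $\tilde Q_k$. The paper proves exactly this inside Lemma~\ref{lemma:protec-second}, so be sure to include that argument in your induction rather than treating it as automatic. Also note that $Q_k=Q_{k-1}+c_k+d_k$, so the bound you need is $\tilde Q_k\ge Q_{k-1}+c_k+d_k$, not merely the maximum of two separate thresholds; your Abel-summation argument delivers this directly once binding is established.
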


The proof of Theorem~\ref{thm:opt-protec} can be found in Appendix~\ref{appendix:protec}. The technical challenge lies in establishing the following fact: If \emph{any} protection level policy can achieve $\beta'$-consistency and $\gamma$-competitiveness, then the core subroutine with $\beta = \beta'$ returns protection levels $Q$ which are feasible (i.e., $Q_m \leq n$). To achieve this, we iteratively define linear programs, each of which builds on the solution of previous programs, in order to connect the consistency and competitiveness of an arbitrary protection level policy to that of Algorithm~\ref{alg:opt-protec}.
}

%
%
%

\subsection{Sub-optimality of Protection Level Policies}\label{subsec:protec-level-suboptimal}

As we alluded to at the beginning of this section, protection levels policies do not always yield performance that lies on the consistency-competitiveness Pareto frontier. In fact, as we show in this subsection, in some cases it can be quite far from the Pareto frontier. We do so by exhibiting an example that illustrates the sub-optimality of protection level policies.

\begin{exmp}\label{example:bad}
	Assume that the capacity $n$ satisfies $n\ \text{(mod)}\ 9 = 0$. Set $f_1 = 1 < f_2 = \eta < f_3 = \eta^2$, where $\eta \gg n$ is a large number. Moreover, set $A = (1, n/3, \{2n/3\} - 1)$, and $\gamma = 1/3$. Consider $\beta = (1/2) + (3/n)$. Then, using the notation of Algorithm~\ref{alg:opt-protec}, we get that $c_1 = n/3$, $d_1 = 0$, $c_2 = (1 - 1/\eta)(n/3)$, $d_2 = 0$, $c_3 = (1 - 1/\eta)(n/3)$ and $d_3 \geq \frac{\beta \cdot [\eta^2 (2n/3 - 1)] - (\eta^2 n/3)}{\eta^2} = \frac{(\eta^2 n/3) + (2 - \beta)\eta^2 -  (\eta^2 n/3)}{\eta^2} = 2 - \beta\,.$
	
	Therefore, $Q_3 = c_1 + c_2 + c_3 + d_1 + d_2 + d_3 > n$. Hence, using Lemma~\ref{lemma:protec-second} in Appendix~\ref{sec:appendix:protec}, we get that $\beta^{PL}(A, \gamma) \leq (1/2) + (6/\eta) \approx 1/2$. Next, note that the following values form a feasible solution of \ref{LP}: $\beta = 5/(6 + n/\eta) \approx 5/6$ and
	\begin{align*}
		\quad &x_1 = n/3; \quad x_2 = n/9; \quad x_3 = 5n/9;\\
		&y(1)_1 = 0; \quad y(1)_2 = n/3; \quad y(1)_3 = n/3;\\
		&y(2)_1 = 0; \quad y(2)_2 = 2n/9; \quad y(2)_3 = n/3;\\
		&y(3)_1 = 0; \quad y(3)_2 = 0 ; \quad y(3)_3 = 0\,.
	\end{align*}
	Hence, we get that $\beta(A, \gamma) \geq 5/(6 + n/\eta)$, thereby showing that, in the worst case, protection level policies may be far from the optimal. This example also demonstrates the lack of monotonicity in Algorithm~\ref{alg:opt-alg}. The solution described above is an optimal integral solution of the LP and very close to the optimal solution of the LP. If Algorithm~\ref{alg:opt-alg} receives the instance $I(A, 2) \oplus I(F)$, it will initially reject some customers with fare $f_2$ because $x_2 = n/9$ and $A_2 = n/3$, and then accept customers with fare $f_2$ when they arrive as part of $I(F)$ because $y_2 = 2n/9$.
\end{exmp}

%
%

We conclude this section by providing some intuition behind the sub-optimality of protection level policies that motivated the above example. The central difference between the LP-based optimal algorithm (Algorithm~\ref{alg:opt-alg}) and the optimal protection level policy (Algorithm~\ref{alg:opt-protec}) is that the former changes the protection levels it uses if the instance deviates from the advice. Hence, if the advice is turning out to be true then the optimal LP-based algorithm can use the minimum amount of capacity needed to maintain $\gamma$-competitiveness on the prefix of the instance seen so far, in an attempt to conserve capacity and deploy it on customers with higher fares that will arrive later, thereby attaining high consistency. To do so, it uses the protection levels $Q'$, as defined in Algorithm~\ref{alg:opt-alg}, while the advice is true. Crucially, $Q'$ need not be $\gamma$-competitive on instances that cannot go on to conform with the advice, and is optimized for consistency. This does not violate the competitiveness requirement because if the instance deviates from the advice, the LP-based optimal algorithm stops using $Q'$ and switches to protection levels that allow it to attain $\gamma$-competitiveness. On the other hand, the optimal protection level policy lacks this adaptivity and uses the same protection levels throughout. This forces it to use more capacity on lower fares because it needs to maintain $\gamma$-competitiveness on instances in which a large number of such customers arrive, even though the advice predicts a low number and the instance is consistent with the advice. This ``overcommitment'' to lower fares prevents it from attaining the optimal level of consistency. In our empirical evaluation of the two algorithms (see Section~\ref{sec:simulation}), we found this overcommitment to be the main driver of the sub-optimality of protection level policies: In settings that are not geared to exploit this overcommitment, the optimal protection level policy (Algorithm~\ref{alg:opt-protec}) attains performance on the consistency-competitiveness Pareto frontier.

\revision{
\section{Robustness}\label{sec:robustness}

Till now, the performance of an algorithm has been evaluated along two dimensions: (i) consistency considers the performance of an online algorithm when the instance $I$ conforms to the advice ($I \in S(A)$); and (ii) competitiveness considers the performance of an online algorithm on all instances. In practice, we would also like our online algorithm to perform well in the vicinity of the advice $A$. To that end, in this section, we show that the performance of protection level policies degrades gracefully as the instance deviates from the advice; and a natural modification of Algorithm~\ref{alg:opt-alg} also exhibits graceful performance degradation when the instance is close to conforming with the advice. We begin by defining a notion of dissimilarity (or distance) between the instance and the advice, which will allow us to quantify the robustness to deviations. The distance $\lambda(I, A)$ between an advice $A$ and an instance $I = \{s_t\}_{t=1}^T$ is given by
\begin{align}\label{eqn:distance}
	\lambda(I,A) \coloneqq \left(A_\ell - K_I(\ell) \right)^+ + \sum_{j=\ell+1}^m \left|A_j - K_I(j) \right| \,,
\end{align}
where $K_I(j) = \sum_{t=1}^T \mathds{1}(s_t = f_j)$ represents the number of customers of fare type $f_j$ in instance $I$, and, as before, $\ell = \min\{ i \in [m] \mid A_i \geq 1\}$ is the lowest fare that forms a part of $A$. The definition of $\lambda$ is motivated by the following equivalence: $\lambda(I, A) = 0$ \emph{if and only if} $I\in S(A)$, i.e., $I$ conforms to advice $A$. Recall that we use $Q(I)$ to denote the total revenue collected by the protection level policy with protection levels $Q = (Q_1, \dots, Q_m)$. The following result shows that the performance of protection level policies degrades smoothly as a function of the distance between the instance and the advice.

\begin{theorem}\label{thm:protec-robust}
	For every advice $A$, instance $I$ and protection levels $Q = (Q_1, \dots, Q_m)$, we have
	\begin{align*}
		  Q(I) \geq \frac{Q(I(A))}{Opt(A)} \cdot Opt(I) - 2f_m \cdot \lambda(I,A)\,.
	\end{align*}
\end{theorem}

We can leverage Theorem~\ref{thm:protec-robust} to show that the optimal protection levels returned by Algorithm~\ref{alg:opt-protec} are robust to deviations in the advice.

\begin{corollary}
	Let $Q = (Q_1, \dots, Q_m)$ be the protection levels returned by Algorithm~\ref{alg:opt-protec} when given required level of competitiveness $\gamma$, advice $A$ and approximation parameter $\epsilon > 0$ as input. Then, for every instance $I$, we have $Q(I) \geq \left(\beta^{PL}(A, \gamma) - \epsilon \right) \cdot Opt(I) - 2f_m \cdot \lambda(I,A)$.
\end{corollary}

Next, for $\Lambda>0$, we define a variation of Algorithm~\ref{alg:opt-alg} called the $\Lambda$-relaxed LP-based Optimal Algorithm through the following changes:
\begin{itemize}
	\item Set $Q'_i = \sum_{j=1}^i \lfloor x_j \rfloor$ instead of $Q'_i = \sum_{j=1}^i x_j$.
	\item Replace the trigger condition `If $\delta = 0$ and $\sum_{i=j}^h r_i > \sum_{i=j}^h \{A_i - x_i\}$ for some $j \geq \ell + 1$' in Step 2 with `If $\delta = 0$ and $\sum_{i=j}^h r_i > \Lambda + \sum_{i=j}^h \{A_i - x_i\}$ for some $j \geq \ell + 1$'.
\end{itemize}
The rest of the algorithm is identical to Algorithm~\ref{alg:opt-alg} (The precise algorithm can be found in Appendix~\ref{appendix:robustness}). Recall that Algorithm~\ref{alg:opt-alg} keeps track of the rejections, and changes the trigger from $\delta = 0$ to $\delta = 1$ when the rejection-capacity is exceeded. The intuition behind it being that if the deviation of the instance from the advice large enough to exceed the rejection capacity, then the advice is not accurate and we no longer need to focus on ensuring consistency. But, in the worst-case, this does not give the machine-learning algorithm which generates the advice much room for error.

 The $\Lambda$-relaxed Algorithm~\ref{alg:opt-alg} resolves this issue by increasing the rejection capacity by $\Lambda$. As long as the advice was good enough for the instance to satisfy $\lambda(I,A) \leq \Lambda$, the $\Lambda$-relaxed Algorithm~\ref{alg:opt-alg} would continue to aim for consistency. The next theorem establishes that this makes the $\Lambda$-relaxed Algorithm~\ref{alg:opt-alg} robust to deviations from the advice.

\begin{theorem} \label{thm:robust-opt}
	Consider some advice $A$ and a competitiveness level $\gamma \in [0, c(F)]$. Then, for $\Lambda \geq 1$:
	\begin{enumerate}
		\item The $\Lambda$-relaxed Algorithm~\ref{alg:opt-alg} attains a revenue greater than or equal to $\beta(A, \gamma) \cdot Opt(I) - 2 f_m \cdot \lambda(I,A)$ on any instance $I$ with $\lambda(I,A) \leq \Lambda$.
		\item The $\Lambda$-relaxed Algorithm~\ref{alg:opt-alg} attains a revenue greater than or equal to $\gamma \cdot Opt(I) - f_m \cdot (\Lambda + m)$ on any instance $I \in \mathcal{U}$.
	\end{enumerate}
\end{theorem}
}

\section{Numerical Experiments}\label{sec:simulation}

In this section, we numerically evaluate the performance of the LP-based optimal algorithm (Algorithm~\ref{alg:opt-alg}) and the optimal protection level policy (Algorithm~\ref{alg:opt-protec}) on a variety of advice and problem parameters. We do so with two central goals in mind: (i) comparing the performance of the two algorithms; (ii) evaluating the robustness of both algorithms, by computing the expected performance of both algorithms when the instance is drawn from a distribution centered on the advice. In this section, we will use the consistency of the $c(F)$-competitive protection level policy of \citet{ball2009toward} as a benchmark, which is equal to $Q(I(A))/Opt(A)$, where $Q_i = \left[\sum_{j=1}^i \left(1 - f_{i-1}/{f_i}\right)\right]\cdot c(F) \cdot n$ are the protection levels used by the policy of \citet{ball2009toward}.
 \revision{Note that the consistency-competitiveness frontier of the policy of \citet{ball2009toward} is always a point because the algorithm has a fixed competitiveness.} However, in general we do have that its consistency satisfies $Q(I(A)) / Opt(A) > c(F)$ because the algorithm of \citet{ball2009toward} performs better on instances that are well behaved. Moreover, consistency of their algorithm always lies entirely below \emph{the lowest point} of the consistency-competitiveness frontier of the optimal protection level policy (Algorithm~\ref{alg:opt-protec}) by virtue of our algorithm being optimal. In other words, for every advice $A$, the consistency of the policy of \citet{ball2009toward} is always less than $\beta^{PL}(A, c(F)) = \min_{\gamma \in [0, c(F)]} \beta^{PL}(A, \gamma)$. Based on this observation, it comes as no surprise that the LP-based optimal algorithm (Algorithm~\ref{alg:opt-alg}) and the optimal protection level policy (Algorithm~\ref{alg:opt-protec}) have much higher consistency than the policy of \citet{ball2009toward}, which is oblivious to the advice. Throughout this section, we fix the capacity to be $n = 100$.

\subsection{Comparing the Policies}

\begin{figure}[t!]
	\begin{subfigure}{0.33\textwidth}
		\includegraphics[width = \linewidth]{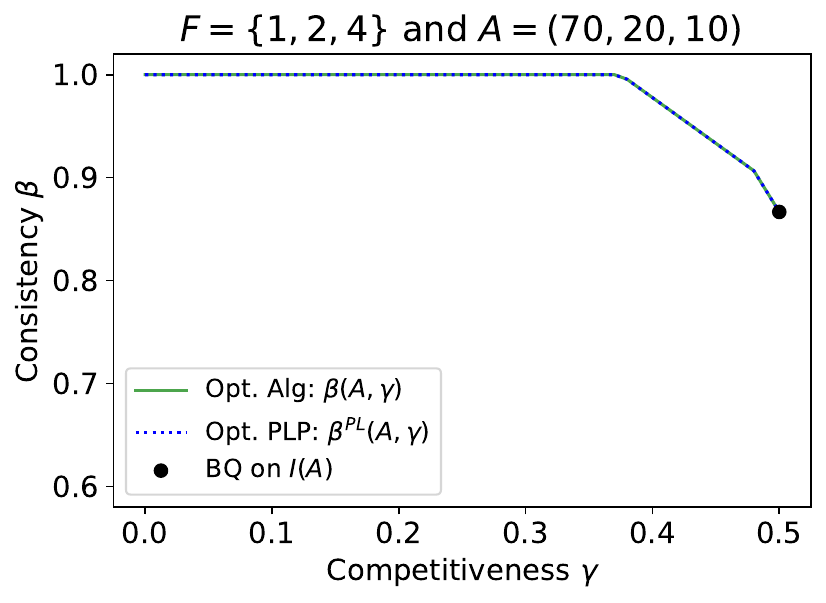}
	\end{subfigure}%
	\begin{subfigure}{0.33\textwidth}
		\includegraphics[width = \linewidth]{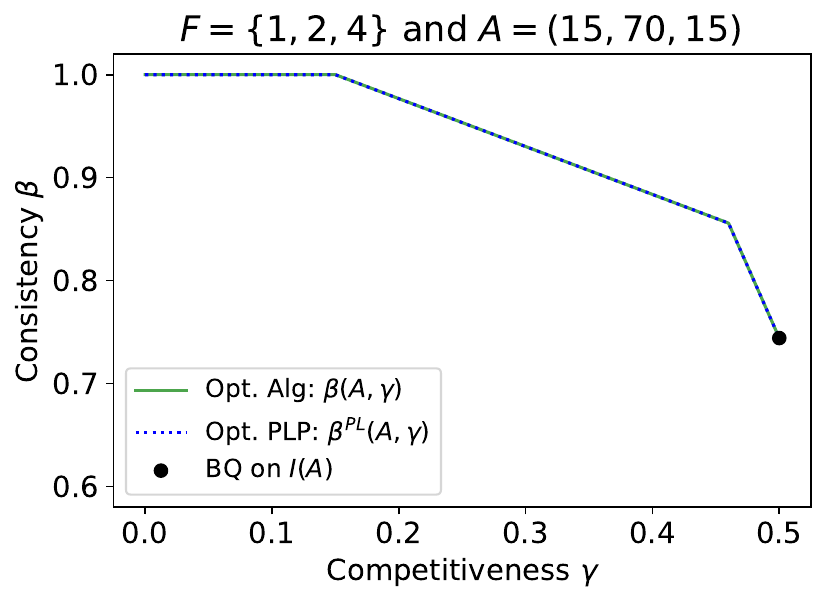}
	\end{subfigure}%
	\begin{subfigure}{0.33\textwidth}
		\includegraphics[width = \linewidth]{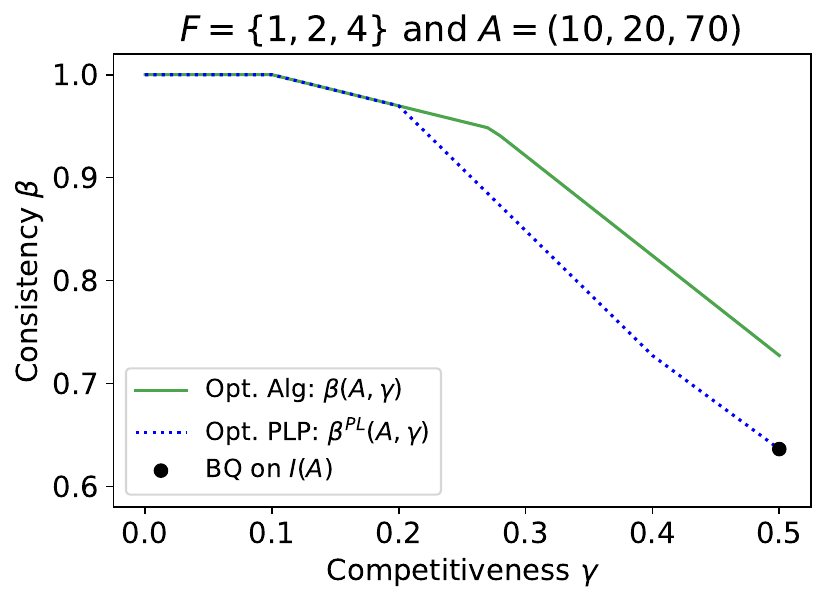}
	\end{subfigure}
	\begin{subfigure}{0.33\textwidth}
		\includegraphics[width = \linewidth]{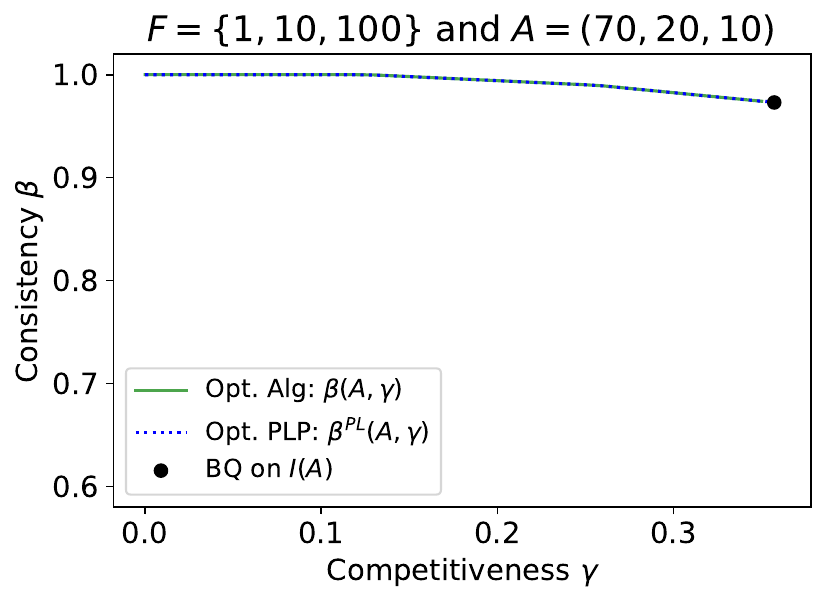}
	\end{subfigure}%
	\begin{subfigure}{0.33\textwidth}
		\includegraphics[width = \linewidth]{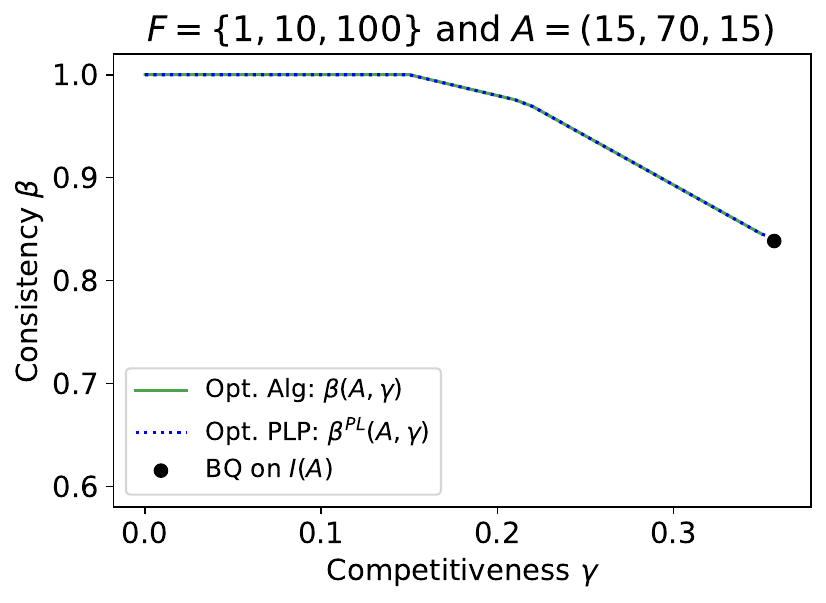}
	\end{subfigure}%
	\begin{subfigure}{0.33\textwidth}
		\includegraphics[width = \linewidth]{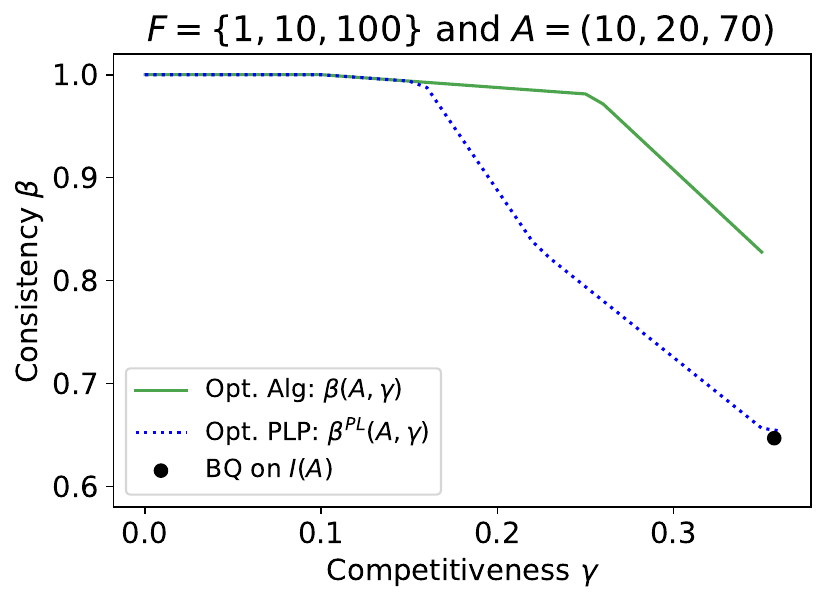}
	\end{subfigure}
	\caption{Comparing the performance frontiers of the LP-based optimal algorithm (dotted) and the optimal protection level policy (solid), with capacity $n = 100$. We use the consistency of the $c(F)$-competitive protection level policy of \citet{ball2009toward} as a benchmark.}
	\label{fig:sim-opt-vs-protec}
\end{figure}

In Figure~\ref{fig:sim-opt-vs-protec}, we plot the consistency-competitiveness performance frontiers of the LP-based optimal algorithm (Algorithm~\ref{alg:opt-alg}), represented by $\beta(A,\gamma)$, and that of the optimal protection level policy, represented by $\beta^{PL}(A, \gamma)$ and computed using Algorithm~\ref{alg:opt-protec}. We do so for two sets of fares: (i) fares that are close to each other, as represented by $F = \{1,2,4\}$ with $c(F) = 0.5$; (ii) fares that are well-separated, represented by $F = \{1,10, 100\}$ with $c(F) \approx 0.36$. Moreover, we also vary the advice by how much weight it puts on the different fare classes. These plots indicate that protection level policies are only sub-optimal when the advice is similar to the bad example discussed in Subsection~\ref{subsec:protec-level-suboptimal}. To explore this trend further, we define the notion of relative sub-optimality of protection level policies for every advice $A$ as
\begin{align*}
	RS(A) \coloneqq \max_{\gamma \in [0,c(F)]}\frac{\beta(A, \gamma) - \beta^{PL}(A, \gamma)}{\beta(A,\gamma)}\,.
\end{align*}

\begin{figure}
	\begin{subfigure}{0.45\textwidth}
		\includegraphics[width = \linewidth]{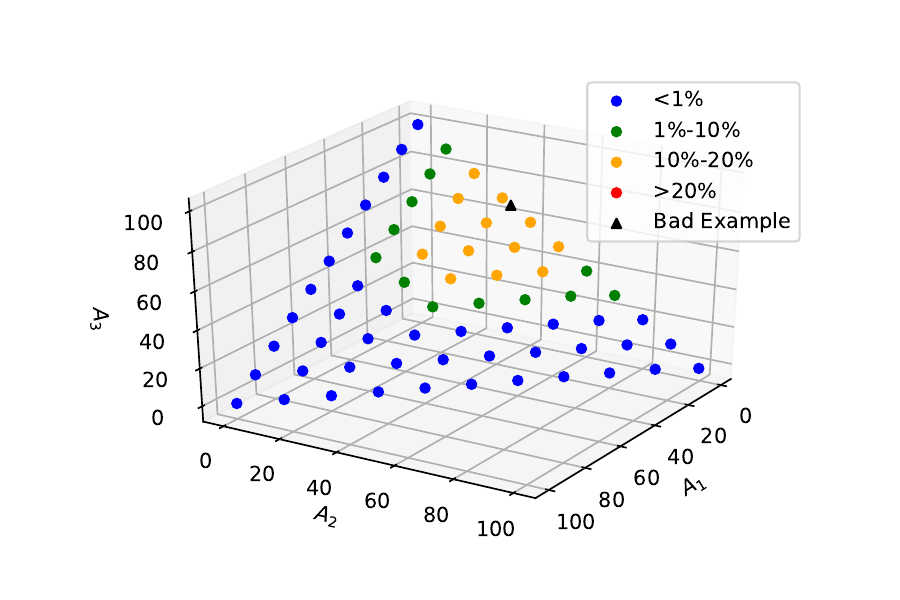}
		\caption{Close fares: $F = \{1,2,4\}$}
	\end{subfigure}%
	\begin{subfigure}{0.45\textwidth}
		\includegraphics[width = \linewidth]{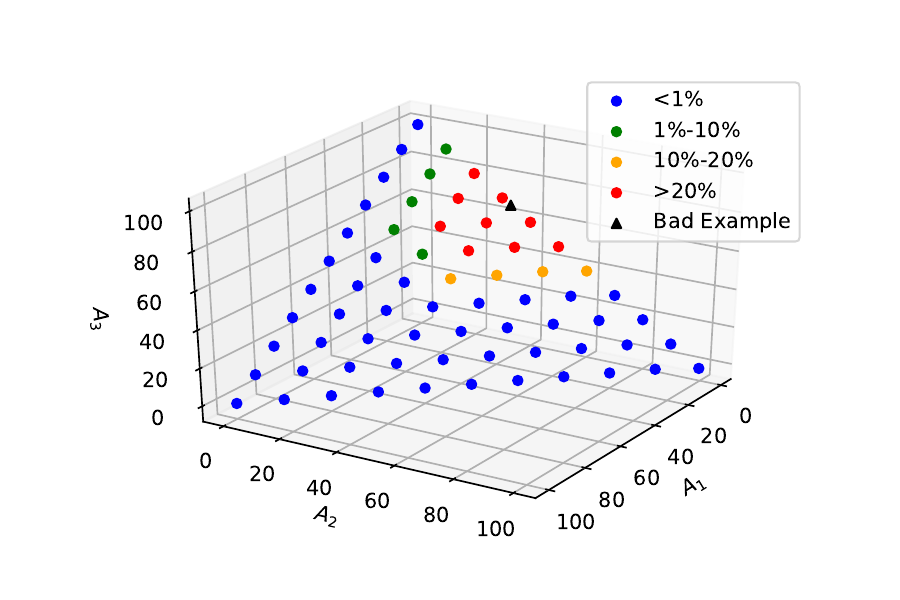}
		\caption{Separated fares: $F = \{1,10,100\}$}
	\end{subfigure}%
	\caption{Relative sub-optimality $RS(A)$ of protection level policies (Capacity $n=100$). The black triangular point denotes the advice $A = (1,33,99)$, which is an instantiation of Example~\ref{example:bad}.}
	\label{fig:grid-suboptimal}
\end{figure}

We evaluated $RS(A)$ for the grid of advice composed of all advice $A$ that satisfy $A_i\ (mod)\ 10 = 0$\footnote{With the following modification: whenever $A_1 = 0$, we  we set $A_1 = 1$ and reduce $A_3$ by 1. {\newsanti This change is introduced to make the advice} more similar to the bad example of Subsection~\ref{subsec:protec-level-suboptimal}}. We classified this grid of advice based on relative sub-optimality, represented by the different colors in Figure~\ref{fig:grid-suboptimal}. We found that the relative sub-optimality is less than $0.01$ for a large majority of the advice. Moreover, we also found that the relative sub-optimality is high for advice that makes the optimal protection level policy overcommit (see Subsection~\ref{subsec:protec-level-suboptimal}). To demonstrate this, we also plot the bad example (Example~\ref{example:bad}) of Subsection~\ref{subsec:protec-level-suboptimal} which is designed to make the optimal protection level policy overcommit. We found that the relative sub-optimality increases with the proximity to this example. Moreover, we performed this evaluation of the relative sub-optimality for two sets of fares: (i) Fares that are close to each other, as represented by $F = \{1,2,4\}$; (ii) Fares that are well-separated, represented by $F = \{1,10, 100\}$ with $c(F) = 0.357$. We found that the sub-optimality increases with the separation of the fares, but not too greatly. Finally, we also found that, across all the values of advice and fares we considered, the maximum value of the relative sub-optimality was always less than $1/3$. Our experiments suggest that the optimal protection level policy performs remarkably well on most advice, only exhibiting sub-optimality when the advice compels it to overcommit. In the next section, we continue with the comparison between the two policies with the added dimension of robustness.

\subsection{Robustness}

\begin{figure}[t!]
	\begin{subfigure}{0.33\textwidth}
		\includegraphics[width = \linewidth]{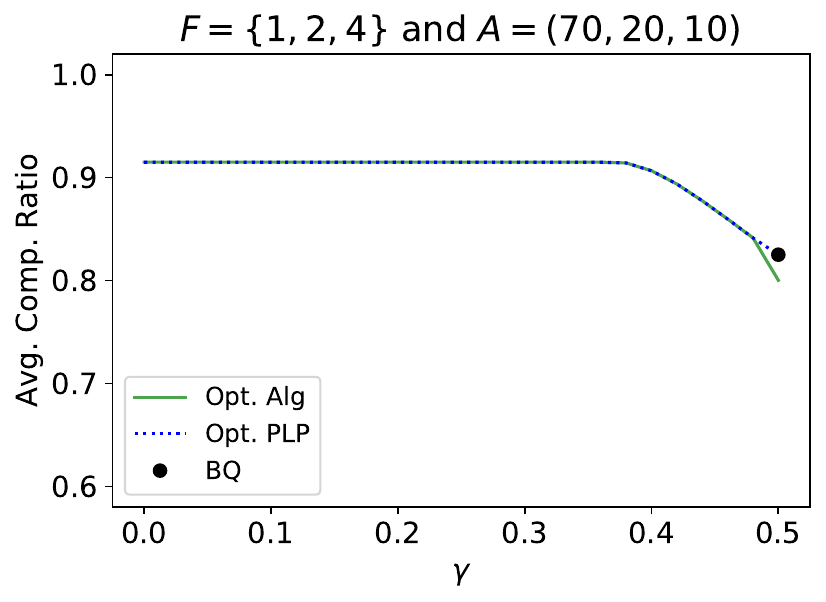}
	\end{subfigure}%
	\begin{subfigure}{0.33\textwidth}
		\includegraphics[width = \linewidth]{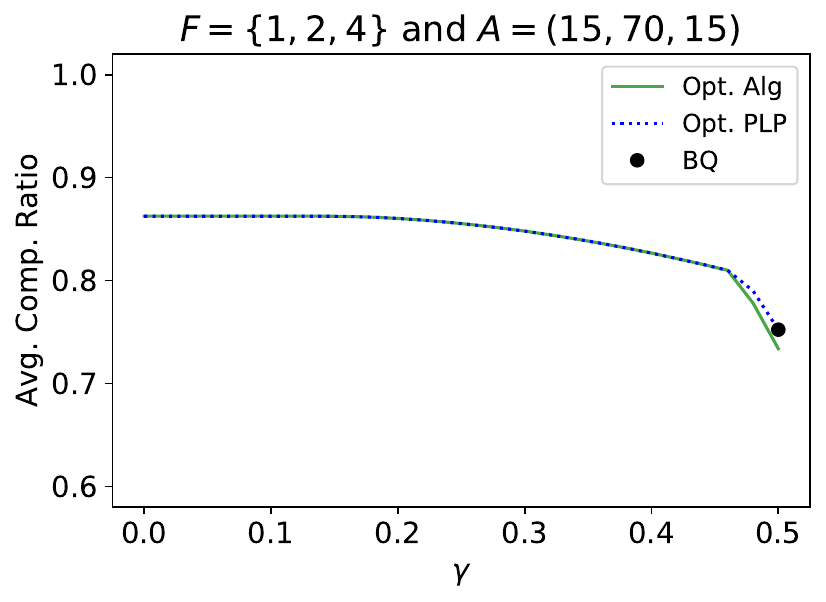}
	\end{subfigure}%
	\begin{subfigure}{0.33\textwidth}
		\includegraphics[width = \linewidth]{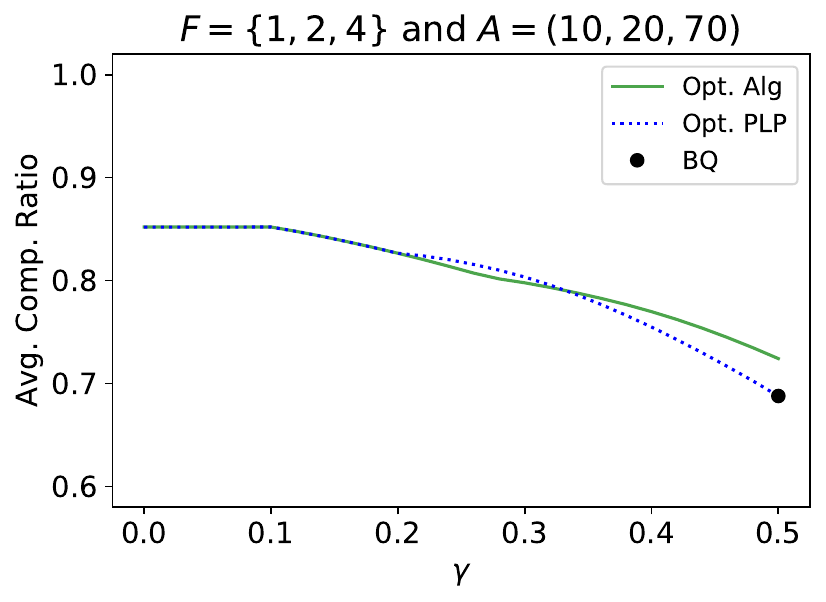}
	\end{subfigure}
	\caption{Average competitive ratio when the instances are drawn from a distribution with $v = 50\%$ relative standard deviation centered on the advice.}
	\label{fig:rob-gamma}
\end{figure}

\begin{figure}[t!]
	\begin{subfigure}{0.33\textwidth}
		\includegraphics[width = \linewidth]{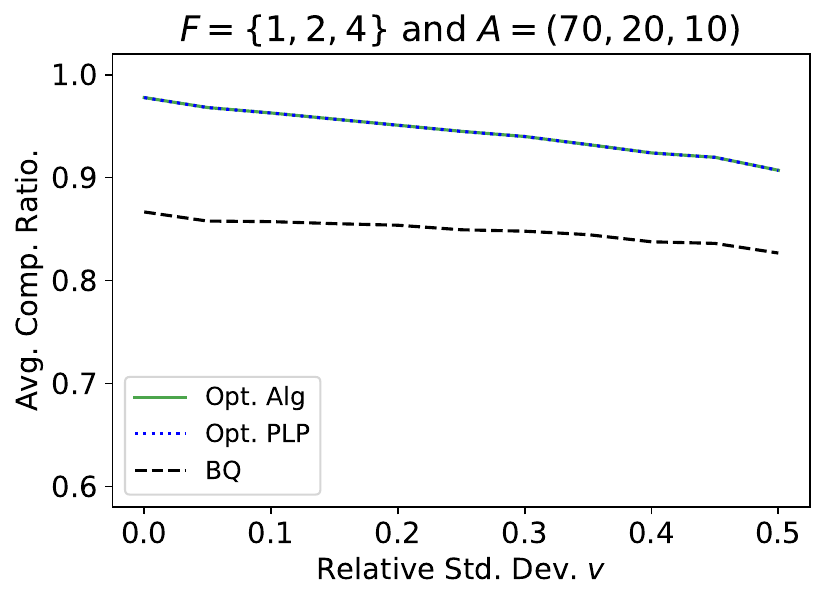}
	\end{subfigure}%
	\begin{subfigure}{0.33\textwidth}
		\includegraphics[width = \linewidth]{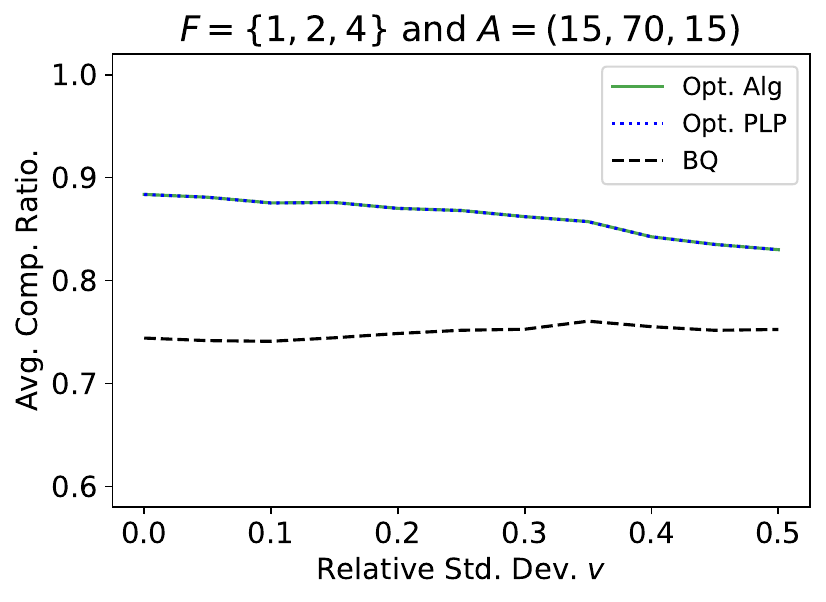}
	\end{subfigure}%
	\begin{subfigure}{0.33\textwidth}
		\includegraphics[width = \linewidth]{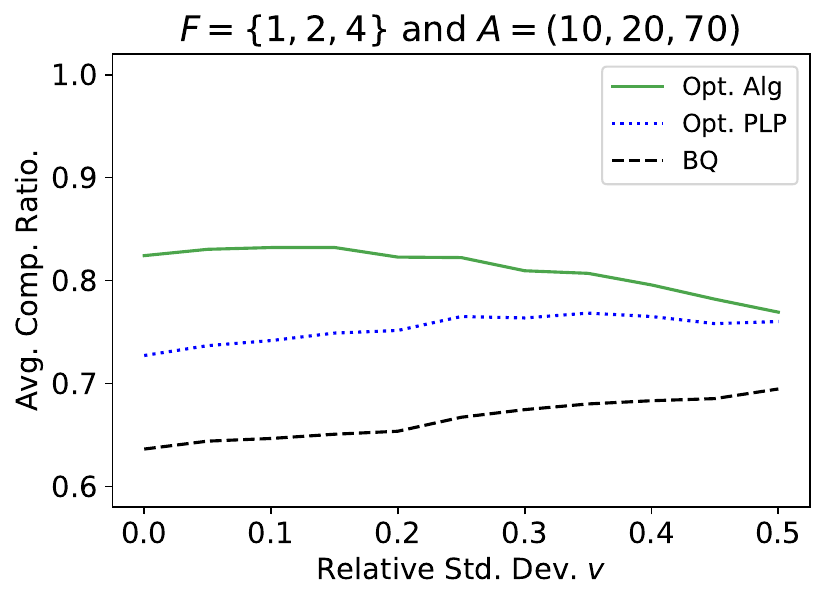}
	\end{subfigure}
	\caption{Average competitive ratio as a function of the noise in the instance-generating process as captured by the relative standard deviation $v$. Here $\gamma = 0.4$ in all plots.}
	\label{fig:v-gamma}
\end{figure}

In this subsection, we evaluate the robustness of the LP-based optimal algorithm (Algorithm~\ref{alg:opt-alg}) and the optimal protection level policy (Algorithm~\ref{alg:opt-protec}) in a variety of settings. To do so, given advice $A$, we generated random instances centered on the advice $A$ as follows: (i) For each fare class $f_i$ with $i \geq 2$, we draw $x_i \sim N(\mu_i, \sigma_i)$, where $\mu_i = A_i$ and $\sigma_i = v \cdot A_i$ for some coefficient of variation $v \in [0,1)$ (also called relative standard deviation); (ii) To get a non-negative integer value, we set $y_i = \max\{\text{floor}(x_i), 0\}$ for $i \geq 2$, and set $y_1 = 100$ to make sure that the capacity can be filled; (iv) We string together $y_i$ fares of type $f_i$ in increasing order to construct an instance $I$. We fix the fares to be $F = \{1,2,4\}$ (which implies $c(F) = 0.5$) and the capacity to be $n = 100$. Moreover, we consider three different advice given by $\{(70, 20,10), (15,70,15), (10,20,70)\}$. In our first experiment on robustness, we set the coefficient of variation $v = 0.5$ and drew 1000 instances for each advice. We computed the average competitive ratio of the algorithms on these instances as a function of $\gamma$ (Algorithm~\ref{alg:opt-alg} and Algorithm~\ref{alg:opt-protec} were given the advice as the input). The results are shown in Figure~\ref{fig:rob-gamma}. As the figure shows, even when the instances have $50\%$ relative noise compared to the advice, our algorithms continue to perform better than the policy of \citet{ball2009toward}. Moreover, we find the difference in performance between the LP-based optimal policy and the optimal protection level policy reduces in the presence of noise even for advice that makes the optimal protection level policy overcommit in the absence of noise. We note that, for advice $A = (70,20,10)$ and $A = (15,70,15)$, the performance of the LP-based optimal algorithm is identical to that of the optimal protection level policy because the optimal protection level policy is optimal in the absence of noise (see Figure~\ref{fig:sim-opt-vs-protec}), which leads \ref{LP} to return a solution that mimics the optimal protection levels. In our second experiment on robustness, we fixed $\gamma = 0.4$ and changed the noise in the instances by altering the coefficient of variation. In Figure~\ref{fig:v-gamma}, we plot the average competitive ratio (averaged over 1000 instances sampled using the aforementioned procedure) as a function of the coefficient of variation. The results indicate that the performance of our algorithms degrades gracefully with the noise in the instance generating process. It is important to note that we did not use the robust version of our LP-based optimal algorithm described in Section~\ref{sec:robustness}, which boasts stronger theoretical robustness guarantees.

\section{Conclusion and Future Work}

In this work, we look at the single-leg revenue management problem through the lens of Algorithms with Advice. We describe an optimal procedure for incorporating advice using an LP-based algorithm (Algorithm~\ref{alg:opt-alg}). Moreover, we also provide an efficient method for optimally incorporating advice into protection level policies, which is the most widely-used class of algorithms in practice. Our algorithms are simple, efficient and easy to implement. We believe that our optimal protection level policy should be relatively straightforward to implement on top of the existing infrastructure of airlines and hotels, which heavily rely on protection level policies. We hope that the algorithmic insights developed in this work can make these systems more robust to prediction errors.

We leave open the question of incorporating advice that dynamically changes with time. In practice, our algorithms can be repeatedly used with the updated advice on the remaining capacity, but our results do not capture their performance under arbitrary dynamic advice. Like many of the previous works on algorithms with advice, we also consider deterministic advice. This is motivated by the fact that most machine-learning models make point predictions. Our robustness guarantees allow us to bound the performance for distributions centered on the advice, but it would be interesting to analyze distributional advice, especially one which is updated in a Bayesian manner across time. Furthermore, we leave extensions of our results to network revenue management problems with multiple resources for future work.

\singlespace

\bibliographystyle{plainnat}
\bibliography{refs}

\appendix

\revision{
\section{Application to the Online Knapsack Problem}\label{appendix:knapsack}

Consider an online knapsack problem where the possible weights are $1 = w_1 < w_2 < \dots < w_a$ with $w_j \in \mathbb N$ and the possible item rewards are $r_1 < \dots < r_b$.  Define the corresponding set of possible fares $F = \{r_i/w_j \mid i \in [b];\ j \in [a]\}$. Extending our definition of advice for the single-leg revenue management problem, we will assume that the advice $A$ for the online knapsack problem specifies the total number of the different types of items that are predicted to form the optimal solution. Construct the corresponding advice $A'$ for single-leg revenue management problem as follows: If $k$ units of item $(r_i, w_j)$ are predicted to be a part of the optimal solution in $A$, then $k \cdot w_j$ customers with fare $r_i/w_j$ are predicted to be in $A'$. Moreover, given an instance $I$ for the online knapsack problem, construct an instance for the single-leg revenue management problem as follows: Iterate over $I$ and for each item $(r_i, w_j)$ in the instance $I$, add $w_j$ customers with fare $r_i/w_j$ to $I'$. These constructions ensure that $Opt(I) =Opt(I')$ and $Opt(A) = Opt(A')$. Therefore, it is easy to see that when we run our algorithm on $I'$ with advice $A'$ and translate the decisions to online knapsack by accepting the corresponding fraction of each item, we achieve the same revenue bounds for the online knapsack problem.

\section{Rounding Error}\label{appendix:rounding-error}

We start by observing that whenever a protection level policy based on protection levels $Q$ fractionally accepts a customer $s_t = f_p$, there is a protection level $Q_i$, for some $i \geq p$, which becomes tight, i.e., the protection level policy no longer accepts a non-zero fraction of customers with fare $f_i$ or less. Since there are $m$ protection levels, this implies that each protection level policy fractionally accepts at most $m$ customers. Consequently, Algorithm~\ref{alg:opt-protec} fractionally accepts at most $m$ customers. Moreover, as Algorithm~\ref{alg:opt-alg} switches between at most 2 protection levels, it fractionally accepts at most $2m$ customers.

Our algorithms can be simply modified to make binary accept reject decisions at the cost of only a minor degradation in performance. We show this for Algorithm~\ref{alg:opt-alg}, a similar argument holds for Algorithm~\ref{alg:opt-protec}. Fix capacity $n$, advice $A$, required level of competitiveness $\gamma \in [0, c(F)]$ and instance $I$. For modified capacity $n' \leq n$,
\begin{itemize}
	\item let $Alg(I, n')$ denote the performance of Algorithm~\ref{alg:opt-alg} on instance $I$ when the input capacity is $n'$, required level of competitiveness is $(n'/n) \cdot \gamma$, and we leave the advice $A$ unchanged (In particular, $n$ changes to $n'$ and $\gamma$ changes to $(n'/n) \cdot \gamma$ in \ref{LP}, but $A_i$ remain the same in the algorithm),
	\item let $Opt(I, n')$ denote the optimal total reward on instance $I$ when the capacity is $n'$.
\end{itemize}

Now, consider the following modification of Algorithm~\ref{alg:opt-alg} that makes binary decisions: Run Algorithm~\ref{alg:opt-alg} with a modified input capacity $n'=n - 2m$, and whenever it accepts a non-zero fraction of a customer, completely accept that customer. Given the modified capacity of $n- 2m$ and our earlier observation that Algorithm~\ref{alg:opt-alg} fractionally accepts at most $2m$ customers, we get that this modified algorithm accepts at most $n - 2m + 2m = n$ customers.
Finally, observe that the average of the top $n - 2m$ fares is larger than the average of the top $n$ highest fares, and consequently, we get $Opt(I, n-2m)/(n-2m) \geq Opt(I,n)/n$. Letting $Alg'(I)$ denote the total reward of the modified algorithm on instance $I$, we can bound the competitive ratio of the modified algorithm from below as follows:
\begin{align*}
	\frac{Alg'(I)}{Opt(I,n)} &\geq \frac{Alg(I, n-2m)}{Opt(I,n)} \geq \frac{n - 2m}{n}\cdot\frac{Alg(I, n-2m)}{Opt(I,n-2m)} \geq \frac{Alg(I, n-2m)}{Opt(I,n-2m)} - \frac{2m}{n}\,.
\end{align*}

It is straightforward to see that any solution $(\beta, x, y(k))$ to \ref{LP} with capacity $n$ can be rescaled to obtain a solution $\{(n-2m)/n\} \cdot (\beta, x,y(k))$ for \ref{LP} with capacity $n-2m$ and required level of competitiveness $\{(n-2m)/n\} \cdot \gamma$. Therefore, Theorem~\ref{thm:lower-bound} implies
\begin{align*}
	 \frac{Alg(I, n-2m)}{Opt(I,n-2m)} \geq \frac{n-2m}{n} \cdot \frac{Alg(I,n)}{Opt(I,n)} \geq \frac{Alg(I,n)}{Opt(I,n)} - \frac{2m}{n} \quad \forall\ I \in S(A)\,.
\end{align*}
Moreover, for all other instances $I$, we have 
\begin{align*}
	\frac{Alg(I, n-2m)}{Opt(I,n-2m)} \geq \frac{n-2m} n \cdot \gamma \geq \gamma - \frac {2m} n\,.
\end{align*}
Hence, the consistency and competitiveness of the modified algorithm are at most $4m/n$ less than the corresponding quantities for Algorithm~\ref{alg:opt-alg}.
}
\section{Missing Proofs of Section~\ref{sec:pareto}}

We begin by proving that every protection level policy earns the least amount of revenue when the customers arrive in increasing order of fares. This fact guides our intuition and we use it repeatedly throughout the paper.

\begin{lemma}\label{lemma:protec-inc-order}
	Let $I$ be an instance and let $\tilde{I}$ be the instance obtained by the reordering the customers of $I$ to arrive in increasing order. Then, for any protection level policy, the revenue earned from $I$ is greater than or equal to the revenue earned from $\tilde{I}$.
\end{lemma}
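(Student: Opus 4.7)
My plan is to reduce the lemma to a comparison of the cumulative acceptance profiles. For any ordering $J$ of the fixed multiset of customers, let $a_i(J)$ be the fraction of fare-$f_i$ customers accepted by the protection-level policy and $S_k(J) := \sum_{i \leq k} a_i(J)$ the cumulative count of customers with fare at most $f_k$ accepted. Let $C_i$ denote the count of fare-$f_i$ customers in $I$ (invariant under reordering). First I would establish the uniform upper bound
\begin{align*}
    S_k(J) \;\leq\; T_k \;:=\; \min_{0 \leq k' \leq k}\Bigl\{ Q_{k'} + \sum_{k' < i \leq k} C_i \Bigr\} \qquad (Q_0 := 0),
\end{align*}
which holds for every $J$ and every $k$ simply by splitting $S_k(J) = \sum_{i \leq k'} a_i(J) + \sum_{k' < i \leq k} a_i(J)$ and invoking feasibility ($\sum_{i \leq k'} a_i(J) \leq Q_{k'}$) and availability ($a_i(J) \leq C_i$). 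Conversely, on the sorted instance $\tilde I$ a direct induction using $a_k(\tilde I) = \min(C_k, Q_k - S_{k-1}(\tilde I))$ shows $S_k(\tilde I) = T_k$, so the bound is tight in sorted order.

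The second step is to show that the total accept is invariant under reordering, i.e.\ $S_m(J) = T_m$ for every $J$. The upper bound $S_m(J) \leq T_m$ is immediate. For the matching lower bound, I would argue that the policy always outputs a \emph{maximal} feasible solution to the LP $\max \sum_i a_i$ subject to $\sum_{i \leq k} a_i \leq Q_k$ and $0 \leq a_i \leq C_i$: whenever $a_i(J) < C_i$ some fare-$f_i$ customer was partially rejected, which forces some $Q_{k^*}$ with $k^* \geq i$ to be saturated at that instant; because each $q_{k^*}$ is non-decreasing in time, the equality $\sum_{j \leq k^*} a_j(J) = Q_{k^*}$ persists to the end and blocks any subsequent increase of $a_i(J)$. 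Exploiting the laminar structure of the constraint family, I would then show that every maximal feasible solution has total value $Q_{k_0} + \sum_{j > k_0} C_j$ for some $k_0$ (taking $k_0$ to be the largest binding level and noting that all higher fares must be fully accepted), hence is at least $T_m = \min_k\{Q_k + \sum_{j > k} C_j\}$; combined with the upper bound this yields $S_m(J) = T_m$.

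Finally, I would conclude via Abel summation. Writing $a_i = S_i - S_{i-1}$ with $S_0 := 0$,
\begin{align*}
    R(I) - R(\tilde I) \;=\; \sum_{i=1}^m f_i\bigl(a_i(I) - a_i(\tilde I)\bigr) \;=\; f_m\bigl(S_m(I) - S_m(\tilde I)\bigr) + \sum_{k=1}^{m-1}(f_k - f_{k+1})\bigl(S_k(I) - S_k(\tilde I)\bigr).
\end{align*}
By invariance of the total count the first term vanishes, while for each $k$ in the sum $f_k - f_{k+1} < 0$ (fares strictly increasing) and $S_k(I) - S_k(\tilde I) \leq 0$ (sorted attains the tight upper bound), so every summand is a product of two non-positive reals, hence non-negative. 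This yields $R(I) \geq R(\tilde I)$, as desired. The main obstacle is the invariance of the total count: the upper bound is trivial, but showing that the greedy protection-level policy always attains it requires the maximality-to-optimality argument tailored to the laminar structure of the protection-level constraints, which is what ultimately ensures that no reordering can leave capacity underutilized.
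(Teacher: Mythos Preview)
Your proof is correct and takes a genuinely different route from the paper's. The paper argues by adjacent transposition: it fixes a protection level policy $Q$, takes an instance $I$ with two adjacent customers $s_a = f_\ell > f_k = s_{a+1}$ out of order, swaps them to obtain $I'$, and shows via a four-case analysis on the residual capacities $r_k \le r_\ell$ at time $a$ that the revenue on $I$ is at least that on $I'$; repeating the swap until the sequence is sorted gives the lemma. Your argument is instead global and structural: you work with the cumulative accepted counts $S_k(\cdot)$, pin down the exact value $S_k(\tilde I)=T_k$ in sorted order, establish the order-invariance $S_m(J)=T_m$ of the total accepted count via a maximality/laminarity argument, and finish with Abel summation.

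The transposition proof is more local and avoids the separate lemma that $S_m$ is order-invariant, at the price of a somewhat tedious case split. Your approach is cleaner once the pieces are in place and yields more: the closed-form identity $S_k(\tilde I)=\min_{k'\le k}\{Q_{k'}+\sum_{k'<i\le k}C_i\}$ and the fact that the protection-level policy always accepts the same total mass regardless of order. The only place that needs a little extra care in your write-up is the maximality step---you should make explicit the degenerate case $k_0=0$ (no protection level binds, so every customer is fully accepted and $S_m(J)=\sum_i C_i\ge T_m$), and note that $Q_k\ge Q_{k-1}\ge S_{k-1}(\tilde I)$ so the recursion $a_k(\tilde I)=\min(C_k,Q_k-S_{k-1}(\tilde I))$ is well defined. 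Both are routine.
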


\begin{proof}
	Consider a protection level policy with protection levels $Q$. If $I$ has customers arriving in increasing order, then $I = \tilde{I}$ and the lemma holds trivially. Suppose $I = \{s_t\}_{t=1}^T$ and there exists $a < T$ and $k,\ell \in [m]$ such that $s_a = f_\ell > f_k = s_{a+1}$. Define another instance $I' = \{s'_t\}_{t=1}^T$ in which $s_a$ and $s_{a+1}$ are flipped, i.e., $s'_a = s_{a+1}$, $s'_{a+1} = s_a$ and $s'_t = s_t$ for all $t \notin \{a, a+1\}$. To prove the lemma, it suffices to show that the protection level policy $Q$ obtains (weakly) more revenue from $I$ than $I'$. Note that the instances are identical till $t = a-1$.  For all $t \leq a-1$, let $x_t$ denote the fraction of customer $t$ accepted by the protection level policy $Q$ on instance $I$ and $I'$. Moreover, define
	\begin{align*}
		r_j = \min_{k \geq j} \left\{ Q_k - \sum_{i=1}^{a-1} x_i\mathds{1}(s_i \leq f_k) \right\}
	\end{align*}
	to be the amount of a customer with fare $f_j$ that can be accepted at time $a$ without violating any of the protection levels. It is easy to see that $r_k \leq r_\ell$ because $k < \ell$. Next, consider the following cases:
	\begin{itemize}
		\item $r_\ell < 1$ : Then, $Q$ accepts $r_\ell$ fraction of $s_a$ and does not accept any fraction of $s_{a+1}$ in instance $I$. Moreover, it accepts at most $r_\ell$ in total from both $s'_a$ and $s'_{a+1}$ in instance $I'$.
		\item $r_\ell \geq 2$: Then, $Q$ accepts all of $s_a$ in $I$ and all of $s'_{a+1}$ in $I'$. Moreover, it accepts $\min\{1, r_k\}$ fraction of $s_{a+1}$ in $I$ and $s'_a$ in $I'$.
		\item $1 \leq r_\ell <2$ and $r_k \leq r_\ell - 1$: Then, $Q$ accepts all of $s_a$ and $r_k$ fraction of $s_{a+1}$ in instance $I$. Moreover, it accepts at most $r_k$ fraction of customer $s'_a$ and all of customer $s'_{a+1}$ in instance $I'$.
		\item $1 \leq r_\ell <2$ and $r_k > r_\ell - 1$: Then, $Q$ accepts all of $s_a$ and $r_\ell - 1$ fraction of $s_{a+1}$ in instance $I$. Moreover, it accepts at most $\min\{1, r_k\}$ fraction of customer $s'_a$ and at most $r_\ell - \min\{1, r_k\}$ fraction of customer $s'_{a+1}$ in instance $I'$.
	\end{itemize}
	Since $f_k < f_\ell$, it is easy to see that $Q$ earns more revenue from $I$ than $I'$ in all these cases.
\end{proof}

\revision{
\subsection{Proof of Theorem~\ref{thm:lower-bound}}\label{appendix:opt-alg}

In this section, we analyze Algorithm~\ref{alg:opt-alg}, and show that it is $\beta^*$-consistent on advice $A$ and $\gamma$-competitive. Let $\tau$ denote the first time step $t$ in which $\sum_{i=j}^h r_i > \sum_{i=j}^h \{A_i - x_i\}$ for some $j \geq \ell + 1$ in Step 2, i.e., the `If' condition in Step 2 is satisfied and the trigger changes to $\delta =1$. Set $\tau = \infty$ if the condition is never satisfied. Let $q_j^*$, and $r_j^*$ denote the value $q_j$ and $r_j$ respectively at the end of time $T$. Moreover, let $q(\tau)_i$ and $r(\tau)_i$ denote the value of $q_i$ and $r_i$ respectively at the beginning of time $\tau$. We will use $a_j^*$ and $a(\tau)_j$ to denote the total number of customers with fare $f_j$ that arrive by the end of time $T$ and the start of time $\tau$ respectively. The following lemma notes some important properties of the protection levels $R(k)$ and establishes that the `While' loop in step 2 of Algorithm~\ref{alg:opt-alg} always terminates with an index $k \geq s$.

\begin{lemma}
	Algorithm~\ref{alg:opt-alg} satisfies the following properties:
	\begin{enumerate}
		\item $Q_j' \leq R(k)_j$ for all $j \leq k$.
		\item The `If' condition in step 2 is only satisfied at time $\tau$.
		\item The value of $k$ increases by at least 1 in every iteration of the `While' loop in step 2.
		\item If $\tau \leq T$, then $q(\tau)_j \leq R(k^*)_j$ for all $j \in [m]$.
	\end{enumerate}
\end{lemma}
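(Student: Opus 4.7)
The plan is to verify the four properties in order, drawing only on the definitions of $Q'$ and $R(k)$ and the control flow of Algorithm~\ref{alg:opt-alg}. For property~1, observe that for $j \leq k$,
\[
R(k)_j - Q'_j \;=\; \sum_{i=1}^j y(k)_i \;\geq\; 0,
\]
since feasibility of \ref{LP} requires $y(k)_i \geq 0$. Property~2 is an immediate inspection of the control flow: at time $t = \tau$ the trigger $\delta$ is set to $1$, and the `$\delta = 0$' clause in the `If' condition of step~2 prevents the condition from being satisfied at any later time.

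The substance of the lemma lies in properties~3 and~4, and the key structural observation is that \emph{before} time $\tau$ the algorithm acts as the protection level policy with protection levels $Q'$ via step~3 of the For loop. Consequently $q(\tau)_j \leq Q'_j$ for every $j \in [m]$, and combining this with property~1 yields $q(\tau)_j \leq R(k)_j$ for every $j \leq k$, for any value of $k$ that may appear during the inner loop. Now suppose the inner loop does not exit via clause~(a) at its current value of $k$; then there is some index $j$ with $q_j > R(k)_j$, but by the preceding inequality no such $j$ can satisfy $j \leq k$. Hence the minimum such $j$, which becomes the new value of $k$ in clause~(b), strictly exceeds the old value, establishing property~3.

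Property~3 implies that the inner loop terminates in at most $m$ iterations, since $k$ strictly increases within $[m]$. On termination, the exit is via clause~(a), whose truth condition is precisely $q(\tau)_j \leq R(k^*)_j$ for every $j \in [m]$, giving property~4. I expect the main subtlety to be the invariant $q(\tau)_j \leq Q'_j$ used in the argument above: formally it relies on property~2 applied to all times prior to $\tau$, which ensures that the `$\delta = 0$' branch of step~3 is the one executed throughout that phase, and hence that the accumulated acceptances $q$ are precisely those of the protection level policy with levels $Q'$.
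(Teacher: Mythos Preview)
Your proof is correct and follows essentially the same approach as the paper: property~1 is read off from the definition of $R(k)$ and the nonnegativity of $y(k)_i$, property~2 from the $\delta=0$ clause, and properties~3 and~4 from the invariant $q(\tau)_j \le Q'_j$ combined with property~1, exactly as in the paper's argument.
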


\begin{proof}
	Property (1) follows directly from the definition of protection level $R(k)$.
	Property (2) follows from the fact that once the trigger $\delta$ is set to be 1, then the `If' condition in step 2 is never satisfied.
	Property (3) follows from the fact that Property (1), in conjunction with $q(\tau)_j \leq Q_j'$ for all $j \in [m]$, implies that the smallest index $i$ that satisfies $q(\tau)_i > R(k)_i$ is strictly greater than $k$ for all $k \in [m]$. Hence, the value of $k$ increases by at least 1 in every iteration of the loop in step~2.
	Property (4) follows because the termination condition of the loop in step~2 implies that $q(\tau)_j \leq R(k^*)_j$ for all $j \in [m]$.
\end{proof}

Before moving onto the consistency and competitiveness properties of Algorithm~\ref{alg:opt-alg}, we note a useful algebraic manipulation that will find repeated use in our proof.

\begin{lemma}\label{lemma:rev-rewrite}
	For any collection of $m$ real numbers $\{\lambda_1, \lambda_2, \dots, \lambda_m\}$, the following statement holds for all $1 \leq k \leq i \leq m$:
	\begin{align*}
		\sum_{j=k+1}^i (\lambda_j - \lambda_{j-1}) \cdot (f_j - f_k) = \sum_{j=k+1}^i (\lambda_i - \lambda_{j-1}) \cdot (f_j - f_{j-1})\,.
	\end{align*}
\end{lemma}
\begin{proof}
	The lemma follows from changing the order of the sums:
	\begin{align*}
		\sum_{j=k+1}^i (\lambda_j - \lambda_{j-1}) \cdot (f_j - f_k) &= \sum_{j=k+1}^i \sum_{p=k+1}^j (\lambda_j - \lambda_{j-1}) \cdot (f_p - f_{p-1})\\
		&= \sum_{p=k+1}^i \sum_{j=p}^i (\lambda_j - \lambda_{j-1}) \cdot (f_p - f_{p-1})\\
		&= \sum_{p=k+1}^i (\lambda_i - \lambda_{p-1}) \cdot (f_p - f_{p-1})\,,
	\end{align*}
	as required.
\end{proof}

The next lemma shows that the trigger is never invoked as long as the advice can potentially be realized.

\begin{lemma}\label{lemma:better-trigger}
	On any instance $I = \{s_t\}_{t=1}^T$ such that $a^*_j = \sum_{t=1}^T \mathds{1}(s_t = f_j) \leq A_j$ for all $j \geq \ell+1$, we have $\delta =0$ throughout the run of Algorithm~\ref{alg:opt-alg}.
\end{lemma}
\begin{proof}
	For contradiction, suppose $\tau \leq T$ and $s_\tau = f_p$ for some $p \geq \ell+1$. For $w^* = \max\{w \in [0,1] \mid q(\tau)_j + w \leq Q_j'\ \forall j \geq p\}$, let $h \geq p$ be the largest index such that $q(\tau)_h + w^* = Q'_h$. Such an index always exists because $w^*< 1$ by definition of $\tau$. Moreover, the definition of $\tau$ also implies the existence of an index $g \in [\ell+1, p]$ such that $(1 - w^*) + \sum_{i=g}^h r(\tau)_i > \sum_{i=g}^h \{A_i - x_i\}$. Since $q(\tau)_{g-1} \leq Q'_{g-1}$ by definition of protection level policies, we get that the algorithm accepted at least $q(\tau)_h - q(\tau)_{g-1} \geq Q'_h - w^* - Q'_{g-1}$ customers with fares $\{f_g ,\dots, f_h\}$ before time $\tau$. Since the $\tau$-th customer has fare $f_p$ such that $g \leq p \leq h$, we get that the total number of customers with fare $\{f_g ,\dots, f_h\}$ that arrive in $I$ is at least $1 + Q'_h - w^* - Q'_{g-1} + \sum_{i=g}^h r(\tau)_i$. Moreover, using $Q'_h - Q'_{g=1} = \sum_{i=g}^h x_i$, we can bound this from below:
	\begin{align*}
		Q'_h - Q'_{g-1} + (1 - w^*) + \sum_{i=g}^h r(\tau)_i > \sum_{i=g}^h x_i + \sum_{i=g}^h \{A_i - x_i\} = \sum_{i=g}^h A_i \,,
	\end{align*}
	which contradicts $a^*_j \leq A_j$ for all $j \in [g,h]$. Therefore, we have $\tau = \infty$ and $\delta = 0$ throughout the run of Algorithm~\ref{alg:opt-alg} on $I$.
\end{proof}

Next, we show that, when the instance conforms to the advice, Algorithm~\ref{alg:opt-alg} achieves at least a $\beta^*$ fraction of $Opt(A)$ in terms of revenue.

\begin{lemma}\label{lemma:alg-consistent}
	On any instance $I \in S(A)$, Algorithm~\ref{alg:opt-alg} achieves a revenue that is greater than or equal to $\beta^* \cdot Opt(A)$.
\end{lemma}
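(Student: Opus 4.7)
The central observation is that for any $I \in S(A)$, the trigger in step~2 of Algorithm~\ref{alg:opt-alg} is never activated: since $I$ conforms to the advice, the count of fare-$f_p$ customers satisfies $a_p \le A_p$ at all times for every $p \geq \ell + 1$, so the condition $a_p > A_p$ can never be met. Hence the algorithm stays in phase~1 throughout and simply runs the protection level policy with protection levels $Q'_i = \sum_{j \leq i} x_j$ on all of $I$. By Lemma~\ref{lemma:protec-inc-order}, it suffices to lower bound the revenue of this protection level policy on the reordering $\tilde I$ of $I$ in increasing order of fares.

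Let $K(j)$ denote the number of fare-$f_j$ customers in $I$, let $z_j$ be the number accepted of fare $f_j$ on $\tilde I$, and set $Z_j = \sum_{i \leq j} z_i$. A direct analysis of greedy acceptance on an increasing-order instance yields the recursion $Z_j = \min(Z_{j-1} + K(j),\, Q'_j)$. The crux of the proof is to show the cumulative capacity is eventually saturated, i.e., $Z_m = Q'_m$. For this I introduce the underfill $U_j \coloneqq Q'_j - Z_j \geq 0$, which the recursion rewrites as $U_j = \max\bigl(0,\, U_{j-1} + x_j - K(j)\bigr)$.

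For $j > \ell$, membership $I \in S(A)$ pins $K(j) = A_j$, and LP constraint~\eqref{constraint:x-bound} gives $x_j \le A_j$, so $U_j = \max(0,\, U_{j-1} - (A_j - x_j))$ with $A_j - x_j \ge 0$. Iterating this one-sided decrement collapses to $U_m = \max\bigl(0,\, U_\ell - \sum_{j > \ell}(A_j - x_j)\bigr)$. For $j = \ell$, the inequality $K(\ell) \geq A_\ell$ combined with the trivial bound $U_{\ell-1} \leq Q'_{\ell-1}$ gives $U_\ell \leq \max(0,\, Q'_\ell - A_\ell)$. A brief case split on the sign of $Q'_\ell - A_\ell$ then certifies $U_\ell \leq (n - A_\ell) - (Q'_m - Q'_\ell) = \sum_{j > \ell}(A_j - x_j)$; both cases rely on the LP capacity constraint~\eqref{constrant:capacity} (which forces $Q'_m \leq n$) together with $x_j \leq A_j$ for $j > \ell$. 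Substituting back yields $U_m = 0$, i.e., $Z_m = Q'_m$.

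Given the saturation $Z_m = Q'_m$, Abel summation gives
\[
\sum_{j=1}^m f_j z_j \;=\; f_m Z_m - \sum_{j=1}^{m-1}(f_{j+1} - f_j)\, Z_j \;\geq\; f_m Q'_m - \sum_{j=1}^{m-1}(f_{j+1} - f_j)\, Q'_j \;=\; \sum_{j=1}^m f_j x_j,
\]
where the inequality uses $Z_m = Q'_m$ together with $Z_j \leq Q'_j$ against the positive weights $f_{j+1} - f_j > 0$. LP constraint~\eqref{constraint:consistent} then yields $\sum_j f_j z_j \geq \beta^* \cdot Opt(A)$, as desired. The main obstacle is the saturation claim $Z_m = Q'_m$: a priori the greedy protection level policy can fail to fill its cumulative capacity when low-fare arrivals are scarce, and the underfill bookkeeping above is exactly what turns the structural guarantee $K(j) = A_j$ for $j > \ell$, provided by $I \in S(A)$, into full absorption of any such shortfall.
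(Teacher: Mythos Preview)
Your proof is correct and follows essentially the same skeleton as the paper's: show the trigger never fires on $I\in S(A)$, establish the saturation claim (that the protection levels $Q'$ are filled to the top, i.e., $Z_m=Q'_m$, which is exactly the paper's assertion $q^*_m-q^*_{j-1}\ge Q'_m-Q'_{j-1}$), and then telescope/Abel-sum against the positive weights $f_{j+1}-f_j$ to reach $\sum_j f_j x_j\ge \beta^* Opt(A)$ via \eqref{constraint:consistent}.

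The one genuine methodological difference is how the saturation claim is argued. The paper works directly on $I$ in its original order and asserts $q^*_m-q^*_{j-1}\ge Q'_m-Q'_{j-1}$ with only a one-line appeal to $x_i\le N_i$ and the membership $I\in S(A)$; the justification is terse and leaves the reader to reconstruct why no order can produce an underfill. You instead invoke Lemma~\ref{lemma:protec-inc-order} to pass to the increasing reordering $\tilde I$, where the recursion $Z_j=\min(Z_{j-1}+K(j),Q'_j)$ is exact, and then track the underfill $U_j$ explicitly. This buys you a fully self-contained verification of saturation at the cost of one extra reduction step; the paper's route is shorter but relies on a claim whose proof is only implicit. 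Both are valid and arrive at the same inequality before the final summation.
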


\begin{proof}
	Consider an instance $I \in S(A)$. Then, $a_j \leq A_j$ for all $j \geq \ell+1$ throughout the run of Algorithm~\ref{alg:opt-alg}. Therefore, Lemma~\ref{lemma:better-trigger} implies that Algorithm~\ref{alg:opt-alg} always accepts customers according to protection levels $Q^* =Q'$, i.e. $q_i \leq Q_i'$ for all $i \in [m]$ is satisfied at all times. Since $I \in S(A)$, we have $\sum_{t=1}^T \mathds{1}(s_t = f_i) = A_i\ \forall i \geq \ell+1$ and $\sum_{t=1}^T \mathds{1}(s_t = f_\ell) \geq A_\ell$. Combining this with the fifth constraint of the \ref{LP} ($x_i \leq N_i$ for all $i \in [m]$) implies that, for all $j \in [m]$, Algorithm~\ref{alg:opt-alg} accepts $q^*_m - q^*_{j-1} \geq Q_m' - Q_{j-1}'$ customers whose fare type lies in the set $\{f_{j}, \dots, f_m\}$. Observe that Algorithm~\ref{alg:opt-alg} accepts $q_j^* - q_{j-1}^*$ customers with fare $f_j$ for all $j \in [m]$. This allows us to establish the following lower bound on its revenue:
	\begin{align*}
		\sum_{j=1}^m f_j (q_j^* -q_{j-1}^*) &= \sum_{p=1}^m (q^*_m - q^*_{p-1})(f_p - f_{p-1}) \tag{Lemma~\ref{lemma:rev-rewrite}}\\
		&\geq \sum_{p=1}^m (Q'_m - Q'_{p-1})(f_p - f_{p-1})\\
		&= \sum_{j=1}^m (Q'_j - Q'_{j-1}) \cdot f_j  \tag{Lemma~\ref{lemma:rev-rewrite}}\\
		&= \sum_{j=1}^m x_j \cdot f_j\\
		&\geq \beta^* \cdot Opt(A) \tag{\ref{LP}}
	\end{align*}
	Thus, we have shown that, on any instance $I \in S(A)$, Algorithm~\ref{alg:opt-alg} achieves a revenue that is greater than or equal to $\beta^* \cdot Opt(A)$.
\end{proof}

Next, we establish $\gamma$-competitiveness, starting with instances which do not trigger the change in protection levels.

\begin{lemma}\label{lemma:alg-prefix}
	On any instance $I$ such that $\tau = \infty$, Algorithm~\ref{alg:opt-alg} achieves a revenue that is greater than or equal to $\gamma \cdot Opt(I)$.
\end{lemma}

\begin{proof}
	As $\delta = 0$ throughout the run of Algorithm~\ref{alg:opt-alg} on instance $I$, Algorithm~\ref{alg:opt-alg} always accepts customers according to protection levels $Q^* = Q'$. Let $i \in [m]$ be the largest index such that $q_i^* = Q'_i$ (If no such index exists, then Algorithm~\ref{alg:opt-alg} has accepted every customer and achieved the optimal total reward of $Opt(I)$). Then, Algorithm~\ref{alg:opt-alg} accepts every customer in $I$ with fare strictly greater than $f_i$, i.e., $r_j^* = 0$ for $j > i$ and receives a revenue of $\sum_{j=i+1}^m f_j a^*_j$ from customers lying in the set $\{f_{i+1}, \dots f_m\}$. For all $j \leq i$, it accepts $q^*_j - q^*_{j-1}$ customers from fare class $f_j$. Therefore, the total reward of the algorithm is given by
	\begin{align}\label{eqn:prefix-inter-1}
		Alg(I) \coloneqq \left\{\sum_{j=1}^i (q^*_j - q^*_{j-1}) \cdot f_j \right\} + \sum_{j=i+1}^m a_j^* \cdot f_j\,.
	\end{align}

	When $i > \ell$, $Opt(I)$ is composed of at most $a_j^*$ customers with fares $f_j$ for all $j \geq \ell+1$ and the remainder with fares less than or equal to $f_\ell$. Moreover, $a^*_j = q^*_j - q^*_{j-1} + r^*_j$ because, for each fare class, the total amount of customers accepted by the algorithm and the total amount of customers rejected by the algorithm together equal the total number of customers that arrived. Consequently, we get
	\begin{align*}
		Opt(I) &\leq n \cdot f_\ell + \sum_{j = \ell+1}^i  a_j^* \cdot (f_j - f_\ell) + \sum_{j=i+1}^m a^*_j \cdot (f_j - f_\ell)\\
		&\leq  n \cdot f_\ell + \sum_{j = \ell+1}^i  \left( (q^*_j - q^*_{j-1}) + r_j^* \right) \cdot (f_j - f_\ell) + \sum_{j=i+1}^m a^*_j \cdot (f_j - f_\ell)  \,.
	\end{align*}

	Since $\delta = 0$ throughout the run of the algorithm, the definition of $i$ implies that $\sum_{g=j}^i r_g^* \leq \sum_{g=j}^i \{A_g - x_g\}$ for all $j \in [\ell+1, i]$. This allows us to write
	\begin{align}\label{eqn:prefix-inter-2}
		Opt(I) &\leq  n \cdot f_\ell + \sum_{j = \ell+1}^i r_j^* \cdot (f_j - f_\ell) + \sum_{j = \ell+1}^i  (q^*_j - q^*_{j-1}) \cdot (f_j - f_\ell) + \sum_{j=i+1}^m a^*_j \cdot (f_j - f_\ell) \nonumber \\
		&\leq n \cdot f_\ell + \sum_{j = \ell+1}^i (A_j - x_j) \cdot (f_j - f_\ell) + \sum_{j = \ell+1}^i  (q^*_j - q^*_{j-1}) \cdot (f_j - f_\ell) + \sum_{j=i+1}^m a^*_j \cdot (f_j - f_\ell) \nonumber \\
		&\leq n \cdot f_\ell + \sum_{j = \ell+1}^i A_j \cdot (f_j - f_\ell) + \sum_{j = \ell+1}^i  (q^*_j - q^*_{j-1} - x_j) \cdot (f_j - f_\ell)  + \sum_{j=i+1}^m a^*_j \cdot f_j \nonumber \\
		&= \left(n - \sum_{j = \ell+1}^i A_j \right) \cdot f_\ell + \sum_{j = \ell+1}^i A_j \cdot f_j + \sum_{j = \ell+1}^i  (q^*_j - q^*_{j-1} - x_j) \cdot (f_j - f_\ell)  + \sum_{j=i+1}^m a^*_j \cdot f_j \nonumber \\
		&= Opt(I(A,i)) + \sum_{j = \ell+1}^i  (q^*_j - q^*_{j-1} - x_j) \cdot (f_j - f_\ell)  + \sum_{j=i+1}^m a^*_j \cdot f_j
	\end{align}
	
	When $i \leq \ell$, we also have $Opt(I) \leq n \cdot f_i + \sum_{j=i+1}^m a_j^* \cdot f_j = Opt(I(A,i)) + \sum_{j=i+1}^m a_j^* \cdot f_j$ from the definition of $I(A,i)$. Therefore, combining \eqref{eqn:prefix-inter-1} and \eqref{eqn:prefix-inter-2}, we get
	\begin{align*}
		Alg(I) - \gamma \cdot Opt(I) &\geq \left\{\sum_{j=1}^i (q^*_j - q^*_{j-1}) \cdot f_j \right\} - \gamma \cdot Opt(I(A,i)) - \gamma \cdot \sum_{j = \ell+1}^i  (q^*_j - q^*_{j-1} - x_j) \cdot (f_j - f_\ell)\\
		&\geq \left\{\sum_{j=1}^i (q^*_j - q^*_{j-1}) \cdot f_j \right\} - \sum_{j=1}^i x_j \cdot f_j - \gamma \cdot \sum_{j = \ell+1}^i  (q^*_j - q^*_{j-1} - x_j) \cdot (f_j - f_\ell)\\
		&\geq \left\{\sum_{j=1}^i (q^*_j - q^*_{j-1} - x_j) \cdot f_j \right\} - \sum_{j = \ell+1}^i  (q^*_j - q^*_{j-1} - x_j) \cdot (f_j - f_\ell)\\
		&= \left\{\sum_{j=1}^i \left(q^*_i - q^*_{j-1} - \sum_{k=j}^i x_k \right) \cdot (f_j - f_{j-1}) \right\} - \sum_{j = \ell+1}^i  \left(q^*_i - q^*_{j-1} - \sum_{k=j}^i x_k \right) \cdot (f_j - f_{j-1})\\
		&= \sum_{j=1}^{\min\{i,\ell\}} \left(q^*_i - q^*_{j-1} - \sum_{k=j}^i x_k \right) \cdot (f_j - f_{j-1})\\
		&= \sum_{j=1}^{\min\{i,\ell\}} \left(q^*_i - q^*_{j-1} - (Q_i' - Q_{j-1}') \right) \cdot (f_j - f_{j-1})\,,
	\end{align*}
	where the first equality holds due to Lemma~\ref{lemma:rev-rewrite}. Since $q^*_i = Q_i'$ and $q_j^* \leq Q_j'$ for all $j \in [m]$, we get $Alg(I) - \gamma \cdot Opt(I) \geq 0$ as required.
\end{proof}

The following lemma covers the remaining case needed to establish $\gamma$-competitiveness of Algorithm~\ref{alg:opt-alg}.

\begin{lemma}\label{lemma:alg-competitive}
	On any instance $I$ with $\tau \leq T$, Algorithm~\ref{alg:opt-alg} achieves a revenue greater than or equal to $\gamma \cdot Opt(I)$.
\end{lemma}

\begin{proof}
	Let $h(\tau) \coloneqq \max\{i \in [m] \mid q(\tau)_i = Q'_i\}$ (here $h(\tau) =0$ if no such index exists), i.e., $h(\tau)$ is the largest non-zero value of $h$ attained before the start of iteration $\tau$. In this proof, we will repeatedly use the following observation which follows from the definition of $h(\tau)$: Algorithm~\ref{alg:opt-alg} accepts every customer with fare strictly greater than $h(\tau)$ that arrives before time $\tau$ (not including $\tau$). Let $i$ be the largest index for which $q^*_i = R(k^*)_i$ (assume $i= 0$ if no such index exists). We will break up the proof into three cases predicated on the relationship between $k^*$, $h(\tau)$ and $i$ for instance $I$.
	
	\textbf{Case I:} $k^* \geq h(\tau)$ and $i > k^*$. First, observe that the definition of $i$ and $i \geq h(\tau)$ imply that Algorithm~\ref{alg:opt-alg} accepts every customer with fare strictly greater than $f_i$. Furthermore, for all $j \leq i$, Algorithm~\ref{alg:opt-alg} accepts $q^*_i - q^*_{j-1} \geq R(k^*)_i - R(k^*)_{j-1}$ customers with fare lying in the set $\{f_j, \dots, f_i\}$. Therefore, the revenue that Algorithm~\ref{alg:opt-alg} obtains from customers with fare $f_i$ or lower is given by
		\begin{align*}
			\sum_{j=1}^i f_j(q^*_j - q^*_{j-1}) &= 	\sum_{p=1}^i (q^*_i - q^*_{p-1})(f_p - f_{p-1}) \tag{Lemma~\ref{lemma:rev-rewrite}}\\
			&\geq \sum_{p=1}^i (R(k^*)_i - R(k^*)_{p-1})(f_p - f_{p-1})\\
			&= \sum_{j=1}^i (R(k^*)_j - R(k^*)_{j-1}) \cdot f_j \tag{Lemma~\ref{lemma:rev-rewrite}}\\
			&= \sum_{j=1}^{k^*} x_j\cdot f_j + \sum_{j=1}^i y(k^*)_j \cdot f_j \\
			&\geq \gamma \cdot Opt(I(A,k^*) \oplus I(F,i)) \tag{\ref{LP}}\\
			&\geq \gamma \cdot n \cdot f_i.
		\end{align*}
	Thus Algorithm~\ref{alg:opt-alg} achieves revenue at least  $\gamma\cdot n \cdot f_i + \sum_{j=i+1}^m a^*_j \cdot f_j$.
	Since $Opt(I)$ can be at most $n \cdot f_i + \sum_{j=i+1}^m a^*_j \cdot f_j$, we get that Algorithm~\ref{alg:opt-alg} achieves a revenue greater than or equal to $\gamma \cdot Opt(I)$ in this case.
	
	\textbf{Case II:} $k^* > h(\tau)$ and $i \leq k^*$. If $k^* = 1$ and $h(\tau) = 0$, then Algorithm~\ref{alg:opt-alg} takes the same decisions as the protection level policy based on $R(1)$, which is $\gamma$-competitive. To see this, observe that Lemma~\ref{lemma:protec-inc-order} implies that the worst-case instance for any protection policy is attained on $I(F,i)$ for some $i \in [m]$. Moreover, the protection level based on $R(1)$ attains a total reward of $x_1 \cdot f_1 + \sum_{j=1}^i y(1)_j \cdot f_j$ on $I(F,i)$. Next, observe that $Opt(I(A,1) \oplus I(F,i)) = Opt(I(F,i))$ for all $i \in [m]$. Therefore, constraint~\ref{constraint:comp} implies that Algorithm~\ref{alg:opt-alg} is $\gamma$-competitive on all instances.
	
	Assume that either $h(\tau) > 0$ or $k^* > 1$. As $k^* > h(\tau)$, both inequalities imply $k^* > 1$. By our choice of $k^*$, there exists $k \geq h(\tau)$ such that $k^* = \min\{j \in [m] \mid q(\tau)_j > R(k)_j\}$. Therefore, $q(\tau)_j \leq R(k)_j$ for all $j < k^*$. Thus, for all $j \leq k^*$, we have $q(\tau)_{k^*} - q(\tau)_{j-1} \geq R(k)_{k^*} - R(k)_{j-1}$ customers with fare lying in the set $\{f_j, \dots, f_{k^*}\}$. Therefore, the revenue that Algorithm~\ref{alg:opt-alg} obtains from customers with fare $f_{k^*}$ or lower is at least
		\begin{align*}
			\sum_{j=1}^{k^*} f_j(q(\tau)_j - q(\tau)_{j-1}) &= \sum_{j=1}^{k^*} \sum_{p=1}^j (f_p - f_{p-1})(q(\tau)_j - q(\tau)_{j-1})\\
			&= 	\sum_{p=1}^{k^*} (q(\tau)_{k^*} - q(\tau)_{p-1})(f_p - f_{p-1})\\
			&\geq \sum_{p=1}^{k^*} (R(k)_{k^*} - R(k)_{p-1})(f_p - f_{p-1})\\
			&= \sum_{j=1}^{k^*} (R(k)_j - R(k)_{j-1}) \cdot f_j \tag{Lemma~\ref{lemma:rev-rewrite}}\\
			&= \sum_{j=1}^k x_j \cdot f_j + \sum_{j=1}^{k^*} y(k)_j \cdot f_j \\
			&\geq \gamma \cdot Opt(I(A,k) \oplus I(F,k^*)) \tag{\ref{LP}}\\
			&\geq \gamma \cdot n \cdot f_{k^*}
		\end{align*}
	 Furthermore, note that Algorithm~\ref{alg:opt-alg} accepts every customer with fare strictly greater than $f_{k^*}$ in $I$, yielding a reward of $\sum_{j=k^*+1}^m a^*_j \cdot f_j$. On the other hand, $Opt(I)$ can be at most $n \cdot f_{k^*}  + \sum_{j=k^*+1}^m a^*_j \cdot f_j$, thereby implying that Algorithm~\ref{alg:opt-alg} achieves a revenue greater than or equal to $\gamma \cdot Opt(I)$ in this case.
	
	\textbf{Case III:} $k^* = h(\tau)$ and $i \leq k^*$. Note that Algorithm~\ref{alg:opt-alg} accepts $(q_j^*- q_{j-1}^*)$ customers with fare $f_j$ total, and $(q(\tau)_j - q(\tau)_{j-1})$ of the customers with fare $f_j$ by the start of time step $\tau$. Hence, the revenue that Algorithm~\ref{alg:opt-alg} obtains from customers that satisfy one of the following conditions:
		\begin{itemize}
			\item Fare is less than or equal to $f_{k^*}$ and arrives before start of time $\tau$
			\item Fare is less than or equal to $f_i$	
		\end{itemize}
		is at least $\sum_{j=1}^i f_j(q^*_j - q^*_{j-1}) + \sum_{j= i+1}^{k^*} f_j (q(\tau)_j - q(\tau)_{j-1})$.
	
	
	Note that the definition of $i$ implies $q_i^* = R(k^*)_i$ and $q_j^* \leq  R(k^*)_j$ for all $j<i$, and consequently $q_i^* - q^*_{j-1} \geq R(k^*)_i - R(k^*)_{j-1}$ for all $j \leq i$. Hence,
		\begin{align*}
			&\sum_{j=1}^i f_j(q^*_j - q^*_{j-1}) + \sum_{j= i+1}^{k^*} f_j (q(\tau)_j - q(\tau)_{j-1})\\
			= &\sum_{p=1}^i (q_i^* - q_{p-1}^*)(f_p - f_{p-1}) + \sum_{p=1}^{k^*} (q(\tau)_{k^*} - q(\tau)_{p-1})(f_p - f_{p-1}) \tag{Lemma~\ref{lemma:rev-rewrite}}\\
			\geq & \sum_{p=1}^i (R(k^*)_i - R(k^*)_{p-1})(f_p - f_{p-1}) + \sum_{p=i+1}^{k^*} (Q'_{k^*} - Q'_{p-1})(f_p - f_{p-1})\\
			 &+ \sum_{p=i+1}^{k^*} (q(\tau)_{k^*} - q(\tau)_{p-1}) - (Q'_{k^*} - Q'_{p-1}))(f_p - f_{p-1}) \\
			= & \sum_{j=1}^i (R(k^*)_j - R(k^*)_{j-1})f_j + \sum_{j=i+1}^{k^*} (Q'_{j} - Q'_{j-1}) f_j\tag{Lemma~\ref{lemma:rev-rewrite}}\\
			  &+ \sum_{p=i+1}^{k^*} (q(\tau)_{k^*} - q(\tau)_{p-1}) - (Q'_{k^*} - Q'_{p-1}))(f_p - f_{p-1}) \\
			= & \sum_{j=1}^i (x_j + y(k^*)_j) \cdot f_j + \sum_{j=i+1}^{k^*} x_j \cdot f_j + \sum_{p=i+1}^{k^*} (q(\tau)_{k^*} - q(\tau)_{p-1}) - (Q'_{k^*} - Q'_{p-1}))(f_p - f_{p-1})\\
			= &\sum_{j=1}^i y(k^*)_i \cdot f_j + \sum_{j=1}^{k^*} x_j \cdot f_j + \sum_{p=i+1}^{k^*} (q(\tau)_{k^*} - q(\tau)_{p-1}) - (Q'_{k^*} - Q'_{p-1}))(f_p - f_{p-1})\\
			\geq &\ \gamma \cdot Opt(I(A,k^*) \oplus I(F,i)) + \sum_{p=i+1}^{k^*} (q(\tau)_{k^*} - q(\tau)_{p-1}) - (Q'_{k^*} - Q'_{p-1}))(f_p - f_{p-1})
		\end{align*}


	Furthermore, Algorithm~\ref{alg:opt-alg} accepts every customer with fare in the set $\{f_{i+1}, \dots, f_{s}\}$ who arrives after the start of time $\tau$ yielding a revenue of $\sum_{j=i+1}^{k^*} (a^*_j - a(\tau)_j) \cdot f_j$  and always accepts every customer with fare strictly greater than $f_{k^*}$ yielding a revenue of $\sum_{j=k^*+1}^m a^*_j \cdot f_j$. Therefore, the total reward collected by the algorithm is given by
	\begin{align}\label{eqn:comp-case-3-inter-1}
		Alg(I) \geq &\ \gamma \cdot Opt(I(A,k^*) \oplus I(F,i)) + \sum_{p=i+1}^{k^*} (q(\tau)_{k^*} - q(\tau)_{p-1} - (Q'_{k^*} - Q'_{p-1})) (f_p - f_{p-1}) \nonumber\\
		&+ \sum_{j=i+1}^{k^*} (a^*_j - a(\tau)_j) \cdot f_j + \sum_{j=k^*+1}^m a^*_j \cdot f_j\,.
	\end{align}
	Note that the definition of $k^*$ implies $q(\tau)_{k^*} = Q'_{k^*}$. Moreover, since Algorithm~\ref{alg:opt-alg} uses protection levels $Q'$ till the start of time $\tau$, we have $q(\tau)_j \leq Q'_j$ for all $j \in [m]$. Therefore, $q(\tau)_{k^*} - q(\tau)_{p-1} \geq Q'_{k^*} - Q'_{p-1}$ for all $p \leq k^*$ and consequently, all four terms in \eqref{eqn:comp-case-3-inter-1} are non-negative.

	Next we bound $Opt(I)$ from above to show $Alg(I) \geq\gamma \cdot Opt(I)$. First, observe that if $i \leq k^* \leq \ell$, then we have $Opt(I(A, k^*) \oplus I(F,i)) = n \cdot f_{k^*}$ and consequently
			\begin{align*}
				Opt(I) &\leq \left(n - \sum_{j=k^*}^m a_j^* \right) \cdot f_{k^*} + \sum_{j= k^*+1}^m a_j^* \cdot (f_j - f_{k^*})\\
				&= n \cdot f_{k^*} + \sum_{j= k^*+1}^m a_j^* \cdot (f_j - f_{k^*})\\
				&= Opt(I(A, k^*) \oplus I(F,i)) + \sum_{j= k^*+1}^m a_j^* \cdot (f_j - f_{k^*})\\
			\end{align*}
	Combining this with \eqref{eqn:comp-case-3-inter-1} yields $Alg(I) \geq\gamma \cdot Opt(I)$ as required.
	
	Next, suppose $k^* \geq \ell$. Then, for $i' = \max\{i, \ell\}$, $Opt(I)$ is composed of at most $a_j^*$ customers with fare $f_j$ for $j \geq i'$ and the remainder with fares less than or equal to $f_{i'}$. Consequently, we get:
	\begin{align}\label{eqn:comp-case-3-inter-2}
		Opt(I) &\leq n \cdot f_{i'} + \sum_{j=i'+1}^m a_j^* \cdot (f_j - f_{i'}) \nonumber \\
		&= n \cdot f_{i'} + \sum_{j=i'+1}^{k^*} a(\tau)_j \cdot (f_j - f_{i'}) + \sum_{j= i'+1}^{k^*} (a_j^* - a(\tau)_j) \cdot (f_j - f_{i'})  + \sum_{j=k^*+1}^m a_j^* \cdot (f_j - f_{i'}) \nonumber \\
		&\leq \underbrace{n \cdot f_{i'} + \sum_{j=i'+1}^{k^*} a(\tau)_j \cdot (f_j - f_{i'})}_{\clubsuit} + \sum_{j= i+1}^{k^*} (a_j^* - a(\tau)_j) \cdot f_j  + \sum_{j=k^*+1}^m a_j^* \cdot f_j \,,
	\end{align}
	where the last inequality follows from $i \leq i' = \max\{i, \ell\}$. Since $\delta = 0$ till $t = \tau-1$, the definition of $h(\tau)$ implies that $\sum_{g=j}^{k^*} r_g(\tau) \leq \sum_{g=j}^{k^*} \{A_g - x_g\}$ for all $j \in [\ell+1, k^*]$. Moreover, $a(\tau)_j = q(\tau)_j - q(\tau)_{j-1} + r(\tau)_j$ because the total amount of customers accepted by the algorithm and the total amount of customers rejected by the algorithm together equal the total number of customers that arrived. This allows us to write
	\begin{align}
		\clubsuit &= n \cdot f_{i'} + \sum_{j=i'+1}^{k^*} r(\tau)_j \cdot (f_j - f_{i'}) + \sum_{j=i'+1}^{k^*} (q(\tau)_j - q(\tau)_{j-1}) \cdot (f_j - f_{i'}) \nonumber \\
		&\leq n \cdot f_{i'} + \sum_{j=i'+1}^{k^*} (A_j - x_j) \cdot (f_j - f_{i'}) + \sum_{j=i'+1}^{k^*} (q(\tau)_j - q(\tau)_{j-1}) \cdot (f_j - f_{i'}) \nonumber \\
		&= n \cdot f_{i'} + \sum_{j=i'+1}^{k^*} A_j \cdot (f_j - f_{i'})+ \sum_{j=i'+1}^{k^*} (q(\tau)_j - q(\tau)_{j-1} - x_j) \cdot (f_j - f_{i'}) \nonumber \\
		&= \left(n - \sum_{j=i'+1}^{k^*} A_j \right) \cdot f_{i'} + \sum_{j=i'+1}^{k^*} A_j \cdot f_j + \sum_{j=i'+1}^{k^*} (q(\tau)_j - q(\tau)_{j-1} - x_j) \cdot (f_j - f_{i'}) \nonumber \\
		&= Opt(I(A,k^*) \oplus I(F,i')) + \sum_{j=i'+1}^{k^*} (q(\tau)_j - q(\tau)_{j-1} - x_j) \cdot (f_j - f_{i'}) \nonumber \\
		&= Opt(I(A,k^*) \oplus I(F,i')) + \sum_{p=i'+1}^{k^*} \left(q(\tau)_{k^*} - q(\tau)_{p-1} - \sum_{g=p}^{k^*} x_g \right) \cdot (f_p - f_{p-1}) \label{eqn:clubsuit-inter-1}\\
		&\leq Opt(I(A,k^*) \oplus I(F,i)) + \sum_{p=i'+1}^{k^*} \left(q(\tau)_{k^*} - q(\tau)_{p-1} - (Q_{k^*}' - Q'_{p-1}) \right) \cdot (f_p - f_{p-1}) \label{eqn:comp-case-3-inter-3}
	\end{align}
	\eqref{eqn:clubsuit-inter-1} follows from Lemma~\ref{lemma:rev-rewrite}. \eqref{eqn:comp-case-3-inter-3} follows from the observation that $Opt(I(A,k^*) \oplus I(F,\ell)) = Opt(I(A,k^*)) \leq Opt(I(A,k^*) \oplus I(F,i))$ when $\ell \leq k^*$.
	
	Finally, combining \eqref{eqn:comp-case-3-inter-1}, \eqref{eqn:comp-case-3-inter-2}, \eqref{eqn:comp-case-3-inter-3} and using the fact that $\gamma \leq 1$ yields
	\begin{align*}
		Alg(I) - \gamma \cdot Opt(I) \geq &\sum_{p=i+1}^{k^*} (q(\tau)_{k^*} - q(\tau)_{p-1} - (Q'_{k^*} - Q'_{p-1})) (f_p - f_{p-1})\\
		&- \gamma \cdot \sum_{p=i'+1}^{k^*} \left(q(\tau)_{k^*} - q(\tau)_{p-1} - (Q_{k^*}' - Q'_{p-1}) \right) \cdot (f_p - f_{p-1})\\
		\geq &\ 0
	\end{align*}
	where the last inequality follows from the observation that $q(\tau)_{k^*} = Q_{k^*}'$ and $q(\tau)_j \leq Q'_j$ for all $j \in [m]$ and $i \leq i' = \max\{i, \ell\}$. Hence, we have shown that Algorithm~\ref{alg:opt-alg} achieves revenue greater than or equal to $\gamma \cdot Opt(I)$ in this case.
\end{proof}

We now have all the ingredients in place to prove Theorem~\ref{thm:lower-bound}.


\begin{proof}[Proof of Theorem~\ref{thm:lower-bound}]
	In Theorem~\ref{thm:upper-bound}, we established the following upper bound: $\beta(A, \gamma) \leq \beta^*$. In Lemma~\ref{lemma:alg-consistent}, we established that Algorithm~\ref{alg:opt-alg} is $\beta^*$-consistent on advice $A$. Furthermore, through Lemma~\ref{lemma:alg-prefix} and Lemma~\ref{lemma:alg-competitive}, we established the $\gamma$-competitiveness of Algorithm~\ref{alg:opt-alg}. Hence, from the definition of $\beta(A, \gamma)$, we get $\beta^* \leq \beta(A,\gamma)$, thereby implying $\beta^* = \beta(A,\gamma)$.
\end{proof}

}

\section{Missing Proofs from Section~\ref{sec:protec}}\label{sec:appendix:protec}

\subsection{Proof of Theorem~\ref{thm:opt-protec}}\label{appendix:protec}

In this section, we prove that the protection level policy based on the protection levels returned by Algorithm~\ref{alg:opt-protec} is $\gamma$-competitive and $(\beta^{PL}(A, \gamma) - \epsilon)$-consistent on advice $A$. To do so, we show that, for $\beta \in [0,1]$, the protection levels returned by CoreSubroutine$(\beta)$ satisfy $Q_m \leq n$ if and only if $\beta \leq \beta^{PL}(A, \gamma)$. The following lemma takes a step in that direction by showing that, for a given $\beta \in [0,1]$, CoreSubroutine$(\beta)$ returns protection levels $Q$ such that $Q_m \leq n$, then the protection level policy based on $Q$ is $\beta$-consistent on advice $A$ while being $\gamma$-competitive. In other words, $Q_m \leq n$ implies $\beta \leq \beta^{PL}(A, \gamma)$).

\begin{lemma}\label{lemma:protec-first}
	Let $Q$ be the protection levels returned by CoreSubroutine$(\beta)$ for some $\beta \in [0,1]$. If $Q_m \leq n$, then $Q(I) \geq \beta \cdot Opt(A)$ for all $I \in S(A)$ and $Q(I) \geq \gamma \cdot Opt(I)$ for all instances $I \in \mathcal{U}$, i.e., the protection level policy corresponding to $Q$ is $\beta$-consistent on advice $A$ while also being $\gamma$-competitive.
\end{lemma}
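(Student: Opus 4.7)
The plan is to mount an induction on the iteration $k$ of the \texttt{For} loop of Algorithm~\ref{alg:opt-protec} that carries two invariants on the running protection levels $\tilde{Q}^{(k)}$—one for competitiveness and one for consistency—and then to extend each resulting inequality from the canonical hard instances $I(F,k)$ and $I(A)$ to arbitrary instances. As a preliminary I would invoke Lemma~\ref{lemma:protec-inc-order} to restrict attention to instances whose customers arrive in increasing order of fares, and also cap the per-fare counts $b_j$ at $n$, which is legitimate because $Q_m \leq n$ so higher multiplicities change neither $Q(I)$ nor $Opt(I)$.

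I would carry the following invariants through the induction on $k$: (C) $\tilde{Q}^{(k)}(I(F,j)) \geq \gamma \cdot Opt(I(F,j))$ for every $j \leq k$, and (S) $\tilde{Q}^{(k)}(I(A,k)) + \sum_{i>k} N_i f_i \geq \beta \cdot Opt(A)$. Invariant (C) is transparent from the defining identity for $c_k$: after part (a) of iteration $k$ the increasing-order run on $I(F,k)$ telescopes to $\tilde{Q}^{(k,a)}(I(F,k)) = \tilde{Q}^{(k-1)}(I(F,k-1)) + f_k c_k = \gamma \cdot Opt(I(F,k))$, and part (b) only grows $Q_k$; because $Q_j$ for $j < k$ is never touched by later iterations, the bound persists. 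Invariant (S) I would verify by a case split on the \texttt{If} branch of part (b): if the condition already holds, $d_k = 0$ and the invariant passes unchanged, and if it fails, the policy is capacity-bound at fare $f_k$ on $I(A,k)$, so every unit added to $Q_k$ contributes exactly $f_k$ to $\tilde{Q}(I(A,k))$, and the chosen $d_k$ closes the gap exactly; if adding $d_k$ pushes past the supply bound $N_k$, the invariant reduces to the previous iteration with one more $N_k f_k$ term absorbed. The base case $k=0$ is immediate from $\sum_{i=1}^m N_i f_i \geq Opt(A)$.

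Setting $k=m$ yields $Q(I(F,k)) \geq \gamma \cdot Opt(I(F,k))$ for every $k$ and $Q(I(A)) \geq \beta \cdot Opt(A)$. The remaining task, which is the crux of the proof, is to extend these to all $I \in \mathcal{U}$ and all $I \in S(A)$. For competitiveness I would use the Abel-summation representation $Q(I) = f_m t_m - \sum_{j<m}(f_{j+1} - f_j) t_j$ with the recursion $t_j = \min(Q_j, t_{j-1} + b_j)$ to argue that the worst-case ratio $Q(I)/Opt(I)$ over increasing-order instances is attained at some $I(F,k^\star)$, at which point invariant (C) applies. For consistency I would do a case split on $I \in S(A)$: if $b_j \geq Q_j - Q_{j-1}$ for every low fare $j \leq \ell$, then the policy's trajectory on $I$ coincides with its trajectory on $I(A)$ (using $b_j = A_j = N_j$ for $j > \ell$), so $Q(I) = Q(I(A)) \geq \beta \cdot Opt(A)$; otherwise the slack created at some low fare either is absorbed at a higher fare, where the larger multiplier $f_{k'} > f_j$ makes the net effect a revenue gain, or is never absorbed, in which case the policy accepts every customer and $Q(I) = \sum_j f_j b_j \geq Opt(A) \geq \beta \cdot Opt(A)$ because $I \in S(A)$ contains the bundle realizing $Opt(A)$.

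The hard part will be the extension step. The subtlety is that reducing low-fare demand in $I$ relative to $I(A)$ has two opposing effects on the policy's revenue—lost revenue at the affected low fare versus potential gain at a higher fare via propagated slack—and the net sign depends delicately on the higher-fare supply profile and on whether the policy's cumulative $t_j$ reaches $Q_j$ before or after the supply curve bites. Making this comparison rigorous using the cumulative recursion for $t_j$, and analogously arguing that the $I(F,k)$-type instances exhaust the extreme points of the competitive-ratio minimization over all increasing-order profiles $(b_j)$, will be the primary technical work.
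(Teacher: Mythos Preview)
Your proposal is correct and shares the paper's high-level structure: first pin down the behavior of $Q$ on the canonical instances $I(F,k)$ and $I(A)$, then extend to arbitrary instances. The invariants (C) and (S) you induct on are exactly the inequalities the paper uses, and your treatment of the supply-bound case in (S)---reducing to the previous iteration with an extra $N_k f_k$ absorbed---is equivalent to the paper's ``largest index $j$ with $x_j < N_j$'' argument.

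Where you diverge is the extension step, and you are significantly overestimating its difficulty. For competitiveness the paper does not argue that the minimum of $Q(I)/Opt(I)$ is attained at some $I(F,k^\star)$; instead it takes an arbitrary $I$, lets $j$ be an index at which the cumulative acceptances hit $Q_j$, and observes three facts: (i) every customer with fare above $f_j$ is accepted in full, contributing some amount $R$; (ii) the revenue from fares at most $f_j$ is at least $Q(I(F,j)) \geq \gamma \cdot n f_j$ because the protection levels $Q_1,\ldots,Q_j$ are all saturated; (iii) $Opt(I) \leq n f_j + R$. This is a three-line argument, not an Abel-summation extremal analysis. For consistency the paper's reduction from $I \in S(A)$ to $I(A)$ is also a one-liner: after sorting to increasing order, the top $n$ fares of any $I' \in S(A)$ coincide with those of $I(A)$, and since $I(A)$ has $n$ copies of every low fare (the maximum that can matter when $Q_m \leq n$), the protection levels bind at least as tightly on $I(A)$ as on $I'$, giving $Q(I') \geq Q(I(A))$. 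Your case-split on whether $b_j \geq Q_j - Q_{j-1}$ reaches the same conclusion but with more work. So the ``crux'' you identify is, in the paper's hands, almost immediate once you switch from an extremal viewpoint to a direct binding-index argument.
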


\begin{proof}
	Let $Q$ be the protection levels returned by CoreSubroutine$(\beta)$ for some $\beta \in [0,1]$. Suppose $Q_m \leq n$. We will show that $Q(I) \geq \beta \cdot Opt(A)$ for all $I \in S(A)$ and $Q(I) \geq \gamma \cdot Opt(I)$ for all instances $I \in \mathcal{U}$. First, consider an instance $I \in S(A)$. Observe that, if $I'$ represents the instance obtained by reordering the customers in $I$ to arrive in increasing order of fares, then $Q(I') \leq Q(I)$ (Lemma~\ref{lemma:protec-inc-order}). Furthermore, since the top $n$ fares in $I'$ are the same as $I(A)$ and $Q_m \leq n$, we also have $Q(I(A)) \leq Q(I')$. Hence to show that $Q(I) \geq \beta \cdot Opt(A)$ for all $I \in S(A)$, it suffices to show $Q(I(A)) \geq \beta \cdot Opt(I(A))$, which we do next.
	
	Let $x_i$ represent the total quantity of customers with fare $f_i$ accepted by the protection level policy based on $Q$ when acting on the instance $I(A)$. If $x_i = N_i$ for all $i \in [m]$, then $Q(I(A)) \geq \beta \cdot Opt(I(A))$ trivially. Assume that this is not the case and let $j$ be the largest index such that $x_j < N_j$.  Then, from part (b) of the $k=j$ iteration of the for loop and the definition of $j$, we get that
	\begin{align*}
		Q(I(A)) = Q(I(A,j)) + \sum_{i=j+1}^m N_i f_i \geq \beta \cdot Opt(I(A))
	\end{align*}
	
	Next, to complete the proof we show that $Q(I) \geq \gamma \cdot Opt(I)$ for all $I \in \mathcal{U}$. To do so, we first show that $Q(I(F,k)) \geq \gamma \cdot Opt(I(F,k))$ for all $k \in [m]$. Since $Q_m \leq n$, we have $c_k + d_k \leq n$ for all $k \in [m]$, which further implies that $Q(I(F,k)) \geq c_k f_k + Q(I(F,k-1)) \geq \gamma \cdot Opt(I(F,k))$ due to the definition of $c_k$. Now, consider an arbitrary instance $I \in \mathcal{U}$ and let $q_j$ be the number of customers of fare type $j$ accepted when the protection level policy based on $Q$ acts on instance on $I$. Moreover, let $j$ be the smallest index such that $q_j = Q_j$, assuming $q_0 = Q_0 = 0$. Then, we get that the protection level policy based on $Q$ accepts every customer with fare greater than or equal to $f_{j+1}$ in $I$, and the revenue it attains from customers with fare less than or equal to $f_j$ is at least $\gamma \cdot Opt(I(F,j)) = \gamma \cdot n f_j$ because $Q(I(F, j)) \geq \gamma \cdot Opt(I(F,j))$.
	Hence, we have shown that $Q(I) \geq \gamma\cdot Opt(I)$, since $Opt(I) \leq Opt(I(F,j)) + R$, where $R$ is the revenue obtained from accepting all customer of fare type strictly greater than $j$ in $I$.
\end{proof}

The next lemma establishes that, if there exists a feasible protection level policy which is $\beta$-consistent on advice $A$ while being $\gamma$-competitive, then the protection levels $Q$ returned by CoreSubroutine$(\beta)$ satisfy $Q_m \leq n$. In other words, $\beta \leq \beta^{PL}(A, \gamma)$ implies $Q_m \leq n$.

\begin{lemma}\label{lemma:protec-second}
	Let $Q$ be the protection levels returned by CoreSubroutine$(\beta)$ for some $\beta \in [0,1]$. If there exist feasible protection levels $Q^*$ such that $Q^*(I) \geq \beta \cdot Opt(A)$ for all $I \in S(A)$ and $Q^*(I) \geq \gamma \cdot Opt(I)$ for all instances $I \in \mathcal{U}$, i.e., the protection level policy corresponding to $Q^*$ is $\beta$-consistent on advice $A$ while also being $\gamma$-competitive, then $Q_m \leq n$.
\end{lemma}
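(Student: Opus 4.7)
The plan is to prove by induction on $k \in \{0, 1, \ldots, m\}$ that the algorithm's protection levels satisfy $Q_k \leq Q^*_k$ coordinate-wise; the lemma then follows from the $k=m$ case together with the feasibility of $Q^*$, which gives $Q^*_m \leq n$. The base case $Q_0 = Q^*_0 = 0$ is immediate, so the content is in the inductive step.

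The first ingredient is a summation-by-parts identity. Provided a monotone protection level policy with levels $P$ does not saturate the customer supply at any level $1, \ldots, k$ on $I(F,k)$, its revenue satisfies
\begin{align*}
P(I(F,k)) = f_k P_k - \sum_{j=1}^{k-1} P_j (f_{j+1} - f_j),
\end{align*}
and the analogous formula holds for $I(A,k)$. Feasibility of $Q^*$ gives $Q^*_j \leq Q^*_m \leq n$, and the inductive hypothesis then yields $Q_j \leq n$ for $j < k$. Since each level $j \geq 1$ has at least $n$ customers of fare at most $f_j$ in both $I(F,k)$ and $I(A,k)$, neither policy saturates, so the identity applies to both.

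The second ingredient rewrites the algorithm's two-stage update in closed form. Using the identity, the value after step (a) of iteration $k$ equals
\begin{align*}
A_k \coloneqq \frac{\gamma \cdot Opt(I(F,k)) + \sum_{j=1}^{k-1} Q_j (f_{j+1} - f_j)}{f_k},
\end{align*}
and when the $d_k$-branch is triggered the final level becomes
\begin{align*}
B_k \coloneqq \frac{\beta \cdot Opt(I(A)) + \sum_{j=1}^{k-1} Q_j (f_{j+1} - f_j) - \sum_{i=k+1}^m N_i f_i}{f_k}.
\end{align*}
A short case analysis shows that $d_k > 0$ is equivalent to $B_k > A_k$, so in every case the final level equals $Q_k = \max(A_k, B_k)$. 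Applying the same identity to $Q^*$, combined with the $\gamma$-competitiveness bound $Q^*(I(F,k)) \geq \gamma \cdot Opt(I(F,k))$ and the split $Q^*(I(A,k)) + \sum_{i>k} N_i f_i \geq Q^*(I(A)) \geq \beta \cdot Opt(A) = \beta \cdot Opt(I(A))$ (where the last step uses $I(A) \in S(A)$ and $Opt(I(A)) = Opt(A)$), produces the lower bounds $Q^*_k \geq \tilde A_k$ and $Q^*_k \geq \tilde B_k$, in which $\tilde A_k$ and $\tilde B_k$ are defined by the same two formulas with $Q^*_j$ in place of $Q_j$.

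The induction then closes by monotonicity: since $f_{j+1} - f_j > 0$ and $Q_j \leq Q^*_j$ for $j < k$ by the inductive hypothesis, $A_k \leq \tilde A_k$ and $B_k \leq \tilde B_k$, so $Q_k = \max(A_k, B_k) \leq \max(\tilde A_k, \tilde B_k) \leq Q^*_k$. The main technical obstacle I foresee is carefully verifying the identification $Q_k = \max(A_k, B_k)$ across all sign cases of $c_k$ and $d_k$ (in particular showing that the $d_k$-trigger is exactly $B_k > A_k$) and rigorously ruling out saturation at every level, which is what licenses the summation-by-parts identity to produce the clean expressions $A_k$ and $B_k$.
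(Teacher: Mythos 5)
Your induction $Q_k \le Q^*_k$ breaks at the step where you apply summation by parts to $I(A,k)$. For $I(F,k)$ the identity is essentially fine, because every fare level carries $n$ customers. But $I(A,k)$ contains only $N_j = A_j$ customers at each fare $f_j$ with $j > \ell$, and $A_j$ can be much smaller than the increment $Q_j - Q_{j-1}$ (it can even be zero). Having ``at least $n$ customers of fare at most $f_j$'' cumulatively does not prevent per-level saturation: a protection level policy processes the increasing-order instance level by level and cannot retroactively accept low-fare customers it has already rejected, so its acceptance at level $j$ is $x_j = \min\{N_j,\, Q_j - \sum_{i<j} x_i\}$ and the cumulative acceptance can stall strictly below $Q_j$. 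Consequently $Q(I(A,k))$ is not equal to $f_k Q_k - \sum_{j<k} Q_j (f_{j+1}-f_j)$, the closed form $B_k$ for the algorithm's level is unjustified, and---more damagingly---the lower bound $Q^*_k \ge \tilde B_k$ fails in the direction you need: when saturation occurs, the actual revenue on $I(A,k)$ \emph{exceeds} the summation-by-parts expression (unused low-fare capacity rolls up to higher fares; e.g.\ $Q^*=(10,10)$, $N=(0,10)$, $f=(1,2)$ gives actual revenue $20$ versus formula value $10$). The consistency hypothesis lower-bounds the actual revenue $Q^*(I(A,k))$, and therefore gives no lower bound on $f_k Q^*_k - \sum_{j<k} Q^*_j(f_{j+1}-f_j)$, hence none on $Q^*_k$.

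This saturation phenomenon is precisely what the paper's proof is organized around, and it is why that proof does not attempt coordinate-wise domination. Instead it packs the amounts $c_k$ and $d_k$ underneath the increments of $Q^*$ via auxiliary variables $y(k)_i, z(k)_i$ that are allowed to sit at levels $i < k$ rather than only at level $k$, and the argument that $\sum_{i\le k} z(k)_i = d_k$ hinges on locating the largest index $j$ with $x_j < N_j$ and comparing the revenues of the partially built levels on $I(A,k)$; only the aggregate conclusion $Q_m \le Q^*_m$ is extracted at the end. To salvage your route you would need either a saturation-aware proof of coordinate-wise domination (it is not clear this even holds once the $d_k$ steps interact with levels where $N_k$ is small), or to weaken your inductive invariant to a packing statement of the paper's type.
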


\begin{proof}
	Suppose there exist feasible protection levels $Q^*$ such that $Q^*(I) \geq \beta \cdot Opt(A)$ for all $I \in S(A)$ and $Q^*(I) \geq \gamma \cdot Opt(I)$ for all instances $I \in \mathcal{U}$. To prove $Q_m \leq n$, it suffices to show that $Q_m \leq Q_m^*$.
	
	First, we define a `For' loop to inductively define auxiliary variables $\{y(k)_i, z(k)_i\}_{1 \leq i \leq k \leq m}$ and protection levels $Q'$, all initialized to be zero, which will play a central role in the proof.
	
	For $k = 1$ to $m$:
	\begin{itemize}
		\item[(a)] Set $\{y(k)_i\}_{i\leq k}$ to be an optimal solution to the following optimization problem
			\begin{align*}
				\max \quad  & \sum_{i=1}^k y(k)_i\\
				\text{s.t.} \quad &	Q_i' - Q_{i-1}' + y(k)_i \leq Q^*_i - Q^*_{i-1} \quad \forall i \leq k\\
				& \sum_{i=1}^k y(k)_i \leq c_k
			\end{align*}
		
			Set $Q_i' \leftarrow Q'_i + \sum_{r=1}^i y(k)_r$ for all $i \leq k$ and $Q'_i = Q'_k$ for all $i > k$.
			
		\item[(b)] Set $\{z(k)_i\}_{i\leq k}$ to be an optimal solution to the following optimization problem
			\begin{align*}
				\max \quad  & \sum_{i=1}^k z(k)_i\\
				\text{s.t.} \quad &	Q_i' - Q_{i-1}' + z(k)_i \leq Q^*_i - Q^*_{i-1} \quad \forall i \leq k\\
				& \sum_{i=1}^k z(k)_i \leq d_k
			\end{align*}
		
			Set $Q_i' \leftarrow Q'_i + \sum_{r=1}^i z(k)_r$ for all $i \leq k$ and $Q'_i = Q'_k$ for all $i > k$.
		
%
	\end{itemize}
	
	Next, we use induction on $k \in [m]$ to show that $\sum_{i=1}^k y(k)_i = c_k$ and $\sum_{i=1}^k z(k)_i = d_k$ for all $k \in [m]$. First, consider the base case $k = 1$. Since $Q^*$ satisfies $Q^*(I(F,1))\geq \gamma\cdot Opt(I(F,1))$, we have $y(1)_1 = c_1$ because $c_1 \leq Q^*_1$.
	Moreover, since $Q^*$ satisfies $Q^*(I(A,1)) + \sum_{i=2}^m N_i f_i \geq \beta \cdot Opt(I(A))$, we have $z(1)_1 = d_1$ because $d_1 f_1 = (\beta \cdot Opt(I(A)) - \sum_{i=2}^m N_i f_i) - c_1f_1 \leq Q^*(I(A,1)) - c_1f_1$ implies
	$d_1 \leq Q^*_1 - c_1$.
	Hence, we have established the base case $k=1$ for the induction.

	Assume the induction hypothesis holds, i.e., suppose $\sum_{i=1}^{p} y(p)_i = c_{p}$ and $\sum_{i=1}^{p} z(p)_i = d_{p}$ for all $p < k$, where $2\leq k \leq m$. Consider iteration $k$ of the two `For' loops (the one in CoreSubroutine$(\beta)$ and the one defined above). Since $f_1 < \dots f_m$, at the beginning of step (a), we have
	\begin{align*}
		Q'(I(F,k-1)) = \sum_{i=1}^{k-1} (Q_I' - Q_{i-1}') f_i = \sum_{i=1}^{k-1} \sum_{p=i}^{k-1} [y(p)_i + z(p)_i] f_i \leq \sum_{p=1}^{k-1} (c_p + d_p) f_p = Q(I(F,k-1))
	\end{align*}
	Next, note that, at the beginning of step (a), $Q^*$ and $Q'$ satisfy
	\begin{align*}
		\sum_{i=1}^k [(Q^*_i - Q^*_{i-1}) - (Q'_{i} - Q'_{i-1})]  f_i + Q'(I(F,k-1)) &= \sum_{i=1}^k (Q^*_i - Q^*_{i-1}) f_i\\
		 &= Q^*(I(F,k))\\
		 &\geq \gamma\cdot Opt(I(F,k))\\
		 &= Q(I(F,k-1)) + c_k f_k
	\end{align*}
	Combining the above inequalities yields $\sum_{i=1}^k [(Q^*_i - Q^*_{i-1}) - (Q'_{i} - Q'_{i-1})]  f_i \geq c_k f_k$, which further implies $\sum_{i=1}^k [(Q^*_i - Q^*_{i-1}) - (Q'_{i} - Q'_{i-1})] \geq c_k$ at the beginning of step~(a). As a consequence, we get that any optimal $\{y(k)_i\}_{i\leq k}$ in step (a) satisfies $\sum_{i=1}^k y(k)_i = c_k$.
	
	If $d_k = 0$, then $\sum_{i=1}^k z(k)_i = 0$, as required. Consider the case when $d_k > 0$. Let $Q^{[k]}$ denote the protection levels $Q$, as defined in Algorithm~\ref{alg:opt-protec}, at the end of step~(a) of iteration $k$; i.e.,
	\begin{align*}
		Q^{[k]}_i - Q^{[k]}_{i-1} =
		\begin{cases}
			c_i + d_i &\text{ if } i < k\\
			c_k &\text{ if } i = k\\
			0 &\text{ if } i> k
		\end{cases}
	\end{align*}
	Moreover, let $x_i$ denote the number of customers with fare $f_i$ that the protection level policy based on $Q^{[k]}$ accepts when acting on the instance $I(A,k)$. We would like to show $x_k < N_k$. For contradiction, suppose $x_k = N_k$. If $x_i = N_i$ for all $i \leq k$, then $Q^{[k]}(I(A,k)) + \sum_{i= k+1}^m N_i f_i = \sum_{i=1}^m N_i f_i \geq \beta \cdot Opt(I(A))$, thereby implying $d_k = 0$, which is a contradiction. Now, assume that there exists $i$ such that $x_i < N_i$ (and still assuming $x_k=N_k$), and let $j = \max\{i < k \mid x_i < N_i\}$. Then, from step~(b) of the $j$-th iteration and the definition of $j$, we get $Q^{[k]}(I(A,k)) + \sum_{i= k+1}^m N_i f_i = Q^{[k]}(I(A,j)) + \sum_{i=j+1}^m N_i f_i = \beta \cdot Opt(I(A))$, which contradicts $d_k > 0$. Hence, we have shown that $x_k < N_k$.
	
	Additionally, due to the induction hypothesis and $\sum_{i=1}^k y(k)_i = c_k$, at the end of step~(a) of the $k$-th iteration, we have $Q'_k - Q'_{i-1} \leq Q^{[k]}_k - Q^{[k]}_{i-1}$ for all $i \leq k$. Here $Q'_k -Q'_{i-1}$ and $Q^{[k]}_k - Q^{[k]}_{i-1}$ represent the number of customers with fare in the set $\{f_i, \dots f_k\}$ accepted by $Q'$ and $Q^{[k]}$ respectively. Combining this with $x_k < N_k$ and $d_k > 0$ yields
	\begin{align*}
		Q'(I(A,k)) \leq Q^{[k]}(I(A,k)) < \beta \cdot Opt(I(A)) - \sum_{i=k+1}^m N_i f_i = Q^{[k]}(I(A,k)) + d_k f_k
	\end{align*}
	at the end of step~(a) of iteration $k$. Moreover, since $Q^*$ is $\beta$-consistent on advice $A$, we also have $ \beta \cdot Opt(I(A)) - \sum_{i=k+1}^m N_i f_i \leq Q^*(I(A,k))$. Therefore, at the beginning of step~(b) of iteration $k$, we have $Q^*(I(A,k)) - Q'(I(A,k)) \geq d_k f_k$.
	
	For contradiction, suppose $\sum_{i=1}^k z(k)_i < d_k$. Then, at the beginning of step~(b) of iteration $k$, we have $Q_i' - Q_{i-1}' + z(k)_i = Q^*_i - Q^*_{i-1}$ for all $i \leq k$, which yields
	\begin{align*}
		Q^*(I(A,k)) - Q'(I(A,k)) \leq \left(\sum_{i=1}^k z(k)_i \right) f_k.
	\end{align*}
	Combining this with $Q^*(I(A,k)) - Q'(I(A,k)) \geq d_k f_k$, we get $d_kf_k \leq \left(\sum_{i=1}^k z(k)_i \right) f_k$,
	which contradicts $\sum_{i=1}^k z(k)_i < d_k$. Hence, we have shown that $\sum_{i=1}^k z(k)_i < d_k$, thereby completing the induction step.
	
	The lemma follows as a consequence because, at the end of iteration $m$, we have
	\begin{align*}
		Q_m = \sum_{k=1}^m c_k + d_k = \sum_{k=1}^m \sum_{i=1}^k y(k)_i + z(k)_i = \sum_{i=1}^m \sum_{k= i}^m y(k)_i + z(k)_i = \sum_{i=1}^m Q'_{i}- Q'_{i-1} \leq \sum_{i=1}^m Q^*_{i}- Q^*_{i-1} =  Q_m^*
	\end{align*}
	where the inequality follows from the second constraint of the optimization problem in step~(b) of iteration $m$ and the update rule for $Q'$.
\end{proof}

Combining Lemma~\ref{lemma:protec-first} and Lemma~\ref{lemma:protec-second} allows us to prove Theorem~\ref{thm:opt-protec}:


\begin{proof}[Proof of Theorem~\ref{thm:opt-protec}]
	At termination of the `While' loop of Algorithm~\ref{alg:opt-protec}, we have $\bar \beta - \epsilon \leq \underline \beta \leq \bar \beta$. Moreover, as a consequence of Lemma~\ref{lemma:protec-first}, the protection levels $Q$ returned by Algorithm~\ref{alg:opt-protec} are $\underline \beta$-consistent on advice $A$ and $\gamma$-competitive. If $\bar \beta = 1$, then the theorem holds. Assume $\bar \beta < 1$. Then, Lemma~\ref{lemma:protec-second} and step~3 of Algorithm~\ref{alg:opt-protec} imply that no protection level policy is $\bar \beta$-consistent on advice $A$ while being $\gamma$-competitive. In other words $\beta^{PL}(A, \gamma) \leq \bar \beta$. As $\bar \beta - \epsilon \leq \underline \beta$, we get that the protection levels $Q$ returned by Algorithm~\ref{alg:opt-protec} are $(\beta^{PL}(A, \gamma) - \epsilon)$-consistent on advice $A$ and $\gamma$-competitive. Finally, note that there are at most $\log(1/\epsilon)$ iterations of the While loop, and in each iteration of the While loop there is 1 call to CoreSubroutine. Since computing $Q(I)$ takes linear time for every instance $I$ and protection levels $Q$, each iteration of the `For' loop in CoreSubroutine runs in polynomial time. Combining the aforementioned runtimes yields the polynomial runtime of Algorithm~\ref{alg:opt-protec}.
\end{proof}

%
%
%
%
%

\revision{
\section{Missing Proofs from Section~\ref{sec:robustness}}\label{appendix:robustness}

\begin{figure}[t!]
\begin{algorithm}[H]
   \caption{$\Lambda$-relaxed Algorithm~\ref{alg:opt-alg}}
   \label{alg:robust-opt-alg}
    \begin{algorithmic}\vspace{0.08cm}
    		\item[\textbf{Input:}] Required level of competitiveness $\gamma \in [0,c(F)]$, advice $A$ and instance $I = \{s_t\}_{t=1}^T$.
    		\vspace{0.5em}
    		\item[$\mathbf{t = 0}$:] Solve \ref{LP} to find optimal solution $(\beta^*, x, (y(k))_k)$. Define protection levels:
    			\begin{itemize}
    				\item For $i \in [m]$, set $Q'_i = \sum_{j=1}^i \lfloor x_j \rfloor$. Moreover, set $Q_0' = 0$.
					\item For $i,k \in [m]$, set $R(k)_i = Q'_i + \sum_{j=1}^i y(k)_j$ if $i \leq k$ and $R(k)_i = Q'_k+ \sum_{j=1}^i y(k)_j$ if $i > k$. Moreover, set $R(k)_0 = 0$ for all $k \in [m]$.
    			\end{itemize}
    		\item[\textbf{Initialize:}] Total accepted customers with fare $f_j$ or below as $q_j = 0$ for all $j \in [m]$; total rejected customers with fare $f_j$ as $r_j = 0$ for all $j \in [m]$; trigger $\delta = 0$; and active protection levels $Q^*_i = Q'_i$ for all $i \in [m]$. Rejection buffer $B = 0$
    		\vspace{0.5em}
            \item[\textbf{For}] $t=1$ to $T$:
            \begin{enumerate}
                \item Let $p \in [m]$ be the fare class of customer $s_t$, i.e., $s_t = f_p$. Calculate the fraction of customer $s_t$ that would be accepted under $Q^*$: set $w^* = \max\{w \in [0,1] \mid q_j + w \leq Q_j^*\ \forall j \geq p\}$ and let $h \geq p$ be the largest index such that $q_p + w^* = Q^*_j$ (set $h= 0$ if no such index exists). If $\delta = 0$, update $r_p \leftarrow r_p + (1 - w^*)$ to reflect the anticipated rejection of a $1 - w^*$ fraction of $s_t$.

                \item \emph{Check Trigger Condition.} If $\delta = 0$ and $\sum_{i=j}^h r_i > \Lambda + \sum_{i=j}^h \{A_i - x_i\}$ for some $j \geq \ell + 1$, i.e., the total rejections among fares $\{f_j, \dots, f_h \}$ exceeds the rejections among fares $\{f_j, \dots, f_m\}$ in $I(A)$ under LP solution $(\beta^*, x, (y(k))_k)$:
                	\begin{itemize}
                		\item Set trigger $\delta = 1$ and $k = \max\{j \mid q_j = Q_j'\}$ (set $k = 1$ if no such index exists).
                		\item \textbf{While} $\exists\ j \in [m]$ such that $R(k)_j < q_j$: Set $k = \min\{j \in [m] \mid R(k)_j < q_j\}$.\\
                			Set $k^* = k$, i.e., $k^*$ is the value with which the `While' loop terminates.
                		\item \emph{Switch Protection Levels.} Set $Q^*_j = R(k^*)_j$ for all $j \in [m]$.
                	\end{itemize}

                \item \emph{Make decision according to $Q^*$.} Accept $w^* = \max\{w \in [0,1] \mid q_j + w \leq Q_j^*\ \forall j \geq p\}$
                 fraction of customer $s_t$ and update $q_j \leftarrow q_j + w$, for all $j \geq p$, to reflect the increase in number of customers of fare $f_j$ or lower accepted by the algorithm.

            \end{enumerate}
    \end{algorithmic}
 \end{algorithm}
\end{figure}

\subsection{Proof of Theorem~\ref{thm:protec-robust}}

\begin{proof}[Proof of Theorem~\ref{thm:protec-robust}]
	Using Lemma~\ref{lemma:protec-inc-order}, we can assume without loss of generality that the customers in $I$ arrive in increasing order of fares. Let $q_j^*$ be the total number of customers with fare $f_j$ or less accepted by the protection levels $Q$. Moreover, let $i$ be the largest index such that $q_i^* = Q_i$. If not such index exists, then the protection level policy based on $Q$ accepts every customer and the lemma follows directly. Since $q_j^* \leq Q_j$ for all $j \leq i$, we have $q_i^* - q^*_{j-1} \geq Q_i - Q_{j-1}$ for all $j \leq i$. Moreover, note that $Q$ accept every customer with fare strictly greater than $f_i$ and accepts $q_j^* - q_{j-1}^*$ customers with fare $f_j$ for all $j \leq i$. Consider the following two cases,
	\begin{itemize}
		\item $\boldsymbol{i < \ell}$: Then, Algorithm~\ref{alg:robust-opt-alg} accepts every customer with fare $f_\ell$ or larger. Therefore,
			\begin{align*}
				Q(I) &\geq \sum_{j=\ell}^m K_I(j) \cdot f_j\\
				&\geq \sum_{j=\ell}^m \left(A_j - (A_j - K_I(j))^+\right) \cdot f_j \\
				&\geq  Opt(I(A)) - f_m \cdot \lambda(I,A)\\
				&\geq Opt(I) - 2 f_m \cdot \lambda(I,A)\,.
			\end{align*}
			
		\item $\boldsymbol{i \geq \ell}$: Then, we can write
			\begin{align*}
				Q(I) &= \sum_{j=1}^i (q_j^* - q_{j-1}^*) \cdot f_j + \sum_{j=i+1}^m K_I(j) \cdot f_j\\
				&= \sum_{j=1}^i (q_i^* - q^*_{j-1}) \cdot (f_j - f_{j-1}) + \sum_{j=i+1}^m K_I(j) \cdot f_j \tag{Lemma~\ref{lemma:rev-rewrite}}\\
				&\geq \sum_{j=1}^i (Q_i - Q_{j-1}) \cdot (f_j - f_{j-1}) + \sum_{j=i+1}^m K_I(j) \cdot f_j\\
				&\geq \sum_{j=1}^i (Q_i - Q_{j-1}) \cdot (f_j - f_{j-1}) + \sum_{j=i+1}^m A_j \cdot f_j - \sum_{j=i+1}^m |A_j - K_I(j)| \cdot f_j \\
				&= \sum_{j=1}^i (Q_j - Q_{j-1}) \cdot f_j + \sum_{j=i+1}^m A_j \cdot f_j - \sum_{j=i+1}^m |A_j - K_I(j)| \cdot f_j \tag{Lemma~\ref{lemma:rev-rewrite}}\\
				&\geq Q(I(A)) - f_m \cdot \lambda(I,A)\\
				&\geq \beta(A,\gamma) \cdot Opt(I(A)) - m \cdot \lambda(I,A) \cdot f_m\\
				& \geq \beta(A,\gamma) \cdot Opt(I) - 2\cdot \lambda(I,A) \cdot f_m\,. \qedhere
			\end{align*}
	\end{itemize}
\end{proof}

\subsection{Proof of Theorem~\ref{thm:robust-opt}}

We begin by showing that the trigger of Algorithm~\ref{alg:robust-opt-alg} is not invoked whenever $\lambda(I,A) \leq \Lambda$.

\begin{lemma}\label{lemma:delayed-trigger}
	On any instance $I = \{s_t\}_{t=1}^T$ such that $\lambda(I,A) \leq \Lambda$, we have $\delta =0$ throughout the run of Algorithm~\ref{alg:robust-opt-alg}.
\end{lemma}
\begin{proof}
	For contradiction, let $\tau \leq T$ be the first time step when $\delta = 1$. Let $s_\tau = f_p$ for some $p \geq \ell+1$. Moreover, let $q(\tau)_j$ be the total number of customers with fare $f_j$ or below accepted before the start of time step $\tau$ and $r(\tau)_j$ denote the total number of customers with fare $f_j$ rejected before the start of time step $\tau$. For $w^* = \max\{w \in [0,1] \mid q(\tau)_j + w \leq Q_j'\ \forall j \geq p\}$, let $h \geq p$ be the largest index such that $q(\tau)_h + w^* = Q'_h$. Such an index always exists because $w^*< 1$ by definition of $\tau$. Moreover, the definition of $\tau$ also implies the existence of an index $g \in [p, h]$ such that $(1 - w^*) + \sum_{i=g}^h r(\tau)_i > \Lambda + \sum_{i=g}^h \{A_i - x_i\}$. Since $q(\tau)_{g-1} \leq Q'_{g-1}$ by definition of protection level policies, we get that the algorithm accepted at least $q(\tau)_h - q(\tau)_{g-1} \geq Q'_h - w^* - Q'_{g-1}$ customers with fares $\{f_g ,\dots, f_h\}$ before time $\tau$. Since the $\tau$-th customer has fare $f_p$ such that $g \leq p \leq h$, we get that the total number of customers with fare $\{f_g ,\dots, f_h\}$ that arrive in $I$ is at least $1 + Q'_h - w^* - Q'_{g-1} + \sum_{i=g}^h r(\tau)_i$. Moreover, using $Q'_h - Q'_{g=1} = \sum_{i=g}^h x_i$, we can bound this from below:
	\begin{align*}
		Q'_h - Q'_{g-1} + (1 - w^*) + \sum_{i=g}^h r(\tau)_i > \Lambda + \sum_{i=g}^h x_i + \sum_{i=g}^h \{A_i - x_i\} = \Lambda + \sum_{i=g}^h A_i \,,
	\end{align*}
	which contradicts $\lambda(I,A) \leq \Lambda$. Therefore, we have $\delta = 0$ throughout the run of Algorithm~\ref{alg:robust-opt-alg}.
\end{proof}

\begin{proof}[Proof of Theorem~\ref{thm:robust-opt}]
	Fix advice $A$, target level of competitiveness $\gamma \in c(F)$ and $\Lambda > 0$.
	
	\begin{enumerate}
		\item Consider an instance $I$ such that $\lambda(I,A) \leq \Lambda$. Then, Lemma~\ref{lemma:delayed-trigger} implies that $\delta = 0$ throughout the run of Algorithm~\ref{alg:robust-opt-alg} and Algorithm~\ref{alg:robust-opt-alg} makes the same decisions as the protection level policy based on $Q'$. This allows us to employ Theorem~\ref{thm:protec-robust} to get the desired revenue bound:
			\begin{align*}
				Q'(I) \geq \frac{Q'(I(A))}{Opt(I(A))} \cdot Opt(I) - 2f_m \cdot \lambda(I,A) = \beta(A, \gamma) \cdot Opt(I) - 2f_m \cdot \lambda(I,A)
			\end{align*}

		\item Consider an arbitrary instance $I = \{s_t\}_{t=1}^T \in \mathcal{U}$. We will first prove the result under the assumption that $x_i \in \mathbb Z$ for all $i \in [n]$, and then show that this assumption is without much loss. Assume $x_i \in \mathbb Z$ for all $i \in [n]$. Therefore, when $\delta = 0$, Algorithm~\ref{alg:opt-alg} and Algorithm~\ref{alg:robust-opt-alg} either completely accept a customer or completely reject her. Let $t^*$ be the first time at which $\sum_{i=j}^h r_i > \Lambda + \sum_{i=j}^h \{A_i - x_i\}$ for some $j \geq \ell + 1$, and let $\tau$ be the first time at which $\sum_{i=j}^h r_i > \sum_{i=j}^h \{A_i - x_i\}$ for some $j \geq \ell + 1$.  Construct instance $I'$ from $I$ as follows: Initialize $I'_{t^* -1}$ be the instance composed of customers $\{s_t\}_{t=1}^{t^*-1}$. For $t \in [t^*, \tau)$, inductively define $I'_t$ to be equal to
			\begin{itemize}
				\item $I'_{t-1} \oplus \{s_t\}$ if $\delta = 0$ throughout the run of Algorithm~\ref{alg:opt-alg} on $I'_{t-1} \oplus \{s_t\}$;
				\item $I'_{t-1}$ otherwise.
			\end{itemize}
		 Set $I' \coloneqq I'_{\tau-1} \oplus \{s_t\}_{t=\tau}^T$. Since the trigger is invoked based only on rejections, $I\setminus I'$ is composed purely of customers that were rejected by $Q'$. Moreover, the set customers accepted by Algorithm~\ref{alg:opt-alg} in $I'$ is the same as the set of customers accepted by Algorithm~\ref{alg:robust-opt-alg} in $I$ till time $\tau - 1$, both being equal to the set of customers accepted by $Q'$. Finally, the trigger of Algorithm~\ref{alg:opt-alg} is also invoked at time $\tau$ on instance $I'$. Therefore, the revenue of the $\Lambda$-relaxed Algorithm~\ref{alg:opt-alg} on instance $I$ is the same as the revenue of Algorithm~\ref{alg:opt-alg} (the unrelaxed version) on instance $I'$. Moreover, note that $Opt(I') \geq Opt(I) - f_m \cdot \Lambda$ because $|I\setminus I'| \leq \Lambda$. Hence, Algorithm~\ref{alg:robust-opt-alg} earns a revenue of at least $\gamma \cdot Opt(I) - f_m \cdot \Lambda$.
		
		 We conclude by showing that our assumption on the integrality of $\{x_i\}$ was without much loss. It is easy to see that replacing $x_i$ with $\lfloor x_i \rfloor$ only reduces the revenue of Algorithm~\ref{alg:opt-alg} by at most $m \cdot f_m$ on any instance. Hence, Algorithm~\ref{alg:robust-opt-alg} earns a revenue of at least $\gamma \cdot Opt(I) - f_m \cdot \Lambda - f_m \cdot m$. \qedhere
	\end{enumerate}
\end{proof}
}

%
%
%

\end{document}